\documentclass[twoside]{article}
\usepackage[letterpaper,margin=1in]{geometry}
\usepackage{enumitem}
\usepackage{amsthm}
\usepackage{tikz}
\usetikzlibrary{arrows.meta}
\usetikzlibrary{decorations.pathreplacing,calligraphy}
\usepackage[charter]{mathdesign}
\usepackage[linktocpage=true,pagebackref=true,colorlinks,linkcolor=magenta,citecolor=blue,bookmarks,bookmarksopen,bookmarksnumbered]{hyperref}
\usepackage[boxruled,vlined,linesnumbered]{algorithm2e}
\usepackage{lineno}
\SetKwRepeat{Do}{do}{while}
\usepackage{comment}
\usepackage{eulervm}
\usepackage{booktabs} 
\usepackage{multirow}
\usepackage{quoting}
\usepackage{placeins}
\usepackage{subcaption}
\usepackage{colortbl}
\usepackage{xcolor}
\usepackage{adjustbox}
\usepackage{float}
\usepackage{graphicx}
\usepackage{capt-of}
\usepackage{wrapfig}
\usepackage{tabularx}
\usepackage{caption}
\usepackage{amsmath}
\usepackage[nameinlink]{cleveref}
\usepackage{makecell} 
\renewcommand{\arraystretch}{1.3}
\usepackage{tcolorbox}
\usepackage[framemethod=TikZ]{mdframed}
\usetikzlibrary{arrows.meta, calc, shapes.geometric}

\newtheorem{theorem}{Theorem}[section]
\newtheorem{lemma}[theorem]{Lemma}

\newtheorem{definition}[theorem]{Definition}

\newcommand{\dia}{\diamond}
\newcommand{\al}{{\alpha}}
\newcommand{\be}{{\beta}}

\newcommand{\todo}[1]{{\color{red} [#1]}}
\newcommand{\ft}{\textsc{Ft}}
\newcommand{\ep}{{\varepsilon}}
\newcommand{\Ep}{{1+\varepsilon}}
\newcommand{\loge}{\log_\Ep nW}
\newcommand{\logee}{\log_\Ep^2 nW}
\newcommand{\logt}{\log_\Ep^3 nW}
\newcommand{\logf}{\log_\Ep^4 nW}
\newcommand{\fpa}{{\textsc{Path}_1}}
\newcommand{\spa}{{\textsc{Path}_2}}
\newcommand{\tpa}{{\textsc{Path}_3}}

\newcommand{\pa}{\textsc{{Path}}}
\newcommand{\bu}{\textsc{DetSegment}}
\newcommand{\quN}{\textsc{Query}_0}
\newcommand{\quDone}{\textsc{QueryDetour}}
\newcommand{\quDtwo}{\textsc{QueryDual}}
\newcommand{\mn}{\textsc{MakeNode}}
\newcommand{\ro}{\textsc{Root}}
\newcommand{\pre}{\textsc{Prefix}}
\newcommand{\suf}{\textsc{Suffix}}
\newcommand{\rr}{{\mathcal{R}}}
\newcommand{\rri}{{\rr_i}}
\newcommand{\xx}{{\mathcal{X}}}
\newcommand{\yy}{{\mathcal{Y}}}
\newcommand{\pr}{\mathbf{P}}
\newcommand{\tl}{{\tilde{O}}}
\newcommand{\stf}{{st \dia F}}
\newcommand{\qu}{\textsc{Query}}
\newcommand{\std}{{st \dia F}}
\SetKwComment{Comment}{\hfill\textcolor{blue}{$\triangleright$~}}{}
\newcommand{\ado}{\textsc{Do}}
\newcommand{\sdo}{\textsc{Sdo}}
\def\myone{1}
\title{Approximate Single Source Dual Fault Tolerant Distance Oracle%
\thanks{Supported by ANRF grant ANRF/ARGM/2025/001853/MTR.}}
    \author{
  \ Koustav Das \\
  IIT Gandhinagar \\
  India\\
  \texttt{koustav.das@iitgn.ac.in} \\
  \and
  \ Manoj Gupta \\
  IIT Gandhinagar \\
  India\\
  \texttt{gmanoj@iitgn.ac.in} \\
} 

\begin{document}
\setlength{\abovedisplayskip}{1pt}
\setlength{\belowdisplayskip}{5pt}

\date{}

\maketitle
\begin{abstract}
We are given an undirected weighted graph $G$ with $n$ vertices and $m$ edges, edge weights in $[1, W]$, and a designated source vertex $s$. We design a single source dual fault tolerant distance oracle for $G$. Given a destination vertex $t$ and a set $F$ of at most two faulty edges, the oracle returns a $(1 + O(\ep))$-approximation of the weight of the shortest path from the source $s$ to $t$ avoiding $F$. Our oracle uses $\tl(n\sqrt{n})$ space\footnote{$\text{poly}$$(\log_{\Ep} nW)$ factors are hidden in the $\tl$ notation.} and has $\tl(1)$ query time.

Prior to our result, single source {\em single} fault tolerant oracles were known to return a $(1+\ep)$ approximation of the weight of the shortest path using $\tl(n)$ space and $O(1)$ query time. However, extending these approaches to multiple faults remained an open problem. Indeed, all  $(1+\ep)$-approximate  distance oracles that handle multiple faults require $\Omega(n^2)$ space. We break this bound by presenting the first dual fault tolerant distance oracle with $o(n^2)$ space.

\end{abstract}

\section{Introduction}

In many real-world scenarios, we are interested in computing the shortest distances from a specific location to all other reachable locations in a network. Such networks can be modelled as graphs, where vertices represent places and edges capture connectivity between them. This naturally leads to the classical \textit{single source shortest paths} (SSSP) problem, where the objective is to preprocess a graph so that distances from a designated source $s$ to any other vertex can be efficiently retrieved in response to queries.

\[
\qu(s,t): \text{ Find the weight of the shortest path from } s \text{ to } t.
\]

For unweighted graphs, the single source shortest paths (SSSP) can be efficiently computed using a breadth-first search (BFS). The distances from the source to all other vertices can be stored in $O(n)$ space, allowing $O(1)$ query time per vertex, where $n$ and $m$ denote the number of vertices and edges, respectively. For weighted graphs, classical SSSP algorithms such as Dijkstra's algorithm allow us to preprocess the graph in $O(m+n\log n)$ time so that distances from a designated source can be queried in constant time. Over the years, a rich body of work has advanced the classical SSSP problem~\cite{fredman1993surpassing,fredman1990trans,thorup2000ram,raman1996priority,raman1997recent,hagerup2000improved}. 
Later, Duan, Mao, Shu, and Yin~\cite{duan2023randomized} developed a randomized algorithm for the SSSP problem with a time complexity of $ O(m \sqrt{\log n \log \log n}) $.
Recently, Duan, Mao, Mao, Shu, Yin~\cite{duan2025breaking} achieved further progress by presenting a deterministic $O(m \log^{2/3} n)$-time algorithm for directed graphs with non-negative edge weights.

However, in practice, some connections may fail or become temporarily unavailable. We model these disruptions as \emph{faulty edges}, i.e., edges removed from the graph. The task is to determine the length of the  shortest path from a specified source vertex $s$ to any other vertex $ t \in V  $
while avoiding faulty edges. Our objective is to answer the following query efficiently:

\begin{center}
	$\qu(t, F)$: Find the weight of the shortest path from $s$ to $t$ avoiding all the edges contained in $F$
\end{center}

An algorithm may preprocess the graph $G$ and create a data structure to quickly answer the above query. Such an algorithm (and the data structure) is called a {\em distance oracle} in literature. Since we are dealing with a single source $s$ and must output shortest paths avoiding faults, our oracle is called {\em Single Source fault tolerant} distance oracle.

Let us first see a simple single source fault tolerant distance oracle.   One could naively precompute and store exact shortest paths from the source $s$ to all other vertices under all possible faulty edge scenarios. However, this approach is computationally prohibitive, as both the preprocessing time and space taken by the data structure grow rapidly with the number of faults. For example, even with only two faulty edges, the space requirement already becomes $O(m^2 n)$. Another naive approach is not to preprocess the graph at all.  Upon  $\qu(t,F)$, we run Dijkstra's algorithm from $s$ after removing the faulty edges from the graph. This distance oracle is space-efficient, but the query time is too high.

In our paper, we focus primarily on the single source fault tolerant distance oracle. However, in the literature, a lot of work has also been done on the all-pairs variant, where any vertex can serve as the source. To distinguish between these variants, we use the following notation:

\begin{itemize}[noitemsep]
    \item $\sdo(f)$: A single source fault tolerant distance oracle handling up to $f$ faults.
    \item $\ado(f)$: An all-pairs fault tolerant distance oracle handling up to $f$ faults.
\end{itemize}

Distance oracles are further classified by the quality of their output. We say an $\sdo(f)$ is \textbf{$\al$-approximate} if, for all $t \in V$ and all valid fault sets $F$ ($|F| \le f$), the query returns a value that is at least the length of the shortest path from $s$ to $t$ avoiding $F$, and at most $\al$ times this length. If $\al=1$, the oracle\footnote{For exact distance oracle, {$\tl$ notation hides poly $\log n$ factor}} is \textbf{exact}. These definitions naturally extend to $\ado(f)$.

We now summarize key results on single source and (general) distance oracles under failures in undirected graphs.
Gupta and Singh~\cite{GuptaS18} designed an $\sdo(1)$ of size $\tilde{O}(n^{3/2})$ that answers queries in $\tilde{O}(1)$ time for an undirected unweighted graph. Later, Bil\`{o}, Cohen, Friedrich, and Schirneck~\cite{BiloCFS21} extended this result to weighted graphs with an oracle size of $O(W^{1/2}n^{3/2})$, where $W$ is the maximum edge weight. Next, we discuss results on approximate distance oracles.

Baswana and Khanna~\cite{BaswanaK13} initiated the systematic study of approximate fault tolerant distance oracles. 
They developed an $O(m \log n)$-time constructible $\sdo(1)$ of size $O(n \log n)$ reporting a $3$-approximate shortest-path distance from a fixed source to any vertex $v \in V$ while avoiding any failed vertex $x \in V$. 
For undirected, unweighted graphs, they also proposed an 
$O\left(\frac{n \log n}{\epsilon^{3}}\right)$-space $\sdo$(1) that reports $(1+\epsilon)$-approximate shortest paths for any $\epsilon > 0$. More recently, ~\cite{baswana2020approximate} constructed an $\sdo(1)$ of size $O\left(n \log_{1+\epsilon} (nW)\right)$ with query time 
$O\left(\log \log_{1+\epsilon}(nW)\right)$ for directed weighted graphs. 
	\ifnum\myone=0
		We provide a summary of results in the full version of the paper below in Table 1.
	\else
		We provide a summary of results in \Cref{tab:ResultsFaultTolerantDistanceOracle}.
	\fi

\ifnum\myone=1
\begin{table}[ht]
\centering
\renewcommand{\arraystretch}{1.2}

\begin{tabular}{p{1cm} p{3cm} p{3cm} p{1.8cm} p{2cm} p{2cm}}
\toprule
\textbf{Oracle} & \textbf{Space} & \textbf{Query} & \textbf{Approx.} & \textbf{Remarks} & \textbf{Ref.} \\
\midrule

$\sdo(1)$ & $\tilde{O}(n^{3/2})$ & $\tilde{O}(1)$ & $1$ & Undirected, Unweighted & \cite{GuptaS18} \\

$\sdo(1)$ & $O(W^{1/2}n^{3/2})$ & $\tilde{O}(1)$ & $1$ & Undirected, Weighted & \cite{BiloCFS21} \\

$\sdo(1)$ & $O(n\log n)$ & $O(1)$ & $3$ & Undirected, Weighted for vertex failure & \cite{BaswanaK13} \\

$\sdo(1)$ & $O\!\left(\frac{n \log n}{\epsilon^{3}}\right)$ & $O(1)$ & $(1+\epsilon)$ & Undirected, Unweighted for vertex failure & \cite{BaswanaK13} \\

$\sdo(1)$ & $O\!\left(n \log_{1+\epsilon}(nW)\right)$ & $O\!\left(\log \log_{1+\epsilon}(nW)\right)$ & $(1+\epsilon)$ & Directed, Weighted for vertex or edge failure & \cite{baswana2020approximate} \\

$\sdo(2)$ & $O(n\sqrt{n}\log_{1+\epsilon}^4 nW)$ & $O(\log_{1+\epsilon}^{c} nW)$ & $(1+O(\epsilon))$ & Undirected, Weighted & \textcolor{blue}{Our Result} \\

\bottomrule
\end{tabular}

\caption{Summary of fault tolerant distance oracle results.}
\label{tab:ResultsFaultTolerantDistanceOracle}
\end{table}
\fi
One may ask if there are any result on exact or $(\Ep)$-approximate $\sdo(f)$ for $f\ge 2$. Yes, there are, but only indirectly. To this end, notice that  a $\ado(f)$ is also an $\sdo(f)$. We now show some relevant results related to $\ado(f)$ where $f \ge 2$.  Duan and Pettie~\cite{DuanP09} designed an exact $\ado(2)$ of size $\tl(n^2)$ with $\tl(1)$ query time. Chechik, Cohen, Fiat, Kaplan~\cite{ChechikCFK17} designed a $(1+\epsilon)$-approximate $\ado(f)$ that tolerates up to $f = O\left(\frac{\log n}{\log \log n}\right)$ faults with $\tl(n^2)$ space and $O(f^5)$ query time. Both these algorithms also imply the existence of exact and $(\Ep)$-approximate $\sdo(f)$, though requiring $\tl(n^2)$ space. One of the main questions in this field is to design an $\sdo(2)$ (and in general $\sdo(f)$ for $f \ge 2$) with sub-quadratic space.

 In this work, we narrow this gap by studying the single source setting under up to two edge failures. To the best of our knowledge, we provide the first approximate $\sdo(2)$ that requires only $o(n^2)$ space while supporting queries in nearly constant time.

\subsection{Our Result}

\begin{theorem}\label{thm:ft-sssp}

Given an undirected graph $G$ with  edge weights in $[1,W]$, there is a $(1+O(\ep))$ approximate $\sdo(2)$ of size $O(n\sqrt{n} \log_{1+\ep}^4 nW)$ and $O(\log_{1+\ep}^{c} nW)$ query time where $c>0$ is a constant.

\end{theorem}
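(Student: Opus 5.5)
We propose to build the oracle on the shortest path tree $T$ rooted at $s$, and to split the query $\qu(t,F)$ with $F=\{e_1,e_2\}$ into cases by where the faults sit relative to the tree path $st$. If neither $e_i$ lies on $st$, then $d(s,t)$ is returned from a stored array. If exactly one fault lies on $st$, we handle it with the single-fault $(\Ep)$-approximate $\sdo(1)$ of~\cite{baswana2020approximate}, extended so that it checks whether the second fault hits the detour. The hard case is when both $e_1,e_2$ lie on $st$; assume $e_1$ is closer to $s$.

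The main tool is a rounding of distances to powers of $(\Ep)$. For every vertex $t$ and every scale $i\in[0,\loge]$ we keep a small number of \emph{landmark} replacement paths. The relevant value is the first point on $st$ at which the replacement distance $d(s,t,e)$ jumps by a factor of $(\Ep)$ as $e$ moves along $st$. By monotonicity there are only $O(\loge)$ such breakpoints per $t$, and between two consecutive breakpoints a single stored path approximates $d(s,t,e)$ to within $\Ep$. This is the standard argument that gives the $\tl(n)$ single-fault oracle.

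For two faults we add a $\sqrt{n}$ threshold. Call $t$ \emph{heavy} for a segment if the relevant detours are long, with more than $\sqrt n$ vertices. In that case we sample a set $R$ of $\tl(\sqrt n)$ vertices that hits every such detour w.h.p. (or pick $R$ deterministically). For each $r\in R$ we store single-fault distance structures from $r$ and to $r$ in $G\setminus e$, for $e$ on the relevant paths. This costs $\tl(n)$ per pivot, so $\tl(n\sqrt n)$ in total. A dual-fault distance $s\to t$ is then approximated as $\min_r\bigl(d(s,r,F)+d(r,t,F)\bigr)$, where each part avoids one fault by the tree structure. The pivot can be restricted to polylog candidates per scale.

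If $t$ is light, the detours have at most $\sqrt n$ vertices. For each $t$ and each of the $O(\loge)$ breakpoint segments of $e_1$, we store explicitly the $O(\sqrt n\cdot\loge)$ breakpoints with respect to $e_2$ along the stored detour. This gives $O(n\sqrt n\,\logee)$ words, and a further $\logee$ factor comes from the scale choices on both sides, giving the claimed $O(n\sqrt n\logf)$ space.

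The query locates the segments by binary search over the scales, and recursion over the $O(1)$ cases gives $O(\log_\Ep^{c} nW)$ time. For correctness, the returned value is always a real path avoiding $F$, so it is at least the true distance. The error is at most a product of $O(1)$ factors of $(\Ep)$, hence $1+O(\ep)$.

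I expect the main obstacle to be the both-faults-on-$st$ case where the optimal path leaves $st$ before $e_1$, re-enters between $e_1$ and $e_2$, and leaves again. We must show that such a path can be decomposed at a single re-entry vertex, with each piece approximated by a stored single-fault structure. We must also show that the choice of that vertex can be rounded to polylog many candidates without losing more than $\Ep$. I would first try to prove that the re-entry point can be taken at a breakpoint of $d(s,\cdot,e_1)$ along the path $st$.
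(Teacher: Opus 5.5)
Your proposal is not yet a proof, and its central structural claim is false. You assert that, by monotonicity, $d(s,t,e)$ jumps by a factor $\Ep$ at only $O(\loge)$ points as $e$ moves along $st$, so that $O(\loge)$ landmark paths per target suffice. No such monotonicity holds. Take $st=v_0v_1\cdots v_k$ with unit weights, and give each edge $(v_{j-1},v_j)$ a private bypass path of length $L_j$, where $L_j=k$ for odd $j$ and $L_j=k^2$ for even $j$. Then $d(s,t,(v_{j-1},v_j))=k-1+L_j$ alternates between about $2k$ and about $k^2$. The only way around $(v_{j-1},v_j)$ is its own bypass, so for $\ep\le 1/4$ every $s$--$t$ path of length at most $(\Ep)(2k-1)$ uses at most one bypass. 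Hence this single target needs $\Omega(k)$ distinct $(\Ep)$-approximate replacement paths. Your light case (breakpoints for $e_2$ inside breakpoint segments for $e_1$) and the resulting space bound both rest on this claim.

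In the paper, the $O(\loge)$ count per path comes instead from the purely metric netpoint segmentation of \Cref{lem:chechik}. Under it, the segment $xy$ containing the last fault $e$ is either $\{e\}$ or satisfies $|xy|\le\ep|st|$. Distances inside a segment are never approximated by a single stored path. There are three cases:
\begin{itemize}
\item $\stf$ avoids $xy$, and the segment child (the shortest 2-decomposable path avoiding $xy$) is used;
\item $\stf$ touches $xy$ below $e$, and one recurses to the affected vertex $v$, paying additively at most the part of $xy$ below $e$ (\Cref{lem:belowe,lem:vless});
\item the detour of $\stf$ starts on $xy$ above $e$, and one uses the detour child whose detour starts closest to $x$ at the right length scale.
\end{itemize}
In the example above, the recursion lands in $\ft(sv_j)$, where $(v_{j-1},v_j)$ is a segment by itself and the segment child is exact.

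Beyond this, the cases that carry the real difficulty are left open.
\begin{itemize}
\item \emph{One fault on $st$.} You ``check whether the second fault hits the detour'' but give no answer when it does. This is exactly where the paper needs the 2-decomposable detour children of \Cref{alg:constructftgeneral}, the procedure $\quDone$, and \Cref{lem:second_faulty_edge,lem:decompose,pr_equal_st_e}.
\item \emph{Heavy case, query time.} Computing $\min_r\bigl(d(s,r,F)+d(r,t,F)\bigr)$ over $\tilde O(\sqrt n)$ sampled pivots costs $\tilde\Theta(\sqrt n)$ query time. That is precisely the barrier of the pivot-based oracles of Bil\`o et al.\ that the paper sets out to beat, and ``the pivot can be restricted to polylog candidates'' comes with no mechanism.
\item \emph{Heavy case, reduction and space.} $d(s,r,F)$ and $d(r,t,F)$ need not equal any single-fault distance; for example, $d(s,r,e_1)$ may route through $e_2$. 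Also, one single-fault structure in $G\setminus e$ per relevant edge $e$ costs $\tilde O(n)$ per edge, not per pivot.
\item \emph{Approximation.} The error is not a product of $O(1)$ factors $\Ep$. Gluing through a vertex that is near but not on $\stf$ incurs an additive error that must be charged to $|\stf|$, and the recursion needs a well-founded order. The paper gets both from \Cref{lem:belowe}, the segment bound, and induction over $V(F)\cup\{t\}$ sorted by $d(s,\cdot,F)$ (\Cref{lem:kk}).
\end{itemize}

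The re-entry obstacle you name at the end is sidestepped in the paper: the algorithm never searches for a re-entry point, and only moves to segment children or recurses to affected vertices. Note also that the paper's $\sqrt n$ does not come from a heavy/light split. Each of the $O(\logf)$ stored paths per target is 2-decomposable. Each shortest-path piece is stored explicitly if it has at most $\sqrt n$ edges, and otherwise through a sampled vertex on it whose shortest path tree is kept, with LCA queries for fault membership.
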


\ifnum\myone=1
\subsection{Related Work}

Demetrescu, Thorup, Chowdhury, Ramachandran ~\cite{DemetrescuTCR08} designed a $\ado(1)$ of size $O(n^2 \log n)$ that can return exact distances in constant query time, i.e., $O(1)$. However, their approach incurs a substantial preprocessing cost of $O(mn^2 + n^3 \log n)$. To address this limitation, Bernstein and Karger~\cite{BernsteinK08, BernsteinK09} proposed improved constructions that reduced the preprocessing time to $O(n^2 \sqrt{m})$ and further to $\tilde{O}(mn)$, while keeping both the space complexity and query time unchanged. A major advancement was made by Grandoni and Williams~\cite{GrandoniW19}, who introduced the first $\ado(1)$ with subcubic preprocessing time and sublinear query time, marking a significant step forward in efficiency. Building on this, Chechik and Cohen~\cite{ChechikC20} retained subcubic preprocessing while improving the query time to $\tilde{O}(1)$, effectively combining fast preprocessing with near-constant query performance. More recently, Gu and Ren \cite{GuR21} further enhanced the trade-off by developing a $\ado(1)$ with a preprocessing time of $O(n^{2.6146}W)$ and constant query time, where $W$ denotes the maximum edge weight in the graph. The problem of multiple edge or vertex faults has also attracted significant attention. Duan and Pettie~\cite{DuanP09} studied the setting with two simultaneous failures, either vertices or edges, and proposed an $\ado(2)$ of size $O(n^2 \log^3 n)$ that can be constructed in polynomial time. Duan and Ren~\cite{DuanR22} developed an $\ado(f)$ for undirected weighted graphs that requires $O(fn^{4})$ space and has a query time of $f^{O(f)}$, where $f$ denotes the number of faulty edges.

We now look at some results for approximate $\ado(f)$. Chechik, Cohen, Fiat, Kaplan~\cite{ChechikCFK17} designed a $(1+\epsilon)$-approximate $\ado(f)$ that tolerates up to $f = O\left(\frac{\log n}{\log \log n}\right)$ faults, though at the cost of nearly quadratic space. Later, Bilò, Chechik, Choudhary, Cohen, Friedrich, Krogmann and Schirneck ~\cite{DBLP:journals/theoretics/BiloCCC0KS24} improved the space complexity by constructing a $(3+\epsilon)$-approximate oracle with subquadratic space. 

\fi

\section{Preliminaries}

Let $G = (V, E)$ be an undirected weighted graph with $n$ vertices and $m$ edges with edge weights in $[1\dots W]$. For any $s,t \in V$, $st$ denote a shortest path from $s$ to $t$, and $|st|$ its weight. For any path $P$, $|P|$ denotes the weight of the path $P$. Let $P[a,b]$ denotes the subpath from $a$ to $b$ in $P$. Given two paths $P_1$ and $P_2$ such that $P_1$ ends where $P_2$ begins, their concatenation is denoted $P_1 \odot P_2$. Let $st \dia F$ denotes the shortest path from $s$ to $t$ that avoids all edges in $F$, and $st \dia e$ when $F = \{e\}$ for any edge $e=(u,v)$.  All edges in $F$ are called \emph{faulty} or \emph{failed} edges. All vertices incident to edges in $F$, that is, the vertices in the set $V(F)$  are called \emph{affected} vertices. Given a path $P[s,t]$, we normally visualize it as drawn on a plane with $s$ at the top and $t$ at the bottom. For an edge $e = (u,v) \in P$ (where $u$ is closer to $s$), the term \emph{above $e$ on path $P$} refers to the subpath $P[s,u]$, and \emph{below $e$ on path $P$} refers to the subpath $P[v,t]$. Similarly, we can define the notion of {\em above/below of a vertex $v$} on a path $P$. There can be many faulty edges on the path $P$. The faulty edge closest to $t$ on the path $P$ is called the {\em last faulty} edge on $P$.

We assume that the shortest path between any pair of vertices in the graph is unique, even in the presence of edge failures. This can be ensured by applying a standard weight-perturbation technique in which each edge is assigned an infinitesimally small distinct weight (e.g., see~\cite{ParterP13}). As a result, any two distinct $st$ paths will have different total weights. Similar assumptions have been used in prior works such as~\cite{BernsteinK09,Hershberger2001,GuptaS18,DuanR22}.

\begin{definition}[$k$-decomposable paths]
    A path is \emph{$k$-decomposable} if it is the concatenation of at most $k+1$ shortest paths interleaved with at most $k$ edges, where $k\ge 0$. For example, $P_1 \odot e \odot P_2$ is a 1-decomposable path if both $P_1$ and $P_2$ are shortest paths in the graph $G$.
\end{definition}

   The following lemma formalizes the importance of decomposable paths.

   \begin{lemma}
       \label{lem:decompasable}(\cite{AfekBKCM02}) After $k$ edge failures in a weighted graph, each new shortest path is $k$-decomposable.
   \end{lemma}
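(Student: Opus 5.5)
We prove the lemma by induction on $k$, the number of failed edges. Fix a source $x$ and target $y$, and let $F$ be the set of $k$ failed edges. Let $P = xy \dia F$ be the new shortest path in $G \setminus F$; recall that by the uniqueness assumption this path is well defined. For $k=0$ there is nothing to prove, since $P$ is itself a shortest path in $G$, hence $0$-decomposable.

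For the inductive step, assume $k \ge 1$ and that the claim holds for $k-1$ failures. If $P$ avoids every edge of $G$ that could matter, namely if $P$ is already a shortest path of $G$, we are done. Otherwise, choose one failed edge $e \in F$ and set $F' = F \setminus \{e\}$. Consider the path $Q = xy \dia F'$, which by induction is $(k-1)$-decomposable. If $Q$ avoids $e$, then $Q$ is a path in $G\setminus F$ with $|Q| \le |P|$. Since $P$ is optimal in $G \setminus F$ and paths are unique, $Q = P$, and we are done.

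That induction does not directly handle the case $e \in Q$. So the cleaner argument I would actually write is a direct greedy decomposition of $P$, avoiding induction on paths altogether. Walk along $P$ from $x$. Let $P_1$ be the longest prefix $P[x,a_1]$ that is a shortest path in $G$. If $a_1 \ne y$, let $(a_1,b_1)$ be the next edge of $P$. Then repeat from $b_1$: take the longest subpath $P[b_1,a_2]$ that is a shortest path in $G$, and so on. This expresses $P$ as $P_1\odot e_1 \odot P_2 \odot \cdots$, where each $P_i$ is a shortest path of $G$. It remains to bound the number of interleaved edges by $k$.

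The key claim is that each segment $P[x,b_1]$, $P[b_1, b_2]$, and so on (segment $i$ together with its following edge) is not a shortest path in $G$. Hence its unique $G$-shortest path must use some edge of $F$. Otherwise that $G$-shortest path would lie in $G\setminus F$ and be strictly shorter, so substituting it into $P$ would contradict the optimality of $P$ in $G\setminus F$. Subpaths of shortest paths are shortest, so it is enough to check this on the segments.

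Charging each segment to a failed edge must be done injectively, and this is the main obstacle. Two distinct segments could a priori be charged to the same $f \in F$. I would handle this with the standard argument of Afek et al. Fix one $f=(u,v)\in F$ and suppose both segment $i$ and segment $j>i$ have $G$-shortest replacements through $f$. Then one can combine the distances $d_G(\cdot,u)$ and $d_G(\cdot,v)$ with the triangle inequality and the maximality of the prefixes $P_i$. This shows that the subpath of $P$ spanning both segments has weight exceeding the true $G$-distance by at least the amount already forced, and leads to a contradiction with the maximality of $P_{i+1}$.

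If this charging becomes delicate, I would instead fall back on the cleaner inductive formulation. Choose $e$ as the failed edge on $Q=xy\dia F'$ closest to $x$. Then write $P$ as a prefix that is a shortest path of $G\setminus F'$, followed by an edge, followed by a suffix. Finally, apply the induction hypothesis to the suffix with the remaining $k-1$ failures. The prefix is then a shortest path in $G$ by the choice of $e$.
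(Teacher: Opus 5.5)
The paper gives no proof of this lemma; it is cited from Afek et al. So your proposal has to stand on its own, and it does not. The step that carries all the content is missing, and the claims you offer for it are false.

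\textbf{What is correct.} Your greedy decomposition is right, and so is the observation that the $G$-shortest path between the endpoints of each greedy segment $P[b_{i-1},b_i]$ must contain an edge of $F$. In fact, any decomposition of $P$ into shortest paths of $G$ and interleaving edges must place an interleaving edge inside every greedy segment. So the greedy decomposition is optimal, and the lemma is \emph{equivalent} to the statement that there are at most $k$ segments.

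\textbf{The charging claim is false.} You propose: if one failed edge lies on the $G$-shortest replacements of two segments, derive a contradiction. Take vertices $x,a,b,u,v,y$ with:
\begin{itemize}
\item edges $xa$ and $ab$ of weight $2$;
\item edges $xu$, $uv$, $vb$, $vy$ of weight $1$;
\item edge $by$ of weight $10$;
\item $F=\{e_1,e_2\}$ with $e_1=(u,v)$ and $e_2=(v,b)$.
\end{itemize}
The unique shortest $x$--$y$ path in $G\setminus F$ is $P=x\,a\,b\,y$. Its greedy segments are $P[x,b]$ and $P[b,y]$. Their $G$-shortest paths are $x\,u\,v\,b$ (weight $3$) and $b\,v\,y$ (weight $2$), and \emph{both} contain $e_2$. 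Nothing is contradicted. A valid charging exists here, but only by sending the first segment to $e_1$. So representatives must be chosen globally. Moreover, an injective charging can exist only if the number of segments is at most $|F|$, so proving it exists is exactly as hard as the lemma.

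\textbf{The fallback induction also fails on this example.} A prefix of $P$ is a shortest path of $G$ only if it stops at or before $a$, since $d_G(x,b)=3<4=|P[x,b]|$. In particular, for either choice of $e$, the longest prefix that is shortest in $G\setminus(F\setminus\{e\})$ is $P[x,b]$, which is not $G$-shortest. If you instead cut at or before $a$, the suffix contains $P[b,y]$. That path is a shortest path neither in $G\setminus\{e_1\}$, where $b\,v\,y$ has weight $2$, nor in $G\setminus\{e_2\}$, where $b\,a\,x\,u\,v\,y$ has weight $7$. Hence neither is the suffix, and the induction hypothesis cannot be applied to it with any $k-1$ of the failed edges.

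Your first inductive attempt only covers the trivial case $e\notin Q$. What is missing is a genuine argument bounding the number of greedy breaks by $|F|$. Neither a single-failed-edge exchange nor the claim that the suffix is a $(k-1)$-failure replacement path can supply it. You would need to reproduce the actual argument of Afek et al., or simply cite it as the paper does.
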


 When $|F| = 2$, by the above lemma, the path $st \dia F$ can be written as the concatenation of at most three shortest paths interleaved with at most two edges. We call such a path \emph{2-decomposable}. Similarly, when $|F| = 1$, the path $st \dia F$ can be expressed as the concatenation of at most two shortest paths interleaved with at most one edge, referred to as a \emph{1-decomposable} path. We now redefine some notation from \cite{ChechikCFK17}.

\begin{definition}[Netpoints]
    Let $P$ be a (not necessarily shortest) path from $s$ to $t$ in $G$. We define the set $L = \{u_0, u_1, \dots\}$ such that $u_i$ is the first vertex on $P$ satisfying:
    \[
        |P[s, u_i]| \ge (\Ep)^i.
    \]
    If no such vertex exists for a given $i$, the sequence stops. Similarly, define $R = \{v_0, v_1, \dots\}$ as the set of vertices on $P$ where $v_i$ is the first vertex (traversing from $t$ to $s$) such that:
    \[
        |P[v_i, t]| \ge (\Ep)^i.
    \]
    We define the \emph{netpoints} of $P$ as the set $L \cup R \cup \{s,t\}$. The number of netpoints is $O(\log_{\Ep} |P|) = O(\log_{\Ep} nW)$.
\end{definition}

\begin{figure}[ht]
\centering
\begin{tikzpicture}[
    point/.style={circle,draw,fill=white,minimum size=4pt,inner sep=0pt},
    line/.style={thick},
    redpoint/.style={point,fill=red},
    bluepoint/.style={point,fill=blue},
    every label/.style={font=\scriptsize},
    >=Stealth
]


\draw[ thin] (0,0) -- (7.7,0);

\draw [decorate,decoration={brace,amplitude=5pt,mirror},red]
  (1.4,-0.2) -- (2.8,-0.2) node[midway,yshift=-0.4cm,red] {\scalebox{0.7}{$seg(e,P)$}};

\draw [decorate,decoration={brace,amplitude=5pt,mirror},red]
  (0,-0.4) -- (0.7,-0.4) node[midway,yshift=-0.4cm,red] {\scalebox{0.7}{$(1+\epsilon)$}};

\draw [decorate,decoration={brace,amplitude=5pt,mirror},red]
  (0,-1.0) -- (1.4,-1.0) node[midway,yshift=-0.4cm,red] {\scalebox{0.7}{$(1+\epsilon)^2$}};

\draw [decorate,decoration={brace,amplitude=5pt,mirror},red]
  (0,-1.6) -- (5.6,-1.6) node[midway,yshift=-0.4cm,red] {\scalebox{0.7}{$(1+\epsilon)^i$}};

\draw [decorate,decoration={brace,amplitude=5pt,mirror},blue]
  (7.0,-0.4) -- (7.7,-0.4) node[midway,yshift=-0.4cm,blue] {\scalebox{0.7}{$(1+\epsilon)$}};

\draw [decorate,decoration={brace,amplitude=5pt,mirror},blue]
  (6.3,-1.0) -- (7.7,-1.0) node[midway,yshift=-0.4cm,blue] {\scalebox{0.7}{$(1+\epsilon)^2$}};

\node[redpoint,label=left:$s$] (u) at (0,0) {};
\node[redpoint] (p1) at (0.7,0) {};
\node[redpoint] (p2) at (1.4,0) {};
\node[point] (p3) at (2.1,0) {};
\node[bluepoint] (p4) at (2.8,0) {};
\node[bluepoint] (p5) at (3.5,0) {};
\node[point,label=below:] (vj) at (4.2,0) {};
\node[redpoint,label=below:$$] (vj1) at (4.9,0) {};
\node[redpoint,label=below:$$] (p8) at (5.6,0) {};
\node[bluepoint] (p9) at (6.3,0) {};
\node[bluepoint] (p10) at (7.0,0) {};
\node[bluepoint,label=right:$t$] (v) at (7.7,0) {};
\draw[line] (p2) -- (p3);
\draw[line] (p3) -- (p4) node[midway, above=1pt] {$e$};

\end{tikzpicture}
\caption{Netpoints originating from the vertex \(s\) are shown in red and form the set \(L\). 
Netpoints originating from the vertex \(t\) are shown in blue and form the set \(R\). 
A vertex may belong to both \(L\) and \(R\). 
The union of these sets constitutes the set of netpoints.}

\label{fig:path-netpoints}
\end{figure}

\begin{definition}[Segment]
A \emph{segment} of a path $P$ is any subpath between two consecutive netpoints on $P$. For an edge $e \in P$, define $seg(e, P)$ as the unique segment in $P$ that contains $e$.
\end{definition}

In \cite{ChechikCFK17}, the authors proved the following lemma that bounds the length of the segment containing an edge $e$. 

\ifnum\myone=1 We state the lemma below and, for the sake of completeness, prove it in the appendix (See \Cref{Appendix_A})
\else
See the full version of the paper below for the proof of this lemma
\fi

\begin{lemma}[\cite{ChechikCFK17}, Lemma 2.1]
\label{lem:chechik}
    Let $P$ be a path from $s$ to $t$, and let $e = (u,v)$ be an edge of $P$ (where $u$ is closer to $s$ than $t$ on $P$).  Then either $\operatorname{seg}(e, P) = \{e\}$ or $|\operatorname{seg}(e, P)| \leq \ep \cdot \min\{|P[s,u]|,|P[u,t]|\} \le \ep |P|$.
\end{lemma}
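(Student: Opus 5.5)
The plan is to fix the segment $seg(e,P)=P[a,b]$, where $a$ and $b$ are consecutive netpoints with $a$ above $u$ and $b$ below $v$. If $a=u$ and $b=v$ we are in the case $seg(e,P)=\{e\}$, so assume the segment has at least two edges. I will get one estimate from the $L$-netpoints, measured from $s$, and a symmetric one from the $R$-netpoints, measured from $t$, and then combine them.

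\textbf{$L$-side estimate.} Let $k$ be the largest index with $(\Ep)^k \le |P[s,a]|$. Since edge weights are at least $1$, the first vertex after $s$ is already $u_0$, so the first segment is a single edge; hence we may assume $a\neq s$ and $k$ is well defined. Then $u_k$ lies at or above $a$. The netpoint $u_{k+1}$ is the first vertex whose prefix is at least $(\Ep)^{k+1} > |P[s,a]|$, so it lies strictly below $a$, and therefore at or below $b$. Consequently every vertex strictly between $a$ and $b$ has prefix weight below $(\Ep)^{k+1}$. If $b'$ denotes the predecessor of $b$ on $P$, this gives
\[
|P[a,b']| < (\Ep)^{k+1}-(\Ep)^k = \ep(\Ep)^k \le \ep|P[s,a]| \le \ep|P[s,u]|.
\]

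\textbf{$R$-side estimate.} The symmetric argument with the $R$-netpoints gives $|P[a',b]| \le \ep|P[v,t]| \le \ep|P[u,t]|$, where $a'$ is the successor of $a$.

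\textbf{Combining.} The two estimates together control the whole segment except possibly a single heavy edge at one end. This is the main obstacle: the $L$-bound misses the last edge $(b',b)$, the $R$-bound misses the first edge $(a,a')$, and we need the \emph{minimum} of the two quantities, not their sum. I would first split into the cases $|P[s,u]|\le|P[u,t]|$ and the reverse; by symmetry assume the former.
\begin{itemize}
\item If $e$ is not the last edge of the segment, the $L$-bound already contains $e$, and I would aim to show that the heavy last edge $(b',b)$ cannot occur. Its heaviness would force a $v_j$ netpoint strictly inside $P[a,b]$, contradicting that $a,b$ are consecutive netpoints.
\item If $e=(b',b)$ is the last edge, then $a\neq u$. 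Here I would argue that the prefix weights of $u$ and $v$ straddle a power of $\Ep$ only if $v$ is an $L$-netpoint, which again contradicts consecutiveness.
\end{itemize}
Finally, $\min\{|P[s,u]|,|P[u,t]|\}\le |P|$ gives the last inequality in the statement. I expect the delicate bookkeeping to be in the second bullet, namely deciding where the heavy edge may sit relative to $e$ and which netpoint family excludes it.
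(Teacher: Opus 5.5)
\textbf{There is a genuine gap, and it cannot be closed.} Your two one-sided estimates are correct, and you correctly located the obstacle. The combining step, however, cannot be carried out: under the paper's netpoint definition the $\min$ inequality is false, so no case analysis will establish it.

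\emph{Counterexample.} Take $\ep=0.1$ and the path $s\to a\to m\to b\to t$ with edge weights $18,1,1000,14000$. Since $(1.1)^{30}\approx 17.45<18<19<19.19\approx(1.1)^{31}$ and $(1.1)^{100}\approx 1.38\cdot 10^{4}<14000<15000<1.52\cdot 10^{4}\approx(1.1)^{101}$, one gets $L=\{a,b,t\}$ and $R=\{b\}$. So $m$ is not a netpoint, and for $e=(a,m)$ the segment $\operatorname{seg}(e,P)=P[a,b]$ has weight $1001$, whereas $\ep\min\{|P[s,u]|,|P[u,t]|\}=1.8$. With $e=(m,b)$, the same path defeats your second bullet (there the bound is $1.9$).

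\emph{Where each bullet fails.} Your first bullet breaks because the $R$-thresholds near $b$ are spaced roughly $\ep|P[b,t]|$ apart. A last edge $(b',b)$ of weight up to about that much therefore need not create any $v_j$ inside the segment, and $|P[b,t]|$ can be arbitrarily larger than $|P[s,u]|$. Your second bullet's ``contradiction'' is not one: there $v=b$ is the right endpoint of the segment, so it is supposed to be a netpoint.

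\textbf{What your estimates do prove.} Add the two estimates instead of taking a minimum. If the segment has at least two edges, then $(b',b)$ lies in $P[a',b]$, hence $|\operatorname{seg}(e,P)|\le|P[a,b']|+|P[a',b]|<\ep(|P[s,u]|+|P[v,t]|)<\ep|P|$. This weaker bound is the only form in which the lemma is used later in the paper (e.g.\ $|xy|\le\ep|\pr_\rr|$).

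Do not try to rescue your argument by following the appendix proof. It bounds the segment by $(\Ep)^{i+1}-(\Ep)^i$, which overlooks exactly the heavy boundary edge you flagged. Its first case also asserts that $u$ is a netpoint when only $v$ is forced to be one. The example above refutes both branches of that proof.

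A true $\min$ bound needs a richer netpoint set. For example, also declare the predecessor of each $u_i$ and the successor of each $v_i$ to be netpoints. Then every threshold-crossing edge is its own segment, so each of your one-sided estimates covers an entire multi-edge segment.
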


The concept of a detour frequently appears in fault tolerant shortest path literature. Normally, detours are defined for paths that avoid some edges. In our paper, we will define detour for paths that have one extra property.  We begin by defining the traditional notion of a detour, starting with the following definition.

\begin{definition}[Prefix and Suffix of path $P$] 
Let $P$ be a (not necessarily shortest) path from $s$ to $t$. Let $e = (u,v)$ be an edge on $P$, where $u$ is closer to $s$ and $v$ is closer to $t$. Then $\pre(P,e)$ is the subpath $P[s,u]$ and $\suf(P,e)$ is the subpath $P[v,t]$.
\end{definition}

We now define the detour of a path avoiding a single edge and the start of the detour.

\begin{definition}\label{detour_of_a_path}(Detour of a path $R$)
    Let $R$ be a path from $s$ to $t$ that avoids an edge $e$ on the original shortest path $st$.
    Let $w$ be the last vertex of $R$ that also lies on $\pre(st,e)$, that is, the path $R[w,t]$ does not intersect with $\pre(st,e)$ except at $w$. Then the subpath $R[w,t]$ is the \emph{detour} of $R$, and we say the detour of $R$ \emph{starts} at $w$.
\end{definition} 

Now, we define special detours that follow some other properties.

\begin{definition}\label{single_edge_detour}(Detour of a path $R$ starting from a segment and joining below it)
Let $R$ be a path avoiding an edge $e$ on the $st$ path. Let $w$ be the last vertex of $R$ on $\pre(st,e)$. We say that the detour of $R$ starts from $w$ on the segment $seg(e,st)$ if:
\begin{enumerate}
    \item $w \in seg(e,st)$, and
    \item $R[w,t]$ does not intersect $seg(e,st)$ except at point $w$. (See \Cref{detour})
\end{enumerate}

\begin{figure}[t]
    \centering
    \begin{tikzpicture}[scale=0.9, every node/.style={font=\small}]
        \tikzstyle{point} = [circle, fill=black, inner sep=1.2pt]
        \tikzstyle{mainpath} = [thick, black]
        \tikzstyle{detour} = [ultra thick, pink]
        \tikzstyle{edgefail} = [ultra thick, red]
        \tikzstyle{shortcut} = [orange, thick, dashed]

        \draw[mainpath] (0,7) -- (0,0);
        \draw[detour] (0,7)--(0,5);
        \draw[edgefail] (0,4)--(0,3) node[midway,right]{$e$};
        \draw[detour] (0,1.5) arc[start angle=-90,end angle=90,radius=1.75]node[thick, pos=0.5, below=7, sloped, black, rotate=-90] {$R$};
        \draw[detour] (0,1.5)--(0,0);
        \draw[decorate, decoration={brace, amplitude=5pt}]
        (-1,2) -- (-1,6) node[midway, left] {$seg(e,P)$};

        \node[point,label=left:$z$] at (0,1.5){};
        \node[point,red] at (0,4){};
        \node[point,red] at (0,3){};
        \node[point,label=above:$s$] at (0,7){};
        \node[point,red,label=left:$u$] at (0,4){};
        \node[point,red,label=left:$v$] at (0,3){};
        \node[point,label=left:$x$,teal] at (0,6){};
        \node[point,label=left:$w$] at (0,5){};
        \node[point,label=left:$y$,teal] at (0,2){};
        \node[point,label=below:$t$] at (0,0){};
    \end{tikzpicture}
    \caption{Let $P$ be a path from $s$ to $t$. Let $R$ be a path (shown in pink) avoiding $e$ that lies on the segment $xy$. Detour of $R$ is $R[w,t]$. The detour of $R$ starts on the segment $xy$ above $e$ at $w$ and joins the path at $z$ below  the segment $xy$.}
    \label{detour}

\end{figure}

Let $z$ be the first vertex of $R$ such that $z \in \suf(st, e)$, that is the path $R[s,z]$ does not intersect with $\suf(st,e)$ except at $z$. By the second condition above, $z$ does not lie on the segment $seg(e,st)$ and we say that the detour joins $st$ below $seg(e,st)$ at $z$.
\end{definition}

\section{Overview}
\label{sec:overview}

Consider the algorithm of \cite{ChechikCFK17}, which designs a $(\Ep)$-approximate $\ado(f)$ with size roughly $\tl(n^2)$ and query time $O(f^5)$. We focus on the oracle's space complexity. The algorithm builds a data structure $\ft(st)$ for every vertex pair $(s,t)$ in the graph. While computing the shortest $s$-$t$ path avoiding $F$, we query $\ft(st)$. This structure guarantees the following property.

\begin{lemma}[stated informally, omitting technical details]
Either $\ft(st)$ outputs $|st \dia F|$, or there exists an affected vertex $w \in V(F)$ {\bf close} to the path $st \dia F$. In the latter case, we can compute $|sw \dia F|$ exactly using $\ft(sw)$.
\end{lemma}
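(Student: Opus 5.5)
The plan is to make the informal statement precise through a concrete definition of $\ft(st)$, prove the dichotomy by a case analysis on how $st \dia F$ meets the segments of $st$, and then obtain exactness for $|sw \dia F|$ by induction over the affected vertices.

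\textbf{Definition of $\ft(st)$.} I would define $\ft(st)$ to store, for every set $S$ of at most $f$ segments of $st$ (there are $O(\log_{\Ep}^f nW)$ such sets), the exact weight of a shortest $s$--$t$ path avoiding every edge of every segment in $S$. On query $F$, let $S_F=\{seg(e,st): e\in F\cap st\}$; faults off $st$ are handled by the same scheme applied to the replacement path. The query returns the stored value $D(S_F)$. Since every path avoiding the whole of $S_F$ also avoids $F$, we always have $D(S_F)\ge |st\dia F|$.

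\textbf{Case 1: exact output.} Suppose $st\dia F$ contains no edge of any segment in $S_F$. Then $st\dia F$ is itself a feasible path for $D(S_F)$, so $D(S_F)=|st\dia F|$ exactly. This is the first alternative of the lemma.

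\textbf{Case 2: an affected vertex is close.} Otherwise $st\dia F$ uses some vertex $x$ lying on a segment $\sigma=seg(e,st)$ with $e=(u,v)\in F$. Walking along $\sigma$ from $x$ toward $e$, I would stop at the first endpoint $w$ of a faulty edge. Then:
\begin{enumerate}
\item $w\in V(F)$;
\item the subpath $\sigma[x,w]$ contains no faulty edge;
\item by \Cref{lem:chechik}, $|\sigma[x,w]|\le |\sigma|\le \ep\min\{|st[s,u]|,|st[u,t]|\}\le \ep|st|\le\ep|st\dia F|$.
\end{enumerate}
So $w$ is within distance $\ep|st\dia F|$ of the path $st\dia F$, via a fault-free subpath. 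This is the meaning of ``close''. If $\sigma=\{e\}$, then $x\in\{u,v\}$ already and $w=x$ lies on the path itself.

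\textbf{Exactness of $|sw\dia F|$.} I would argue by induction over the at most $2f$ affected vertices, ordered by $|sw\dia F|$. Take $w$ minimizing $|sw\dia F|$ and query $\ft(sw)$. If Case 2 occurred there, it would produce an affected vertex $w'$ that lies on the $s$-side portion of $sw\dia F$, up to a fault-free detour along a segment of $sw$. I would then need to show $|sw'\dia F|<|sw\dia F|$, contradicting minimality; this forces Case 1, so $\ft(sw)$ is exact. Each subsequent affected vertex is then handled by the same argument, with the earlier vertices' exact values available.

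\textbf{Main obstacle.} The hard step is the strict decrease $|sw'\dia F|<|sw\dia F|$. The vertex $x$ lies on the path, but $w'$ is only near it, and the detour $\sigma[x,w']$ might not shorten the distance. To handle this, I would try choosing $x$ as the \emph{first} vertex of $sw\dia F$ lying in a faulted segment. The prefix up to $x$ then avoids $F$, which gives $|sw'\dia F|\le|sw\dia F[s,x]|+|\sigma[x,w']|$. I would then bound $|\sigma[x,w']|$ by the remaining suffix $|sw\dia F[x,w]|$, using the $\min\{|st[s,u]|,|st[u,t]|\}$ form of \Cref{lem:chechik} together with the uniqueness of shortest paths to break ties. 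If this inequality cannot be made strict, the fallback is to accept a $(1+\ep)$ loss per step; but the exact version stated here needs the strict decrease.
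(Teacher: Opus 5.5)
\textbf{The gap: exactness does not propagate past the nearest affected vertex.} Your induction breaks at the sentence ``each subsequent affected vertex is then handled by the same argument.''

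The minimality contradiction only works for the affected vertex $w_1$ nearest to $s$ in $G-F$, and there it is not even needed. If the $G$-shortest path $sw_1$ used a fault, the upper endpoint $a$ of its first faulty edge would satisfy $|sa\dia F|=|sa|<|sw_1|\le|sw_1\dia F|$. So $sw_1$ already avoids $F$ and $\ft(sw_1)$ returns it exactly.

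For the next vertex $w_2$, Case~2 inside $\ft(sw_2)$ may legitimately return $w'=w_1$. Now $|sw_1\dia F|<|sw_2\dia F|$ contradicts nothing, and $\ft(sw_2)$ alone does not give the answer. Plugging in the exact value $|sw_1\dia F|$ does not restore exactness either. Since $w_1$ is only near $R=sw_2\dia F$, the best walk you can certify is $sw_1\dia F$, then $\sigma[w_1,x]$, then $R[x,w_2]$. Its length is only guaranteed to be at most $|R|+2|\sigma[x,w_1]|$, which can exceed $|R|$. Moreover, with single-source structures you have no $\ft(xw_2)$ or $\ft(w_1w_2)$ to price the last piece. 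The paper instead reads it off the stored suffix of the primary path, which $R$ follows after the intersection point.

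So the $(1+O(\ep))$ loss per step that you list as a fallback is the only thing this approach can prove. That is also what the paper actually does. It gives no proof of this informal lemma, which is a motivating paraphrase of \cite{ChechikCFK17}. Its rigorous counterpart is \Cref{lem:belowe}, an $(\alpha+\beta+\alpha\beta)$ bound, iterated along the distance order of $V(F)\cup\{t\}$ to give $(1+k^3\ep)$ in \Cref{lem:k} and $(1+k^5\ep)$ in \Cref{lem:kk}.

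The strict-decrease step you single out as the main obstacle is, by contrast, fine; it is essentially \Cref{lem:vless}. If $x$ lies below the fault on $\sigma$, the netpoint argument gives $|\sigma[w',x]|\le|sw[x,w]|\le|(sw\dia F)[x,w]|$. If $x$ lies above, $\sigma[x,w']$ is a proper subpath of the shortest path $sw[x,w]$. But this only orders the vertices; it does not make later queries exact.

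\textbf{Secondary issue: your $\ft(st)$ is not well defined as stated.} A flat table indexed by sets of segments of $st$ cannot handle faults off $st$. The shortest path in $G-\bigcup S_F$ may use a fault $e'\notin st$. Then ``every path avoiding $S_F$ also avoids $F$'' is false, and $D(S_F)$ can even be smaller than $|st\dia F|$.

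You need the recursive tree (the paper's segment children, as in \cite{ChechikCFK17}). There, each child re-segments its own primary path $\pr_\xx$, and the descent stops once the current path avoids $F$. Case~2 must then be taken at the first level where $st\dia F$ meets a removed segment. Since $st\dia F$ survives in $G_\xx$ at that level, the closeness bound becomes $|\sigma|\le\ep|\pr_\xx|\le\ep|st\dia F|$, via \Cref{lem:chechik} applied to $\pr_\xx$ rather than to $st$.
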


Note that the above lemma is only an informal summary of \cite{ChechikCFK17} and omits technical details for clarity. The lemma states that using $\ft(st)$, we either obtain $|st \dia F|$ directly or identify an affected vertex $w \in V(F)$ {\bf close} to the path $st \dia F$ (the notion of closeness is not formally defined here). We then compute $|sw \dia F|$ and $|wt \dia F|$. Since $w$ is close to $st \dia F$, their sum approximates $|st \dia F|$ well. The lemma ensures that $\ft(sw)$ returns $|sw \dia F|$ exactly. For $|wt \dia F|$, the source is now an affected vertex, so we recurse and compute it using $\ft(wt)$. The recursion depth is bounded by $2f$, the number of vertices in $V(F)$. In fact, \cite{ChechikCFK17} implement this process in $O(f^5)$ time.

We now focus on the space taken by the data-structure. In \cite{ChechikCFK17}, the authors constructed an $(\Ep)$-approximate $\ado(f)$ for all-pairs distance queries, which justified their $\tl(n^2)$ space. In contrast, our setting involves only a single source, where we aim to design a $\sdo(2)$ of size $o(n^2)$ with fast query time. Thus, we cannot construct an $\ft$ data structure for every vertex pair in the graph. At the same time, we require fast query time, and achieving both objectives simultaneously is challenging. To highlight this difficulty, we next examine a related problem.

In \cite{DBLP:journals/theoretics/BiloCCC0KS24} (and then in a subsequent improvement in \cite{BiloCCC0S24}), the authors design a $(3+\ep)$-approximate $\ado(f)$. Their oracle has size $o(n^2)$ but suffers from $\Omega(n^c)$ query time (for some constant $c < 1$).  They face the same challenge as ours: using the $\ft$ data structure requires quadratic space if built for all pairs. To avoid this, they sample a subset of vertices and construct $\ft$ data structures from each sampled vertex to all others. Since the sample size is much smaller than $n$, the overall space remains $o(n^2)$. However, this space-saving comes at the cost of $ O (n^c) $ query time. Reducing this query time remains an important open problem.

Our challenge closely resembles the above problem, even though we focus on the simpler single source setting. In fact, our goal is harder, as we seek polylogarithmic query time. The reader will observe that we introduce significant modifications to the $\ft$ data structure, enabling us to reduce the number of vertex pairs for which it needs to be constructed. To fully appreciate our new contribution, we first present a construction of $\sdo(1)$ that achieves $\tl(n\sqrt{n})$ space and $\tl(1)$ query time. Although this result is weaker than that of \cite{BaswanaK13}, who obtained nearly $\tl\!\left(\tfrac{n}{\ep^3}\right)$ space and $O(1)$ query time, it serves as an essential foundation for our improved dual fault structure.

	Our construction naturally extends to handle the case of two faults. Once the reader is familiar with the working of $\sdo(1)$, the extension to $\sdo(2)$ becomes straightforward. Thus, the next section acts as an overview of our approach.

\section{Warm-up: Single fault tolerant Distance Oracle}
In this section, we design a single fault tolerant distance oracle. The space taken by our distance oracle is $\tl(n\sqrt n)$ and its query time is $\tl(1)$. Let us first see the construction of the data-structure that will be used by this oracle.

\subsection{Constructing \texorpdfstring{$\ft(st)$}{ft(st)}}

We construct a data-structure $\ft(st)$ for each $t \in V$. We build $\ft(st)$ using the function $\mn$ defined in \Cref{alg:constructft}. This function takes three parameters: a graph $H$, a path $P$, and the $depth$ of the node. It creates a node with these inputs and recursively builds its subtree. To construct the full tree $\ft(st)$, we invoke $\mn(G, st, 0)$, which initializes the root node at level~0. Since \Cref{alg:constructft} only recurses when $\text{depth} < 1$, the tree has exactly two levels: the root at depth~0 and its children at depth~1.

\begin{algorithm}[t]
Make a node $\rr$ with path $\pr_\rr = P$ \\
\If{$depth  < 1$}
{
    \ForEach{segment $xy$ in $\pr_\rr$}
    {
        Let $P(xy)$ be the shortest 1-decomposable path $Q_1 \odot e_1 \odot Q_2$
        in the original graph $G$, where each $Q_j$ is a shortest path in $G$, such that no edge of 
        $P(xy)$ belongs to the segment $xy$ \\
        \If{$P(xy)$ exists}
        {
            a child of $\rr$ $\leftarrow \mn(H-xy, P(xy), depth+1)$\\
        }
        \ForEach{$i = 0$ to $\log_\Ep nW$}
        {
            Let $\mathcal{Q} = \{ \pr_\rr \dia e \mid e \in xy \}$
            \tcp{$\pr_\rr \dia e$ denotes the shortest path between 
            the endpoints of $\pr_\rr$ avoiding edge $e$, computed in $G$}

            Let $P_i$ be the path in $\mathcal{Q}$ whose detour starts 
            closest to $x$ on the segment $xy$ and detour joins below 
            $xy$ on $\pr_\rr$, and its detour length $\le (\Ep)^i$ 

            \If{$P_i$ exists}
            {
                a child of $\rr \leftarrow \mn(H, P_i, depth+1)$
            }
        }
    }
    
}
return $\rr$
\caption{$\mn(H, P, depth)$}
\label{alg:constructft}
\end{algorithm}
 In $\mn(G,st,0)$, we first make the root  of $\ft(st)$, that is $\ro(\ft(st))$. Each node of $\ft(st)$ is associated with a graph and a primary path $\pr_\rr$. For instance, for the root $\rr$ of $\ft(st)$, the associated graph is $G$, and the primary path $\pr_\rr$ is the $st$ path. For each segment $xy \in \pr_\rr$, we do the following:

  \begin{enumerate}
  
    \item Make a Segment child 

	Let $P(xy)$ be the shortest 1-decomposable path of the form $Q_1 \odot e_1 \odot Q_2$ in the original graph $G$, where each $Q_j$ is a shortest path in the original graph, such that no edge of $P(xy)$ belongs to the segment $xy$. In other words, $P(xy)$ is a 1-decomposable path in $G$ that survives in $G - xy$. If $P(xy)$ exists, then we create a child for the segment $xy$ by calling $\mn(G - xy, P(xy), 1)$ and designate it as a \emph{segment} child.

    \item Make Detour children

    Let $\mathcal{Q}$ be the set of all shortest paths between the endpoints of $\pr_\rr$ that avoid some edge of the segment $xy$, computed in $G$. Formally, $\mathcal{Q} = \{\pr_\rr \diamond e \mid e \in xy\}$. Let $P_i$ be the path in $\mathcal{Q}$ such that the detour starts closest to $x$ on the segment $xy$, the detour joins $\pr_\rr$ below $xy$ on $\pr_\rr$, and the detour length is $\le (\Ep)^i$. If $P_i$ exists, then we create a \emph{detour} child $\rr_i$ of $\rr$ with $P_i$ as its primary path by invoking $\mn(G, P_i, 1)$.

  \end{enumerate}
     
For each segment $xy$, we create one segment child and at most $\loge$ detour children of $\rr$. Since $\pr_\rr$ contains $O(\loge)$ segments, there are $O(\loge)$ segment children and $O(\logee)$ detour children at level 1 of $\ft(st)$. We thus claim the following:

\begin{lemma}
\label{nodes_in_FT}
	There are $O(\logee)$ nodes in $\ft(st)$.
\end{lemma}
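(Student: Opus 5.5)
The plan is to count the nodes of $\ft(st)$ level by level, using the fact that \Cref{alg:constructft} only recurses while $depth<1$. The call $\mn(G,st,0)$ creates the root at depth~0, and every recursive call it makes uses $depth+1=1$. A node at depth~1 fails the test $depth<1$, so it creates no children and simply returns. Hence $\ft(st)$ has exactly two levels: the single root and its children. The total number of nodes is therefore $1$ plus the number of children of the root, and it suffices to bound the root's children.

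First I bound the number of segments of the root's primary path $\pr_\rr=st$. By the definition of netpoints, $L$ has at most one vertex for each index $i$ with $(\Ep)^i\le |st|$. Edge weights lie in $[1,W]$ and a shortest path has at most $n-1$ edges, so $|st|\le nW$. Thus $|L|=O(\log_\Ep nW)$, and the same bound holds for $|R|$. Adding $s$ and $t$ gives $O(\loge)$ netpoints. Consecutive netpoints delimit the segments, so there are $O(\loge)$ segments.

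Next, for each segment $xy$ the outer loop creates children as follows.
\begin{itemize}[noitemsep]
\item At most one segment child, namely the call on $P(xy)$, made only if $P(xy)$ exists.
\item At most one detour child per value of $i\in\{0,\dots,\log_\Ep nW\}$, since each iteration selects a single path $P_i$. This gives at most $\loge+1$ detour children.
\end{itemize}
So each segment contributes $O(\loge)$ children. Multiplying by the $O(\loge)$ segments, the root has $O(\logee)$ children in total, and $\ft(st)$ has $1+O(\logee)=O(\logee)$ nodes.

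I do not expect a real obstacle; the argument is a direct count. The one point to state explicitly is the bound $|st|\le nW$ on the path length, since the netpoint count $O(\log_\Ep |P|)$ depends on it. For the root, $\pr_\rr=st$ is a simple shortest path, so this bound holds immediately. Depth-1 nodes have primary paths such as $P(xy)$ or $P_i$, which may be longer, but their netpoints and segments are never used to create children, so their lengths do not affect the count.
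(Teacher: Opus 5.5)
Your proposal is correct and is essentially the paper's argument: the tree has only the root and its depth-1 children, $st$ has $O(\loge)$ segments, and each segment contributes one segment child plus at most $\loge+1$ detour children, giving $O(\logee)$ nodes. Your remarks that $|st|\le nW$ and that the segments of depth-1 paths are never expanded spell out what the paper leaves implicit.
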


We now try to bound the size of $\ft(s,t)$. To this end, we claim the following crucial lemma.

\begin{lemma}
\label{decompose_node}
For each node $\xx \in \ft(st)$, $\pr_\xx$ is 1-decomposable. 
\end{lemma}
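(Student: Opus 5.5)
The proof is a case analysis on the three kinds of nodes in $\ft(st)$. By \Cref{alg:constructft}, $\mn$ recurses only when $depth<1$, so $\ft(st)$ has two levels: the root at depth~0 and its segment and detour children at depth~1. I will check that the primary path of each kind of node is 1-decomposable.

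\emph{Root.} Here $\pr_\rr = st$, which is a shortest path in $G$. A shortest path is $0$-decomposable, and so it is trivially 1-decomposable: a concatenation of at most two shortest paths interleaved with at most one edge.

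\emph{Segment child.} This node has primary path $P(xy)$. By construction, $P(xy)$ is chosen as a path of the form $Q_1 \odot e_1 \odot Q_2$ in $G$, where $Q_1$ and $Q_2$ are shortest paths in the original graph $G$. This is exactly the definition of 1-decomposable. The only subtlety is that the node's graph is $G-xy$ rather than $G$. Decomposability, however, is defined relative to shortest paths in $G$, and the construction explicitly requires the $Q_j$ to be shortest in $G$. So nothing further is needed.

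\emph{Detour child.} This node has primary path $P_i \in \mathcal{Q}$, so $P_i = \pr_\rr \dia e = st \dia e$ for some edge $e$ in the segment $xy$. The comment in the algorithm states that this path is computed in the original graph $G$. It is therefore the shortest $s$-$t$ path in $G - \{e\}$, i.e.\ after a single edge failure. By \Cref{lem:decompasable} with $k=1$, it is 1-decomposable: it can be written as $Q_1 \odot e' \odot Q_2$ with $Q_1, Q_2$ shortest paths in $G$ (possibly with $Q_2$ empty and no edge $e'$). The uniqueness assumption on shortest paths ensures that $P_i$ is a well-defined path to which the lemma applies.

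\emph{Main obstacle.} The step that needs care is confirming that every node's path is measured against the original graph $G$ and not the node's own graph $H$, since the segment children carry the graph $G-xy$. The algorithm takes care of this: detour paths are always computed in $G$ from the root's path $st$, and the paths of segment children are constructed directly in 1-decomposable form. Combining the three cases proves the lemma.
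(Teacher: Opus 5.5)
Your proposal is correct and follows the paper's proof essentially verbatim. The paper uses the same two-level structure and the same three cases: the root is $0$-decomposable, segment children are 1-decomposable by construction, and detour children are of the form $st \dia e$, hence 1-decomposable by \Cref{lem:decompasable}.
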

\ifnum\myone=1
\begin{proof}
     The root $\rr$ of $\ft(st)$ stores the shortest $st$ path, which is 0-decomposable (a single shortest path) and hence also 1-decomposable. Since \Cref{alg:constructft} only recurses when $\text{depth} < 1$, the tree has exactly two levels: the root at depth~0 and its children at depth~1. It suffices to verify that each child's primary path is 1-decomposable.

	\noindent\textit{Segment children.} For a segment $xy \in \pr_\rr$, the path $P(xy)$ is defined as the shortest 1-decomposable path in the original graph $G$ (that also survives $G-xy$), so 1-decomposability holds by construction.

	\noindent\textit{Detour children.} The path $P_i$ lies in $\mathcal{Q} = \{st \diamond e \mid e \in xy\}$, meaning it is a shortest $st$ path in $G$ avoiding a single edge $e \in xy$. By \Cref{lem:decompasable}, every shortest path after one edge failure is 1-decomposable, so $P_i$ is 1-decomposable.

	Hence, all nodes of $\ft(st)$ store a 1-decomposable primary path.
\end{proof}
\else
    See the full version of the paper below for the proof.
\fi

We now use the above lemma to store the primary path at each node of $\ft(st)$ efficiently. Consider a node $\rr$ of $\ft(st)$. The lemma implies that $\pr_\rr$ is $1$ decomposable, so it has the form $Q_1 \odot e \odot Q_2$. If a path $Q_j$ has  at most $\sqrt{n}$ edges, we store it explicitly at node $\rr$. If $Q_j$ has more than  $\sqrt{n}$ edges, we apply a sampling approach to represent $Q_j$. A similar 
technique has been used in \cite{DBLP:journals/theoretics/BiloCCC0KS24}. We 
sample a set $L$ of nodes from $G$ independently with probability 
$\tl\!\left(\frac{1}{\sqrt{n}}\right)$. With high probability, 
$|L| = \tilde{O}(\sqrt{n})$. We store the shortest path tree rooted at each 
node in $L$, which requires total space $\tilde{O}(n\sqrt{n})$. Now consider a subpath $Q_j$ from $a$ to $b$. Since  $Q_j$ has $\ge \sqrt{n}$ edges, a vertex 
$z \in L$ lies on $Q_j$ with high probability. Moreover, since subpaths of 
shortest paths are themselves shortest paths, both $Q_j[a,z]$ 
and $Q_j[z,b]$ are shortest paths in $G$. We therefore store $Q_j$ implicitly 
by recording only the pairs $(a, z)$ and $(z, b)$ at node $\rr$, and recover the full 
paths $Q_j[a,z]$ and $Q_j[z,b]$ from the shortest path tree $T_z$ rooted at 
$z \in L$. Thus, each node $\rr$ stores $Q_j$ in $O(1)$ space when 
$Q_j$ has $\ge \sqrt{n}$ edges.

Combining both cases: each node $\rr$ represents its primary path in 
$O(\sqrt{n})$ space (either the explicit path of length at most $\sqrt{n}$, or 
an $O(1)$-sized pairs $(a,z)$ and $(z,b)$). Since $\ft(st)$ has $O(\logee)$ nodes by 
\Cref{nodes_in_FT}, each tree occupies $\tl(\sqrt{n})$ space. Across all $n$ 
trees, the total space is $\tl(n\sqrt{n})$. Also, we build shortest path trees from all vertices of $L$. This also takes $\tl(n\sqrt n)$ space. Thus, the total space taken by our data-structure is $\tl(n\sqrt n)$.

In our query algorithm, we must efficiently check whether a given edge $e$ lies on the path $Q_j$. If $Q_j$ has $\le \sqrt{n}$ edges, we store a dictionary at node $\rr$ using $\tilde{O}(\sqrt{n})$ space, allowing constant-time membership queries for $e \in Q_j$. If $Q_j$ has  $> \sqrt{n}$ edges, we represent $Q_j$ implicitly as the concatenation of two shortest paths, $Q_j[a,z]$ and $Q_j[z,b]$, where $z \in L$ lies on $Q_j$. Now we use the following lemma to check if $e$ lies in $Q_j[a,z]$.

\begin{lemma}
    \label{lem:lca}(See \cite{BenderF00} and its references) Given a tree $T$ with $n$ vertices, we can construct a data structure of size $O(n)$ in $O(n)$ time, allowing us to answer LCA(Lowest Common Ancestor) queries in $O(1)$ time.
\end{lemma}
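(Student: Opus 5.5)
The statement to prove is the classical constant-time LCA result (\Cref{lem:lca}). I would go through the standard Euler-tour plus range-minimum reduction, following Bender and Farach-Colton.

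\textbf{Step 1: Euler tour.} Run a DFS of $T$ from its root and write down a vertex every time the traversal visits it, both when entering it and when returning to it from a child. This gives a sequence $E$ of length $2n-1$. Alongside it, record the depth array $D$ with $D[i]=\mathrm{depth}(E[i])$, and for each vertex $v$ its first occurrence $\mathrm{first}(v)$ in $E$. All of this takes $O(n)$ time and space.

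\textbf{Step 2: reduction to RMQ.} I would show that $\mathrm{LCA}(u,v)$ is the vertex $E[k]$, where $k$ minimises $D$ over the index range between $\mathrm{first}(u)$ and $\mathrm{first}(v)$.
\begin{itemize}
\item Every vertex the tour touches between these two positions lies in the subtree of $\mathrm{LCA}(u,v)$.
\item The tour must pass back through $\mathrm{LCA}(u,v)$ while moving from the subtree containing $u$ to the subtree containing $v$. This also covers the case where one of $u,v$ is the ancestor of the other.
\end{itemize}
So LCA queries reduce to range-minimum queries on $D$.

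\textbf{Step 3: linear-space $O(1)$ RMQ.} Consecutive entries of $D$ differ by exactly $\pm1$.
\begin{itemize}
\item Split $D$ into blocks of size $b=\tfrac12\log n$.
\item Take the minimum of each block, giving an array of $O(n/\log n)$ entries. Build a sparse table on it: for each start position $i$ and each $j$, store the minimum over the $2^j$ blocks beginning at $i$. This uses $O\big(\tfrac{n}{\log n}\log n\big)=O(n)$ space and time, and answers any range of whole blocks with two overlapping lookups.
\item For queries inside a block, normalise the block by subtracting its first entry. Its shape is then fixed by a $\pm1$ vector of length $b-1$, so only $2^{b-1}=O(\sqrt n)$ block types exist. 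For each type, precompute the answers to all $O(b^2)$ in-block queries. The total is $O(\sqrt n\log^2 n)=o(n)$, and each block just stores its type index.
\end{itemize}

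A general query is answered by combining at most two partial in-block lookups with one sparse-table query over the full blocks in between, all in $O(1)$ time.

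The main obstacle is Step 3. A naive sparse table over all of $D$ would cost $O(n\log n)$ space, so the argument has to rely on the $\pm1$ structure to count block types and keep the in-block tables sublinear. This should be checked carefully, including the word-RAM assumption that a type index fits in a machine word.

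Since the paper cites this lemma from \cite{BenderF00}, it can also simply be invoked rather than reproved.
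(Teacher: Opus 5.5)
Your proposal is correct and reproduces the Bender--Farach-Colton construction from \cite{BenderF00}: an Euler tour, reduction to $\pm1$ RMQ, and a block decomposition with a sparse table over block minima plus lookup tables for the $O(\sqrt n)$ block types. The paper gives no proof of its own and cites this result as a black box; the one detail to make explicit in your version is that the block-minimum array and sparse table must store the positions of the minima, not just their values, together with a precomputed table of $\lfloor \log_2 \ell \rfloor$ for $\ell \le n$, so that a query returns the index $k$ and hence the vertex $E[k]$ in $O(1)$ time.
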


As we are storing the path $Q_j[a,b]$ as a pairs $(a,z)$ and $(z,b)$ where $z \in L$ , we will proceed as follows. For a faulty edge $e=(u,v)$, we first check if $T_z$ contains $e$. If $T_z$ does not contain $e$, then $Q_j$ cannot contain $e$. If $T_z$ contains $e$, then we check whether $LCA(a,u)=u$ and $LCA(a,v)=v$ are satisfied or not. If both conditions hold, then the edge $e$ is in the shortest path $az$. These LCA queries take $O(1)$ time. A similar check applies for $zb$. With an additional $\tilde{O}(1)$ data structure at $\rr$, we can also report which segment of $Q_j$ contains $e$.
\ifnum\myone=1 We  prove this later (see \Cref{Finding_the_segment_of_an_edge}),
\else
See the full version of the paper below for the proof.
\fi

\begin{lemma}\label{lem:insegment}
    For each $t$ , each node of  $\ft(st)$ can be represented using $\tl(\sqrt n)$ space. Thus, the total space taken by $\ft(st)$ overall $t \in V \setminus \{s\}$ is $\tl(n\sqrt n)$. Moreover, the time taken to determine if $e \in \pr_\rr$ in $\tl(1)$ for each node $\rr$ in $\ft(st)$. Moreover, we can also report the segment of $\pr_\rr$ that contains $e$ in $\tl(1)$ time.
\end{lemma}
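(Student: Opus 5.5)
We use \Cref{decompose_node} to write every primary path as $\pr_\rr = Q_1 \odot e_1 \odot Q_2$. The three claims (space, membership, segment reporting) are handled separately. The segment part is the only new ingredient beyond the discussion preceding the lemma.

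\textbf{Space.} For each $Q_j$ with at most $\sqrt n$ edges, we store $Q_j$ explicitly together with a hash dictionary of its edges, which takes $O(\sqrt n)$ space. For each $Q_j$ with more than $\sqrt n$ edges, we store a sampled vertex $z\in L\cap Q_j$ and the pairs $(a,z),(z,b)$. Such a $z$ exists with high probability: sampling with probability $c\log n/\sqrt n$ and union bounding over the polynomially many paths considered gives failure probability $n^{-\Omega(c)}$. For every $z\in L$ we keep the shortest path tree $T_z$ with the LCA structure of \Cref{lem:lca}, and an edge dictionary of $T_z$. This totals $\tl(n\sqrt n)$, since $|L|=\tl(\sqrt n)$. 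By \Cref{nodes_in_FT} each $\ft(st)$ has $O(\logee)$ nodes, so summing over all $t$ gives $\tl(n\sqrt n)$.

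\textbf{Membership.} We test $e=e_1$ directly. Otherwise, for explicit $Q_j$ we use the dictionary. For implicit $Q_j$ we check whether $e=(u,v)\in T_z$, orient it so that $v$ is the child, and test whether $v$ is an ancestor of $a$ in $T_z$ via $LCA(a,v)=v$. We do the same for $b$. Each step is $O(1)$.

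\textbf{Segment reporting.} This is the main obstacle. The node stores the $O(\loge)$ netpoints of $\pr_\rr$ in order along the path, together with their distances $d(x)=|\pr_\rr[s',x]|$ from the start $s'$ of $\pr_\rr$; this takes $\tl(1)$ space. Given $e=(u,v)\in\pr_\rr$, we compute the position $d(u)$ of $u$ along $\pr_\rr$.
\begin{itemize}
\item If $u\in Q_1$, then $d(u)$ is the distance from $s'$ to $u$ on $Q_1$. For implicit $Q_1$ with $u$ on $Q_1[a,z]$, this equals $|az|-\mathrm{dist}_{T_z}(z,u)$. We obtain it by storing $|az|$ and the depth (weighted distance from $z$) of every vertex in $T_z$. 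The case $u\in Q_1[z,b]$ is analogous.
\item If $u\in Q_2$, we add the stored offset $|Q_1|+w(e_1)$.
\item For explicit subpaths, we simply store the prefix distance of each vertex.
\end{itemize}
Since each edge has positive weight, the position $d(u)$ determines the location of $e$ uniquely. We then binary search (or linearly scan) the $O(\loge)$ sorted netpoint positions to find the consecutive pair $x,y$ with $d(x)\le d(u)<d(y)$. This takes $\tl(1)$ time, and $xy=seg(e,\pr_\rr)$.

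The delicate points are the orientation of edges relative to the tree $T_z$ and making the high-probability hitting argument uniform over all nodes of all trees. We handle the latter by a union bound over the $\tl(n)$ stored subpaths.
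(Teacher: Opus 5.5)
Your proposal is correct. The space bound and the membership test follow the paper's argument. Your single ancestor test $LCA(a,v)=v$ on the child endpoint is equivalent to the paper's pair of checks $LCA(a,u)=u$, $LCA(a,v)=v$. Your explicit union bound (over the $\tl(n)$ stored subpaths, or over the at most $n^2$ unique shortest paths of $G$) spells out what the paper's ``with high probability'' means.

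Segment reporting is where you take a genuinely different route. The paper (Appendix B) scans the $O(\loge)$ segments and decides, for each segment $xy$, whether $e$ lies between $x$ and $y$. It does this with dictionary look-ups or LCA/ancestor queries in $T_z$, and it splits into cases according to whether $x$, $y$ and $e$ lie in the same decomposed subpath $Q_j$ or in different ones.

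You instead reduce everything to one scalar. You store the prefix distances of the netpoints along $\pr_\rr$ and the weighted depths in each $T_z$. The position of $u$ is then $|az| \mp \mathrm{dist}_{T_z}(z,u)$ plus a stored offset, and you locate it among the sorted netpoint positions. This relies on two facts, both of which hold here:
\begin{itemize}
\item uniqueness of shortest paths, so that $Q_j[a,z]$ is exactly the tree path in $T_z$;
\item positive edge weights, so that positions are strictly increasing along $\pr_\rr$.
\end{itemize}

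What your approach buys: it eliminates the case analysis on where the segment endpoints fall, and binary search gives $O(\log \loge)$ time. What the paper's approach buys: it needs only the LCA structures and no per-tree depth arrays. Those arrays cost $O(n)$ per sampled vertex, so both versions fit in $\tl(n\sqrt n)$.

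One detail you should state explicitly. Either compute the positions of both endpoints of $e$ and search with the smaller one, or take the orientation from the membership test. Also, when $e=e_1$, its upper endpoint sits at position $|Q_1|$.
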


\subsection{Query Algorithm}
Given a destination vertex $t$ and a faulty edge $e = (u,v)$, we execute the algorithm $\qu(t, \ro(\ft(st)))$; assuming $F = \{e\}$ is implicitly known. Our algorithm also requires one additional implicit array, which we describe next. For brevity, we omit it from the procedure's arguments.

The procedure $\qu$ is recursive. The initial call uses $t$ as the first parameter (the destination), while subsequent calls use one of the affected vertices 
\ifnum\myone=1
	(see \Cref{query:11}).
\else
	(see Line \ref{query:11}).
\fi
 The first \textbf{if} condition bounds the recursion depth.  During $\qu(t,\cdot)$, we track the recursion path using an array that records all vertices visited along the current path. The array has constant size because, after the root, each recursive call involves only affected vertices. 
If $\qu$ encounters a vertex that already appears as the first parameter on the current path, it returns $\infty$.

Let $\rr$ be the root of $\ft(st)$. Thus,  $\pr_\rr = st$. If $st$ contains no faulty edge, we return it directly. Otherwise, $e=(u,v)$ lies on the $st$ path, and let $xy$ denote the segment of $st$ that contains $e$. The algorithm then proceeds through three main cases depending on the structure of $st \dia e$. We now go through these three cases.

	\begin{algorithm}[t]
	\DontPrintSemicolon
	\If{any ancestor function call of $\qu$ had $t$ as the first parameter}
	{
		return $\infty$
	}
	\If{$\pr_\rr$ does not contain $e$}{
		\Return $|\pr_\rr|$
	}

	\Else{
		$\pa \leftarrow \infty$
		
		Let $e = (u,v)$ be the faulty edge on $\pr_\rr$, and let $xy \in \pr_\rr$ be the segment containing $e$\\
        \label{query:7}
		\Comment{\textcolor{blue}{ Part 1: $st \dia e$ avoids segment $xy$ }}
		Let $\xx$ be the segment child of $\rr$ corresponding to the path avoiding $xy$\\
		\If{$\xx$ exists}
		{
			$\pa \leftarrow |\pr_\xx|$
		\label{query:9}
		}
		
		\vspace{2mm}
		\Comment{\textcolor{blue}{ Part 2: $st \dia e$ intersects segment $xy$ below $e$ }}
		\If{$v \neq t$}
		{
		
		$\pa \leftarrow \min\{\pa,  \qu(v, \ro(\ft(sv))) + |\pr_\rr[v,t]|\}$
		\label{query:11}
		}
		\vspace{2mm}
		\Comment{\textcolor{blue}{ Part 3: $st \dia e$ intersect segment $xy$ above $e$}}
		\ForEach{$i = 0$ to $\log_\Ep nW$ \label{query:12}}{
			Let $\rr_i$ be the detour child of $\rr$ whose detour in $\pr_{\rr_i}$ starts on $xy$ and joins below $y$ on $\pr_\rr$ and has length at most $(\Ep)^i$
			
			\If{$\rr_i$ exists and $\pr_{\rr_i}$ avoids $e$}
			{
				$\pa \leftarrow \min\{\pa, |\pr_{\rr_i}| \}$
				\label{query:15}
			}

			}
		}
	\Return $\pa$
	\caption{\qu($t, \rr$)}
	\label{query}
\end{algorithm}

	\begin{enumerate}
		\item Part 1: $st \dia e$ avoids the segment $xy$ (See part 1 of \Cref{fig:single_edge_fault})

		Let $P(xy)$ be the shortest 1-decomposable path in $G$ that survives in $G - xy$ constructed during preprocessing.  Let $\xx$ be the segment child that contains $P(xy)$. If $\xx$ exists, then we set $\pa$ to $|\pr_\xx|$ = $|P(xy)|$.
		
		\item $st \dia e$ intersect $xy$.
		
		There are two cases here, depending on whether $st \dia e$ intersects $xy$ above or below $e$. If it intersects $xy$ both above and below $e$, we will give preference to its intersection below $e$ in the analysis. Let's go over these two cases.
		
		\begin{enumerate}
		\item\label{case:qn2a} Part 2: $st \dia e$ intersects $xy$ below $e$. (See part 2 of \Cref{fig:single_edge_fault})

		Let $v$ be the endpoint of $e$ closest to $t$ on the $\pr_\rr = st$ path. We estimate $|st \dia e|$ by computing $|\pr_\rr[v,t]| + |sv \dia e|$. Note that for a single fault, the first path is just the shortest path $vt$.  Although $|sv \dia e|$ is unknown, we can approximate it by calling $\qu(v, \ro(\ft(sv)))$. One may question whether this recursive call makes any progress. It does - instead of querying at $t$, we now query at $v$, an affected vertex, thereby making some progress.
				
		\item Part 3: $st \dia e$ intersects $xy$ above $e$. (See part 3 of \Cref{fig:single_edge_fault})

		It means that $st \dia e$ does not intersect $xy$ below $e$. This implies that the detour of $st \dia e$ begins on the segment $xy$ above $e$ and joins below $xy$ on $st$; a hard case for our algorithm. We had precomputed detour children of the root precisely for such situations, and now we utilise them. For each $i \in [0, \loge]$, we examine the detour child $\rr_i$ of $\rr$ such that the detour of $\rr_i$ starts on $xy$ and has length at most $(\Ep)^i$.  If  $\pr_\rri$ avoids $e$, we update $\pa$ to $|\pr_\rri|$.

		\end{enumerate}
	\end{enumerate}
	
\begin{figure}
    \centering
    \begin{minipage}[t]{0.22\textwidth}
        \centering
        \begin{tikzpicture}[scale=0.8, every node/.style={font=\small}]
            \tikzstyle{point} = [circle, fill=black, inner sep=1.5pt]
            \tikzstyle{fault} = [red, line width=0.8mm]

            \draw[thick] (0,7) -- (0,0);
            
            \draw [pink, ultra thick, opacity=0.7] (0,6) arc [start angle=-90, end angle=90, radius=-2.5] node[pos=0.5, below=14, sloped, black, rotate=180] {$st \dia e$}; 
            \draw [pink, ultra thick, opacity=0.7] (0,6) -- (0,7);  
            \draw [pink, ultra thick, opacity=0.7] (0,1) -- (0,0);

            \node[point, label=right:$z$] at (0,1) {};
            \draw[fault] (0,3) -- (0,4);
            \draw (0,3) -- (0,4) node[midway, left] {$e$};
            \draw node at (0,-1.1) {Part 1};
            \draw (-0.7,-0.7) rectangle (0.7,-1.5);
             \draw node at (0,-2) {};
             \node[point, label=above:$s$] at (0,7) {};
            \node[point, label=right:$x$, teal] at (0,5) {};
            \node[point, red, label=right:$v$] at (0,3) {};
            \node[point, red, label=right:$u$] at (0,4) {};
            \node[point, label=right:$y$, teal] at (0,2) {};
            \node[point, label=below:$t$] at (0,0) {};

        \end{tikzpicture}
        
    \end{minipage}
        \begin{minipage}[t]{0.22\textwidth}
        \centering
        \begin{tikzpicture}[scale=0.8, every node/.style={font=\small}]
            \tikzstyle{point} = [circle, fill=black, inner sep=1.5pt]
            \tikzstyle{fault} = [red, line width=0.8mm]

            \draw[thick] (0,7) -- (0,0);
            
            \draw [pink, ultra thick, opacity=0.7] (0,6) arc [start angle=-90, end angle=90, radius=-2] node[pos=0.5, below=14, sloped, black, rotate=180] {$st \dia e$}; 
            \draw [pink, line width=0.8mm, opacity=0.7] (0,6) -- (0,7);  
            \draw [pink, line width=0.8mm, opacity=0.7] (0,2) -- (0,0);

            \draw[fault] (0,3) -- (0,4);
            \node[point, label=right:$z$] at (0,2) {};
            \draw (0,3) -- (0,4) node[midway, left] {$e$};
           \draw (0,-0.7) rectangle (5,-1.5);
  			\node at (2.4,-1.1) {Part 2};
  		\draw node at (0,-2) {};
        \node[point, label=right:$y$, teal] at (0,1) {};
        \node[point, label=above:$s$] at (0,7) {};
        \node[point, label=right:$x$, teal] at (0,5) {};
        \node[point, red, label=right:$v$] at (0,3) {};
        \node[point, red, label=right:$u$] at (0,4) {};
        \node[point, label=below:$t$] at (0,0) {};

        \end{tikzpicture}

    \end{minipage}
    \begin{minipage}[t]{0.22\textwidth}
        \centering
        \begin{tikzpicture}[scale=0.8, every node/.style={font=\small}]
            \tikzstyle{point} = [circle, fill=black, inner sep=1.5pt]
            \tikzstyle{fault} = [red, line width=0.8mm]

            \draw[thick] (0,7) -- (0,0);

            \draw [pink, ultra thick, opacity=0.7] (0,4.5) arc [start angle=-90, end angle=90, radius=-1.25] node[pos=0.5, below=14, sloped, black, rotate=180] {$st \dia e$}; 
            \draw [pink, line width=0.8mm, opacity=0.7] (0,4.5) -- (0,7);  
            \draw [pink, line width=0.8mm, opacity=0.7] (0,2) -- (0,0);

            \draw[fault] (0,3) -- (0,4);
            \draw (0,3) -- (0,4) node[midway, left] {$e$};
            \node[point, label=right:$z$] at (0,2) {};
  
            \draw node at (0,-2) {};
            \node[point, label=right:$x$, teal] at (0,5) {};
            \node[point, label=right:$y$, teal] at (0,1) {};
            \node[point, label=above:$s$] at (0,7) {};
            \node[point, red, label=right:$v$] at (0,3) {};
            \node[point, red, label=right:$u$] at (0,4) {};
            \node[point, label=below:$t$] at (0,0) {};

        \end{tikzpicture}

    \end{minipage}
	\begin{minipage}[t]{0.22\textwidth}
        \centering
        \begin{tikzpicture}[scale=0.8, every node/.style={font=\small}]
            \tikzstyle{point} = [circle, fill=black, inner sep=1.5pt]
            \tikzstyle{fault} = [red, line width=0.8mm]

            \draw[thick] (0,7) -- (0,0);
            
            \draw [pink, ultra thick, opacity=0.7] (0,5) arc [start angle=-90, end angle=90, radius=-1.5] node[pos=0.5, below=14, sloped, black, rotate=180] {$st \dia e$};

            \draw [blue, ultra thick, opacity=0.7] (0,1.5) arc [start angle=-90, end angle=90, radius=2] node[pos=0.5, below=14, sloped, black, rotate=180] {$\pr_\rri$};
            
            \draw [pink, line width=0.8mm, opacity=0.7] (0,5) -- (0,7);  
            \draw [pink, line width=0.8mm, opacity=0.7] (0,1.5) -- (0,0);

            \draw[fault] (0,3) -- (0,4);
            \draw (0,3) -- (0,4) node[midway, left] {$e$};
            \draw node at (0,-1.1) {Part 3};
            \draw (-0.7,-0.7) rectangle (0.7,-1.5);
            \draw node at (0,-2) {};
            \node[point, blue, label=left:$w$] at (0,5.5) {};
            \node[point, label=right:$x$, teal] at (0,6) {};
             \node[point, label=above:$s$] at (0,7) {};
            \node[point, red, label=right:$v$] at (0,3) {};
            \node[point, red, label=right:$u$] at (0,4) {};
            \node[point, label=right:$y$, teal] at (0,2.3) {};
            \node[point, label=below:$t$] at (0,0) {};
            \node[point, label=right:$z$] at (0,5) {};

        \end{tikzpicture}

    \end{minipage}

    \caption{ Part 1: $st \dia e$ avoids segment $xy$. Part 2: $st \dia e$ intersects $xy$ below $e$ on segment $xy$. Part 3: $st \dia e$ intersects $xy$ above $e$ on segment $xy$.
    }
    \label{fig:single_edge_fault}
\end{figure}

\subsection{Running Time}
\label{sec:onefaultrunning}

In this section, we analyze the running time of our $\qu$ algorithm. To this end, we introduce a simple method that will also be useful for handling the dual-fault case. We represent the execution of $\qu$ using a \emph{recursion tree}, where each node corresponds to a call to $\qu$. The root is $\qu(t,\ro(\ft(st)))$, and its children are the recursive calls triggered by this invocation. Recursively expanding the children yields the full tree, which succinctly captures the running time of the algorithm. 

The running time of $\qu$ per node is dominated by the loop in Part 3
\ifnum\myone=1
	(see \Cref{query:12}.
\else
	(see Line \ref{query:12}).
\fi 
taking $O(\log_\Ep nW)$ time, and it makes at most one recursive call. Consequently, in our case, the recursion tree degenerates to a path. We will show that this path has length at most 3, resulting in a total of 4 nodes. Hence, the overall running time of the algorithm is $O(\log_{1+\ep} nW)$.

In the recursion tree of $\qu$, every call after the root uses an affected vertex as its first parameter. Since there are only two affected vertices, any attempt to reuse an affected vertex terminates immediately, returning $\infty$. Consequently, the path in the recursion tree can have at most three recursive calls. Thus, we can claim the following lemma.

\begin{lemma}
	The running time of $\qu$ is $O(\log_{1+\ep}nW)$.
\end{lemma}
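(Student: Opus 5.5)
We bound the cost of a single invocation of $\qu$ separately from the shape of the recursion tree, and then multiply the two.

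\emph{Cost per call.} Excluding recursive calls, one call to $\qu(t,\rr)$ does three things. First, it checks the ancestor array; this array has constant size (argued below), so the check takes $O(1)$ time. Second, it tests whether $e \in \pr_\rr$ and finds the segment $xy$ containing $e$; by \Cref{lem:insegment} this takes $\tl(1)$ time. Third, it runs Parts~1--3:
\begin{itemize}
\item Part~1 reads one stored length $|\pr_\xx|$ in $O(1)$ time.
\item Part~2 needs $|\pr_\rr[v,t]|$. Since $\pr_\rr$ is 1-decomposable by \Cref{decompose_node}, this is a difference of stored prefix lengths (equivalently, a sum of at most three shortest-path lengths read from the stored representation), so it takes $O(1)$ time.
\item Part~3 loops over $i=0,\dots,\loge$. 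Each iteration checks whether $e \in \pr_{\rr_i}$, which takes $O(1)$ time by \Cref{lem:insegment} and the LCA structure of \Cref{lem:lca}.
\end{itemize}
So each call costs $O(\log_{\Ep} nW)$, treating the $\tl(1)$ membership test as $O(1)$ via the LCA/dictionary argument.

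\emph{Shape of the recursion tree.} Only Part~2 recurses, and it makes at most one call, on $\ro(\ft(sv))$. The detour children in Part~3 are read directly and never recursed into. Hence every node of the recursion tree has at most one child, and the tree is a path.

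\emph{Length of the path.} Every recursive call takes as its first parameter the endpoint $v$ of the faulty edge $e=(u,v)$ that is farther from the source on the current primary path. Any such $v$ lies in $V(F)=\{u,v\}$, so it is an affected vertex, and there are at most two of them. The first \textbf{if} of $\qu$ returns $\infty$ whenever a first parameter repeats along the current path. Apart from the root parameter $t$, the path can therefore contain at most two distinct affected vertices before a repeat forces termination. This gives at most $1+2+1=4$ calls in total, the last possibly being an immediate return. The same fact shows the ancestor array holds at most three entries, which justifies the $O(1)$ check used above.

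\emph{Conclusion.} Multiplying, the total running time is $4\cdot O(\log_{\Ep} nW)=O(\log_{\Ep} nW)$. The main obstacle is justifying constant-time membership and segment queries on implicitly stored paths; I would take this from \Cref{lem:insegment}. Everything else is counting.
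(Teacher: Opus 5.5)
Your proof is correct and follows the paper's argument. Per-call work is dominated by the $O(\log_{\Ep} nW)$ loop of Part~3, each call makes at most one recursive call so the recursion tree is a path, and because non-root calls use affected vertices and repeats return $\infty$, the path has at most four nodes. Two small remarks on your extra accounting: locating the segment of $e$ costs $O(\log_{\Ep} n)$ by the paper's appendix, not $O(1)$, but it is done once per call and stays within budget; and $\qu$ is only ever invoked on roots, so $|\pr_\rr[v,t]| = |st| - |sv|$ directly.
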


We now proceed to prove the correctness of our algorithm. To this end, we first establish a useful property of our algorithm.

\subsection{Property of our algorithm}

Our algorithm addresses two main challenges: (1) when $\stf$ intersects a segment containing a faulty edge below $e$, and (2) when the detour of $\stf$ starts above $e$ on the same segment. We first present a simple lemma to handle the first case. This lemma is general-purpose, applying even in the presence of two faults, and will be used extensively in the analysis of the dual fault case.

\begin{lemma}
\label{lem:belowe}
	  Let $\rr$ be a node in $\ft(st)$.
	  Let $e= (u,v)$ be the {\bf last} faulty edge on the path $\pr_\rr$ with $u$ closer to $s$. Let $S = seg(e,\pr_\rr)$.  Suppose the fault tolerant path $R = \stf$ intersects $S$ at a vertex $z$ below $e$. Assume there exists a path $Q$ from $z$ to $t$ that also avoids $F$. If the following conditions hold:	
\begin{enumerate}[noitemsep]
	\item $\qu(v, \ro(\ft(sv)))$ returns an $\alpha$-approximation of $|sv \dia F|$,
	\item $|Q|$ is an $\alpha$-approximation of $|R[z,t]|$, and
	\item $|S|$ is a $\beta$-approximation of $|R|$ where $0\le \beta \le 1$ 
\end{enumerate}

then the path of length $|Q| + |S[z,v]| + \qu(v, \ro(\ft(sv)))$ gives an $(\alpha + \beta + \alpha\beta)$-approximation of $|st \diamond F|$.
\end{lemma}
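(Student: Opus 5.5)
We have to show two things. First, the quantity $|Q| + |S[z,v]| + \qu(v,\ro(\ft(sv)))$ is the length of a genuine path avoiding $F$, so it is at least $|\stf|$. Second, it is at most $(1+\alpha+\beta+\alpha\beta)|\stf|$. Throughout, "$\alpha$-approximation" is read in the additive-error sense that the conclusion suggests: $|Q| \le (1+\alpha)|R[z,t]|$, the query value is $\le (1+\alpha)|sv\dia F|$, and $|S| \le \beta|R|$. Then $(1+\alpha)(1+\beta) = 1+\alpha+\beta+\alpha\beta$ is exactly the target factor.

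\textbf{Feasibility.} By condition 1, the query value is at least $|sv\dia F|$ and is realized by some $s$–$v$ path avoiding $F$. Since $e$ is the \emph{last} faulty edge on $\pr_\rr$, and $z$ lies below $e$ on $S \subseteq \pr_\rr$, the subpath $S[v,z] = \pr_\rr[v,z]$ contains no faulty edge. Also $Q$ avoids $F$ by hypothesis. Concatenating the three walks gives an $s$–$t$ walk in $G-F$ of exactly this length, which proves the lower bound $|\stf| \le$ returned value.

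\textbf{Upper bound.} This is the main step. Split $R$ at $z$: $|R| = |R[s,z]| + |R[z,t]|$. Since $z \in S$ and $v \in S$, the segment gives an $F$-avoiding route $R[s,z]\odot S[z,v]$ from $s$ to $v$. Hence
\[
|sv\dia F| \le |R[s,z]| + |S[z,v]| \le |R[s,z]| + |S|.
\]
Combining this with conditions 1 and 2:
\[
\text{value} \le (1+\alpha)|R[z,t]| + |S| + (1+\alpha)\bigl(|R[s,z]| + |S|\bigr) = (1+\alpha)|R| + (2+\alpha)|S|.
\]
Applying condition 3, $|S| \le \beta|R|$, gives $(1+\alpha+2\beta+\alpha\beta)|R|$.

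This overshoots the target by one $\beta|R|$, because $|S[z,v]|$ is counted twice. Removing it is the expected obstacle. The fix is to avoid paying for $S[z,v]$ a second time when bounding $|sv\dia F|$. For that second bound I would instead use
\[
|sv\dia F| \le |R[s,z]| + |\pr_\rr[v,z]|.
\]
Then I would argue that the explicit term $|S[z,v]|$ together with $|Q|$ can be charged against $R[z,t]$ plus a single copy of $|S|$. That is, bound the total by $(1+\alpha)\bigl(|R[s,z]| + |S[z,v]|\bigr) + (1+\alpha)|R[z,t]|$. This amounts to folding the explicit $|S[z,v]|$ into the $(1+\alpha)$-scaled bracket, which costs at most $\alpha|S| \le \alpha\beta|R|$ extra rather than another $\beta|R|$. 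The result is $(1+\alpha)|R| + (1+\alpha)\beta|R|$, exactly the claimed factor.

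If this bookkeeping turns out not to be tight, the fallback is to note that $\beta$ is an $\ep$-scale quantity by \Cref{lem:chechik}, so a constant-factor loss in the $\beta$ term is harmless for the $(1+O(\ep))$ guarantee. In that case I would state the bound with $2\beta$.
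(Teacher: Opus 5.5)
Your first computation is already the paper's proof. The obstacle you then try to remove comes from your reinterpretation of the hypotheses, and the proposed fix is not valid.

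In this paper ``$\alpha$-approximation'' is multiplicative. The introduction defines an $\alpha$-approximate answer as lying between the true value and $\alpha$ times it, and \Cref{lem:k} applies this lemma with $\alpha = 1+(k-1)^3\ep$. So conditions 1 and 2 mean $\qu(v,\ro(\ft(sv))) \le \alpha|sv\dia F|$ and $|Q| \le \alpha|R[z,t]|$. The target is $(\alpha+\beta+\alpha\beta)|R|$, not $(1+\alpha)(1+\beta)|R|$. With that reading, your own bound $|sv\dia F| \le |R[s,z]| + |S[z,v]|$ gives
\[
|Q| + |S[z,v]| + \qu(v,\ro(\ft(sv))) \le \alpha|R[z,t]| + |S[z,v]| + \alpha\bigl(|R[s,z]| + |S[z,v]|\bigr) \le \alpha|R| + (1+\alpha)|S| \le (\alpha+\beta+\alpha\beta)|R|.
\]
This is exactly the paper's chain; the paper phrases the middle step as $|sv\dia F| \le |sz\dia F| + |zv\dia F|$ with $|zv\dia F| \le |S[z,v]|$ and $|sz\dia F| = |R[s,z]|$. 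Your ``overshoot'' $(1+\alpha)|R| + (2+\alpha)\beta|R|$ is precisely this bound with $\alpha$ replaced by $1+\alpha$. So your $2\beta$ fallback is the paper's statement translated into your convention, and nothing is lost.

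The folding step, by contrast, is wrong and should be discarded. The value contains the explicit term $|S[z,v]|$ and, separately, the query term. Under your reading the hypotheses only bound the query term by $(1+\alpha)\bigl(|R[s,z]| + |S[z,v]|\bigr)$, so absorbing the explicit copy into the bracket just deletes one $|S[z,v]|$. In fact, under your reading the $(1+\alpha)(1+\beta)$ bound does not follow from the hypotheses. Suppose conditions 1 and 2 hold with equality, $sv\dia F = R[s,z]\odot S[z,v]$, and $\beta = |S|/|R|$. Then the value equals $(1+\alpha)|R| + (2+\alpha)|S[z,v]|$, which exceeds $(1+\alpha)(1+\beta)|R|$ whenever $|S[z,v]| > \frac{1+\alpha}{2+\alpha}|S|$.

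Your feasibility paragraph is a correct addition. The paper leaves the lower bound implicit, relying on the ``at least the true value'' half of the definition.
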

\ifnum\myone=1

\begin{proof}

 Let $\pa = |Q| + |S[z,v]| +\qu(v,\ro(\ft(sv)))$. We now bound the $\pa$ as follows:

		 \begin{align}
			\pa &=    |Q| + |S[z,v]| +\qu(v,\ro(\ft(sv)))\notag\\
			\intertext{Using the condition of the lemma, $\qu(v,\ro(\ft(sv)))$ returns $\al |sv \dia F|$. Thus, we get:}
			 	& \le |Q| + |S[z,v]| + \al |sv \dia F|\notag\\
			 	\intertext{Using triangle inequality, we get:}
			 	& \le |Q| + |S[z,v]|+ \al (|sz \dia F|+|zv \dia F|)\notag\\
			 	\intertext{$S[z,v]$ also avoids $F$ as $e$ is the last edge on the path $\pr_\rr$. Thus,  $|zv \dia F| \le |S[z,v]|$. Adding it to the above equation, we get:  }
			 	& \le |Q| + |S[z,v]| + \al(|sz \dia F| + |S[z,v]|)\notag\\
			 	\intertext{Since $S[z,v]$ is a sub-segment of $S$, its length is at most $|S|$. Thus: }
			 	& \le |Q| + (1+\al) |S| + \al|sz \dia F| \notag\\
			 	\intertext{Since $R$ is a  fault-tolerant path from $s$ to $t$ passing through $z$, we have $|sz \dia F| = |R[s,z]|$. Using condition (2) $|Q| \le \al |R[z,t]|$ and condition (3) $|S| \le \be |R|$ of the lemma, we get: }
			 	& \le \al |R[z,t]| + (1+\al)\be|R|  + \al|R[s,z]|\notag\\
			 	& = (1+\al)\be|R| + \al(|R[z,t]| + |R[s,z]|)\notag\\
			 	& = (1+\al)\be|R| + \al|R|\notag\\
			 	& = (\al+\be+\al\be)|R| \notag
		\end{align}
This completes the proof of the lemma.	
\end{proof}
\else
	See the full version of the paper below for the proof.
\fi
We now show how to apply the lemma. Suppose we are processing the root $\rr$ of $\ft(st)$, and edge $e$ lies on the segment $S= xy$ of the path $\pr_\rr = st$. Assume that $st \dia e$ intersects $xy$ below $e$ at the vertex $z$. In
\ifnum\myone=1
    \Cref{query:11}
\else
    Line \ref{query:11}
\fi 
we compute
\begin{align}
	\pa &= |\pr_\rr[v,t]|+\qu(v,\ro(\ft(sv)))\notag\\
	\intertext{Since $z$ lies between $v$ and $t$ on that path $\pr_\rr$, we get:}
	&= |\pr_\rr[z,t]| + |\pr_\rr[z,v]| + \qu(v, \ro(\ft(sv)))\notag\\
	\intertext{Since $\pr_\rr[z,v]$ lies entirely in the segment $S$, we have $\pr_\rr[v,z] = S[z,v]$ }
	& = |\pr_\rr[z,t]| + |S[z,v]| +  \qu(v, \ro(\ft(sv)))\notag\\
	\intertext{Setting $Q = \pr_\rr[z,t]$, we get}
	& = |Q| + |S[z,v]| +  \qu(v, \ro(\ft(sv))) \label{eq:one} 
\end{align}

Thus, the path computed by our algorithm in 
\ifnum\myone=1
    \Cref{query:11}
\else
    Line \ref{query:11}
\fi
is the same as the path described in \Cref{lem:belowe}.

\subsection{Correctness}

Consider finding the shortest path from $s$ to $t$ that avoids the edge $e = (u,v)$. 
In the function $\qu(t,\ro(\ft(st)))$, the first parameter is initially $t$, and in subsequent recursive calls it may be an affected vertex. 
Order $t$ and the affected vertices by their distance from $s$ in the graph $G-e$, and let $L$ denote the subsequence of all vertices up to $t$ in this order. 
  If $t$ is the closest vertex to $s$ in $G-e$, $L$ contains only $t$.  In the worst case, the size of $L$ is 3. 
We now prove the following lemma.

\begin{lemma}
\label{lem:k}
Let $p$ be the $k$-th vertex in the sequence $L$ where $k \le |L|$. Then $\qu(p, \ro(\ft(sp)))$ returns a $(1 + k^3\ep)$-approximation of $|sp \dia e|$.
\end{lemma}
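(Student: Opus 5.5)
We will argue by strong induction on $k$, the position of $p$ in $L$. Two facts are used throughout. First, every value returned by $\qu$ is the length of an actual path avoiding $e$ (or $\infty$), so the output is never below $|sp\dia e|$ and only the upper bound needs proof. Second, the recursion-guard check never fires on the calls the analysis relies on. When $p$ recurses on $v$ in Part 2, the path $R=sp\dia e$ passes through $z$ below $e$ and then continues along $\pr_\rr$ from $z$ toward $p$. Hence $v$ is strictly closer to $s$ than $p$ in $G-e$; in the Part 2 case this will be shown via $|sv\dia e|\le |R[s,z]|+|S[z,v]|<|R|$ up to the $\ep$ slack handled below. So $v$ precedes $p$ in $L$, and we may assume the order is strict by the uniqueness and perturbation assumption.

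Now fix $p=L[k]$, let $\rr=\ro(\ft(sp))$ with $\pr_\rr=sp$, and let $R=sp\dia e$. If $e\notin sp$, the output $|sp|$ is exact. Otherwise let $S=xy$ be the segment of $sp$ containing $e$, and split into the three parts used by the algorithm.

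\textbf{Part 1: $R$ avoids $S$.} By \Cref{lem:decompasable}, $R$ is a 1-decomposable path in $G$ that survives in $G-xy$. Therefore $|P(xy)|\le|R|$, the segment child exists, and $P(xy)$ avoids $e\in xy$. The answer is exact.

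\textbf{Part 2: $R$ meets $S$ at some $z$ below $e$.} We apply \Cref{lem:belowe} with $Q=\pr_\rr[z,p]$. This path is a subpath of a shortest path that does not contain $e$, so $|Q|=|sp[z,p]|\le|R[z,p]|$, giving approximation factor $1$. By \Cref{lem:chechik}, $|S|\le\ep|sp|\le\ep|R|$, so $\beta=\ep$. Next we must place $v$ before $p$ in $L$. We have $|sv\dia e|\le|R[s,z]|+|S[z,v]|$. The path $R[z,p]$ is at least $|sp[z,p]|\ge|S[z,v]|$ in spirit, but the comparison must be done directly. Since $z$ is below $v$ on $sp$, we get $|S[v,z]|$ plus $|R[s,z]|$, and $R[z,p]$ has length $\ge |sp[z,p]|$. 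Instead we use $|S[z,v]|\le$ the part of $sp$ from $v$ to $p$, which is at most $|R[z,p]|$ unless $z=p$; when $z=p$, $p=v$ cannot recurse because the algorithm requires $v\ne p$, and in that case $z=v$ and the bound is trivial. So $v=L[j]$ for some $j<k$, and by induction the recursive call gives $\alpha=1+j^3\ep\le1+(k-1)^3\ep$. \Cref{lem:belowe} then yields a factor
\[
\alpha+\ep+\alpha\ep\le 1+(k-1)^3\ep+2\ep+(k-1)^3\ep^2\le 1+k^3\ep,
\]
for $\ep\le1$, since $k^3-(k-1)^3\ge 3$ covers the $2\ep+(k-1)^3\ep^2$ term.

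\textbf{Part 3: $R$ meets $S$ only above $e$.} This is the main obstacle. Here the detour of $R$ starts at some $w\in S$ above $e$ and joins $sp$ below $y$. Let $D$ be this detour and choose $i$ with $(\Ep)^{i-1}<|D|\le(\Ep)^i$. The path $R$ itself belongs to $\mathcal Q$ and is a candidate, so the child $\rr_i$ exists. Its path $P_i=sp\dia e'$ has a detour that starts at $w'$ at or above $w$ and has length at most $(\Ep)^i\le(\Ep)|D|$. Because the detour starts at or above $w$, above $e$, and rejoins below $y$, hence below $e$, the path $P_i$ avoids $e$ as well: its prefix $sp[s,w']$ lies above $e$, its suffix lies below $y$, and its detour avoids $S$. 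For the length bound, write $|P_i|\le|sp[s,w']|+(\Ep)|D|+|sp[\text{join},p]|$. To compare this with $|R|$ we use that $P_i$ is a shortest path avoiding $e'$. The cleanest route is to observe that $sp[s,w]\odot D\odot(\text{tail})$ also avoids $e'$ whenever $w'\le w$; then $|P_i|$ is within $|S|\le\ep|R|$ plus an $\ep|D|$ term of $|R|$, which gives at most $(1+2\ep)|R|\le(1+k^3\ep)|R|$. The delicate points to verify carefully are that the join of $P_i$ lies below $y$ so that $e$ is avoided, and that the suffix comparison costs at most one segment length via \Cref{lem:chechik}.
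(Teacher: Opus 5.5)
Your overall structure (strong induction along $L$, the three parts, and \Cref{lem:belowe} in Part 2) matches the paper's, but two steps do not go through as written.

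\textbf{Ordering step in Part 2.} To apply the induction hypothesis to $\qu(v,\ro(\ft(sv)))$ you need $v$ to precede $p$ in $L$. From $|sv\diamond e|\le |R[s,z]|+|sp[z,v]|$, this requires $|sp[v,z]|\le |sp[z,p]|=|R[z,p]|$. The inequality you actually invoke, $|sp[v,p]|\le |R[z,p]|$, is backwards. Since $sp[v,p]=sp[v,z]\odot sp[z,p]$, we have $|sp[v,p]|\ge |R[z,p]|$, with equality only when $z=v$. Nothing about a generic shortest path prevents $z$ from lying much closer to $p$ than to $v$. The missing idea is that $xy$ is a \emph{segment}, which is exactly what \Cref{lem:vless} exploits:
\begin{itemize}
\item Let $y'$ be the netpoint originating from $p$ nearest to $y$ on $sp[y,p]$, with $(1+\varepsilon)^i\le |sp[y',p]|<(1+\varepsilon)^{i+1}$.
\item Since $z$ lies on the segment, $|sp[z,p]|\ge (1+\varepsilon)^i$.
\item If $|sp[v,z]|>|sp[z,p]|$, then $|sp[v,p]|>2(1+\varepsilon)^i\ge(1+\varepsilon)^{i+1}$ for $\varepsilon\le 1$.
\item This forces a netpoint originating from $p$ strictly between $y'$ and $v$. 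That is impossible, because the segment has no netpoint in its interior and $y'$ was chosen nearest to $y$.
\end{itemize}
Hence $|sv\diamond e|\le |R[s,z]|+|sp[z,p]|=|R|$. Your observation that the recursion guard never fires on the calls the analysis uses is a good point the paper leaves implicit. However, it rests on this same ordering, so it is unsupported until this argument is supplied.

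\textbf{Length bound in Part 3.} You leave this as ``delicate points to verify'', and the route you sketch fails. You split $P_i$ into prefix, detour, and a tail along $sp$ ending at $p$, and bound only the middle piece by $(1+\varepsilon)|D|$. But $P_i$ and $R$ may rejoin $sp$ at unrelated points below $y$, so their tails need not differ by one segment length. Appealing to $P_i$ being shortest avoiding $e'$ does not help either, because $R$ need not avoid $e'$: if $e'$ lies between $w'$ and $w$, then $R[s,w]=sp[s,w]$ uses it. The fix is to use the paper's definition of a detour. It is the whole suffix, $R[w,p]$ (respectively $P_i[w',p]$), and detour children are indexed by that length. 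Then
\[
|P_i|=|sp[s,w']|+|P_i[w',p]|\le |sp[s,w]|+(1+\varepsilon)|R[w,p]|=|R[s,w]|+(1+\varepsilon)|R[w,p]|\le(1+\varepsilon)|R|,
\]
using only that $w'$ is at or above $w$; no segment length is needed.

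\textbf{Minor point.} In Part 2, absorbing $2\varepsilon+(k-1)^3\varepsilon^2$ into $(k^3-(k-1)^3)\varepsilon$ requires $2+(k-1)^3\varepsilon\le 3k^2-3k+1$. This holds for $k\le 3$ and $\varepsilon\le 1$, which is the relevant range, but not merely because $k^3-(k-1)^3\ge 3$.
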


The above lemma implies that $\qu(t,\ro(\ft(st)))$ returns a $(1+27\ep)$ approximation of $|st \dia e|$ as $|L| \le 3$. We prove the lemma by induction on the vertices of $L$. For the base case ($k=1$), without loss of generality, let $u$ be the first vertex in the sequence. The shortest $su$ path avoids the faulty edge $e$, and $\qu(u, \ro(\ft(su)))$ returns $|su|$ exactly, establishing the base case. For the induction step, assume that there are $k-1$ vertices before $p$ in  $L$. We now prove it for the $k$-th vertex $p$. For notational convenience, we relabel $p$ as $t$, as this notation aligns with all our lemmas and discussions above.

 If the faulty edge $e$ does not lie on the shortest path $\pr_\rr$ (where $\rr$ is the root of  $\ft(st)$), $\qu(t, \ro(\ft(st)))$ simply returns the length $|\pr_\rr|$. Hence, assume that $e = (u,v)$ lies on $\pr_\rr$ within the segment $xy$, where $u$ is closer to $s$ than $v$. Let $R = st \dia e$. We first consider the simpler case where $R$ also avoids the segment $xy$.

Let $\xx$ be the segment child of $\rr$ corresponding to the removal of the segment $xy$. The segment child $\xx$ must exist as $R$ itself is one candidate for $\pr_\xx$. Since $\pr_\xx$ avoids all edges of $xy$ and $e \in xy$, it avoids $e$. Moreover, $\pr_\xx$ is a valid $s$ to $t$ path avoiding $e$, so $|\pr_\xx| \ge |R|$. Conversely, $R$ itself is a candidate 1-decomposable path avoiding $xy$ (since $R$ avoids $xy$ entirely) and $\pr_\xx$ is the shortest 1-decomposable path avoiding $xy$. So, $|\pr_\xx| \le |R|$. Thus $ |\pr_\xx| = |R|$. Consequently, in  
\ifnum\myone=1
    \Cref{query:9}
\else
    Line \ref{query:9}
\fi
we set $\pa = |\pr_\xx|$. 

We now analyze the case where $R$ intersects $xy$. If $R$ intersects $xy$ both above and below $e$, we will give preference to its intersection below $e$ in the analysis. The two possibilities are stated below:

\subsubsection{\texorpdfstring{$R$ intersects $xy$ above $e$ at a vertex $z$}{R intersects xy above e at a vertex z}}
\label{item:R intersect above e}

        It means that $R$ does not intersect $xy$ below $e$. Assume that the detour of $R$ starts at vertex $z$, and that $|R[z,t]| \in \left[(\Ep)^{i-1}, (\Ep)^i\right]$ for some $i \ge 1$. There exists a detour child $\rr_i$ of $\rr$ such that the detour of  $\pr_\rri$ starts above $e$, say at the vertex $w$  and $\pr_\rri[w,t]$ has length at most $(1+\ep)^i$.  Since $R$ itself is a candidate for $\pr_\rri$, such a path must exist.  In our construction, we require  the detour of $\pr_{\rr_i}$ to start as close to $x$ as possible on the segment $xy$. Hence, $w$ lies on or above $z$ on $\pr_{\rr}$. In our algorithm, we set $\pa$ to:
        	
        	\begin{align*}
        		\pa &= |\pr_\rri|\\
        			& =|\pr_\rri[s,w]| + |\pr_\rri[w,t]|
        			\intertext{But the detour of $\pr_\rri$ is of length at most $(\Ep)^i$. Thus, $|\pr_\rri[w,t]| \le (\Ep)^i \le (\Ep)|R[z,t]|$. Also, since the detour of $\pr_{\rr_i}$ starts at $w$ on $\pr_\rr$, $\pr_\rri[s,w] = \pr_\rr[s,w]$. Substituting these two terms, we get:}
        			& \le (\Ep) |R[z,t]| + |\pr_\rr[s,w]|
        			\intertext{But $|\pr_\rr[s,w]| \le |\pr_\rr[s,z]|$ as $w$ lies at or above $z$ on $\pr_\rr$. Thus, we get:}
        			& \le (\Ep) |R[z,t]| + |\pr_\rr[s,z]|\\
       			\intertext{But $\pr_\rr[s,z] = R[s,z]$ as the detour of $R$ starts at $z$ (See Part 3 of \Cref{fig:single_edge_fault}). Thus, we get:}
        			& \le (\Ep) |R[z,t]| + |R[s,z]|\\
        			& \le (\Ep) |R|
        	\end{align*}

\subsubsection{\texorpdfstring{$R$ intersects $xy$ below $e$, say at a vertex $z$}{R intersects xy below e at a vertex z}}
\label{item:R_intersect_below_e}
We first show that $|sv \dia e| < |st \dia e|$, implying $v \in L$. Since $st \dia e$ passes through a vertex $z$ on the segment $xy$, and $z$ lies closer to $v$ than to $t$, we expect $|sv \dia e| < |st \dia e|$. We now prove this formally with a general purpose lemma.

\begin{lemma}
\label{lem:vless}
Let $e =(u,v)$ be the edge closest to $y$ on the segment $xy$. Let $u$ be closer to $x$ than $y$ on the segment $xy$.
$R = st \dia e$ intersects $xy$ below $e$, say at vertex $z$. Then $|sv \dia e| \le |st \dia e|$. Similarly, if $R$ intersects $xy$ above $e$, say at $w$, and the subpath $wu$ of $xy$ does not contain any faults, then  $|su \dia F| \le |st \dia F|$.

\end{lemma}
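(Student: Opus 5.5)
The plan is a direct exchange argument: I will exhibit an $s$–$v$ path avoiding the faults whose length is at most $|R|$. The idea is to follow $R$ from $s$ to $z$ and then walk back up the segment from $z$ to $v$.

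\emph{First part.} Let $R = st \dia e$, and suppose $R$ meets $xy$ at a vertex $z$ below $e$, so $z$ lies on $xy[v,y]$.

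The path $R[s,z]$ avoids $e$, because $R$ does. The subpath $xy[v,z]$ also avoids $e$, since it lies strictly below $e$ on the segment.

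Concatenating these gives $R[s,z] \odot xy[z,v]$, an $s$-to-$v$ walk avoiding $e$. Hence
\[
|sv \dia e| \le |R[s,z]| + |\pr_\rr[v,z]|.
\]

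It remains to show $|\pr_\rr[v,z]| \le |R[z,t]|$. The natural tool is the triangle inequality in $G$, using that $\pr_\rr$ is a shortest path, so that $|\pr_\rr[v,z]| = |vz|$.

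Direct bounding is awkward, so I would argue differently. Since $\pr_\rr$ is a shortest $st$ path, $|\pr_\rr[z,t]| = |zt| \le |R[z,t]|$. Also $\pr_\rr[z,t]$ avoids $e$, because it lies below $e$. Therefore, by uniqueness of shortest paths and optimality of $R$, we may assume $R[z,t] = \pr_\rr[z,t]$.

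With that, the comparison becomes $|\pr_\rr[v,z]| \le |\pr_\rr[z,t]|$, and this is not automatic. I expect this inequality to be \textbf{the main obstacle}.

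\emph{Resolving the obstacle.} The segment structure helps here. Lemma~\ref{lem:chechik} gives $|xy| \le \ep \cdot |\pr_\rr[u,t]|$ whenever $xy \neq \{e\}$. If $xy = \{e\}$, then $z = v$ and the claim is trivial.

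However, $\ep$-smallness relative to $|\pr_\rr[u,t]|$ does not by itself bound $|R[s,z]|$. What we actually need is
\[
|R[s,z]| + |\pr_\rr[v,z]| \le |R[s,z]| + |\pr_\rr[z,t]| = |R|.
\]
This follows from $|\pr_\rr[v,z]| \le |\pr_\rr[z,t]|$, which in turn follows from $|\pr_\rr[v,z]| \le |xy| \le \ep |\pr_\rr[u,t]|$ together with a check that the remaining portion below $z$ is long enough.

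I would treat this carefully using the netpoint definition. The right netpoints $v_i$ are chosen from $t$, so the distance from $y$ to $t$ is at least the segment length up to a $(\Ep)$ factor. This gives $|\pr_\rr[z,t]| \ge |\pr_\rr[y,t]| \ge |xy|$ for $\ep \le 1$.

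\emph{Second part.} This is symmetric. Suppose $R$ meets $xy$ at $w$ above $e$, and $xy[w,u]$ contains no faults. Then $R[s,w] \odot xy[w,u]$ is an $s$–$u$ path avoiding $F$.

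The same netpoint argument, now using the left netpoints $u_i$, gives $|xy[w,u]| \le |xy| \le$ the length of the portion of $R$ beyond $w$. Here that portion is the detour, which must reach below $y$ or leave the segment, so it has length at least $|xy|$ by the shortest-path property between its endpoints on $\pr_\rr$. This yields $|su \dia F| \le |R|$.
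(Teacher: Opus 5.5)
\textbf{Verdict: genuine gap.} Your reduction is the same as the paper's. You follow $R$ to $z$ and walk up the segment to $v$, so everything rests on $|\pr_\rr[v,z]|\le|\pr_\rr[z,t]|$. The step you use to get this, $|\pr_\rr[y,t]|\ge|xy|$, is false under the paper's netpoint definition.

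A netpoint is only the \emph{first} vertex to cross a threshold. So $x$ can be a netpoint merely because a heavy edge enters it, and that heavy edge is counted in $|xy|$. For example, take $\ep=1/2$ and let $\pr_\rr$ end with the vertices $a,x,x',y,y_1,t$, with consecutive edge weights $20,100,1,1,3$ and $|\pr_\rr[s,a]|=980$. Then $x=u_{17}=v_4$, $y=v_3$, $y_1=v_0$, and $x'$ is not a netpoint. So $xy$ is a segment with $|xy|=101$ but $|\pr_\rr[y,t]|=4$. (With $e=(x',y)$ this path also violates the bound $|xy|\le\ep|\pr_\rr[u,t]|$ that you quote from \Cref{lem:chechik}, so finishing through that lemma is not safe either.)

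The missing observation is that you only need $|\pr_\rr[v,y]|$, not $|xy|$. Since $u$ precedes $v$ inside the segment, $v\neq x$, so the edge entering $x$ never matters. Let $i$ be maximal with $v_i$ on $\pr_\rr[y,t]$. (If $y=t$, then because $v_0$ is the neighbour of $t$, the segment is a single edge and $z=v=t$.)
\begin{itemize}
\item $|\pr_\rr[y,t]|\ge(\Ep)^i$.
\item $v_{i+1}$, if it exists, lies strictly above $y$, hence at or above $x$, hence strictly above $v$. Therefore $|\pr_\rr[v,t]|<(\Ep)^{i+1}$.
\item Combining, $|\pr_\rr[v,z]|\le|\pr_\rr[v,y]|<\ep(\Ep)^i\le\ep|\pr_\rr[y,t]|\le|\pr_\rr[z,t]|$.
\end{itemize}
This is the paper's contradiction argument written directly: otherwise a $t$-netpoint would be forced into the interior of $xy$.

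Your second part has the same defect. The bound $|xy|\le|R[w,t]|$ is unjustified and can fail when $w$ sits below a heavy top edge of the segment. It is also unnecessary. Since $u$ lies on the shortest path $\pr_\rr[w,t]$, you get $|\pr_\rr[w,u]|\le|\pr_\rr[w,t]|\le|R[w,t]|$. That is exactly the paper's one-line argument, and it involves no netpoints.
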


\ifnum\myone=1
\begin{proof}
By the triangle inequality,
\begin{align*}
|sv \dia e| &\le |sz \dia e| + |zv \dia e|\\
\intertext{Since $z$ lies on the segment $xy$, and there is no faulty edge between $z$ and $v$, we can write $|zv \dia e| =|zv|$}
&\le |sz \dia e| + |zv|\\
\end{align*}

\begin{figure}[ht]
\centering
\begin{tikzpicture}[
    point/.style={circle,draw,fill=white,minimum size=4pt,inner sep=0pt},
    line/.style={thick},
    redpoint/.style={point,fill=red},
    bluepoint/.style={point,fill=blue},
    cyanpoint/.style={point,fill=cyan},
    every label/.style={font=\scriptsize},
    >=Stealth
]


\draw[thin] (0,0) -- (3.5,0);
\draw[thin] (5,0) -- (9,0);

\draw[line,red] (3.5,0) -- (5,0) node[midway, above=1pt, red] {$e$};

\draw [decorate,decoration={brace,amplitude=5pt},red]
  (2,0.5) -- (6.5,0.5) node[midway,yshift=0.4cm,red] {\scalebox{0.7}{Segment $xy$}};
  
\node[cyanpoint,label=below:$z$] at (5.5,0) {};
\node[point,black,label=left:$s$] at (0,0) {};
\node[bluepoint,label=below:$x$] (xp)at (2,0) {};
\node[redpoint,label=above:$u$] at (3.5,0) {};
\node[redpoint,label=above:$v$] at (5,0) {};
\node[bluepoint,label=below:$y$] (xq) at (6.5,0) {};
\node[bluepoint,label=below:$y'$] (xqq) at (7.5,0) {};
\node[point,black,label=right:$t$] at (9,0) {};

\end{tikzpicture}
\caption{Proving $|zv|\leq |tz|$.}

\label{fig:segment_prove}
\end{figure}

We now show that $|zv| \le |tz|$ (See \Cref{fig:segment_prove}). Suppose, for contradiction, that $|zv| > |tz|$. Let $y'$ be the net-point closest to $y$ originating from $t$ in the path $ty$. Thus, $ (\Ep)^{i+1} >  |ty'| \ge (\Ep)^i$ for some $i$. Note that $y$ may be equal to $y'$.  Thus, 
\begin{align*}
|vt|
&= |tz| + |zv| \\
&> 2|tz| \hspace{2cm} (\text{By assumption, $|zv| > |tz|$} )\\
\intertext{Since the path $tz$ contains $y'$, $|tz| > |ty'| \ge  (\Ep)^i$. Thus, we get:}
&\ |vt|\ge 2(1+\ep)^i \\
&\ge (1+\ep)^{i+1} 
\end{align*}
This implies that there must be a net-point originating from $t$ in $y'v$ path. But there is no net point originating from $t$ along the path $y'y$ (by construction) and between $yv$ (since $yv$ is a subpath of segment $xy$).  Thus, we arrive at a contradiction. Hence, $|zv| \le |tz|$. We now bound $sv \dia e$.

\begin{align*}
|sv \dia e|
&\le |sz \dia e| + |zv| \\
\intertext{For our discussion above, $|zv| \le |zt|$. Thus, we get:}
&\le |sz \dia e| + |zt| \\
&\le  |sz \dia e| + |zt \dia e| \\
&= |st \dia e| = |R| \hspace{2cm}(\text{Since $R$ passes through $z$})
\end{align*}

\begin{figure}[ht]
\centering
\begin{tikzpicture}[
    point/.style={circle,draw,fill=white,minimum size=4pt,inner sep=0pt},
    line/.style={thick},
    redpoint/.style={point,fill=red},
    bluepoint/.style={point,fill=blue},
    cyanpoint/.style={point,fill=cyan},
    every label/.style={font=\scriptsize},
    >=Stealth
]



\draw[thin] (0,0) -- (3.5,0);
\draw[thin] (5,0) -- (9,0);

\draw[line,red] (3.5,0) -- (5,0) node[midway, above=1pt, red] {$e$};

\draw[orange, ultra thick, dashed] 
(2.5,0) arc[start angle=180,end angle=0,radius=2.5]
node[midway, above] {$st \dia e$};

\draw [decorate,decoration={brace,amplitude=5pt},red]
  (2,0.5) -- (6.5,0.5) node[midway,yshift=0.4cm,red] {\scalebox{0.7}{Segment $xy$}};

\node[point,black,label=left:$s$] at (0,0) {};
\node[bluepoint,label=below:$x$] (xp)at (2,0) {};
\node[redpoint,label=below:$u$] at (3.5,0) {};
\node[redpoint,label=below:$v$] at (5,0) {};
\node[cyanpoint,label=below:$w$] at (2.5,0) {};
\node[bluepoint,label=below:$y$] (xq) at (6.5,0) {};
\node[point,black,label=right:$t$] at (9,0) {};

\end{tikzpicture}
\caption{$|su \dia e| \leq |st \dia e|$}

\label{fig:su_less_st}
\end{figure}

For the second part of the lemma, using triangle inequality, we know that , 

\begin{align}
|su \dia F| 
&\le |sw \dia F| + |wu \dia F| \label{eq:1} \\
&= |sw \dia F| + |wu| 
\quad \text{(since $wu$ has no fault)} \label{eq:2} \\
\intertext{Since $R$ intersects $xy$ above $e$ at a vertex $w$}
|st \dia F| 
&\le |sw \dia F| + |wt \dia F| \label{eq:3} \\
\intertext{As $wu$ is a subpath of $wt$,}
|wu| 
&< |wt| \le |wt \dia F| \label{eq:4} \\
\intertext{From \eqref{eq:2} and \eqref{eq:4}, we get}
|su \dia F| 
&< |sw \dia F| + |wt \dia F| \\
&\le |st \dia F|.
\end{align}

\end{proof}
\else
See the full version of the paper below for the proof of this lemma.
\fi
Using the above lemma, $v$ lies before $t$ in $L$. In the worst case, $v$ is the vertex preceding $t$ in the sequence. So, using the induction hypothesis, $\qu(v,\ro(\ft(sv)))$ returns a $(1+(k-1)^3\ep)$-approximation of $|sv \dia e|$. We are now ready to bound the answer returned by our algorithm. In 
\ifnum\myone=1
    \Cref{query:11}
\else
    Line \ref{query:11}
\fi
we set  
        \begin{align*}
        	\pa &= |\pr_\rr[v,t]| + \qu(v, \ro(\ft(sv)))\\
        	\intertext{Using \Cref{eq:one} and setting $Q=\pr_\rr[z,t]$, $S = xy$, we get:}
        	& = |Q| + |S[z,v]| +  \qu(v, \ro(\ft(sv)))
        \end{align*}

Thus $\pa$ is of the form described in \Cref{lem:belowe}. We now describe the approximation ratio of each component of this path.

\begin{enumerate}
\item $Q$ : After intersecting segment $xy$ at $z$, the path $R$ must continue along the shortest path $\pr_\rr =st$, so $|\pr_\rr[z,t]| = |R[z,t]|$. Thus, $Q=  \pr_\rr[z,t]$ gives a $1$-approximation of $|R[z,t]|$.

\item $S[z,v]$ : Now, $|S[z,v]| \le |S| =  |xy|$. By \Cref{lem:chechik}, $|xy| \leq \ep|\pr_\rr| \le \ep|R|$ (since $\pr_\rr$ is the shortest $st$ path, $|\pr_\rr| \le R$). Thus, $S = xy$ gives  a $\overbrace{\ep}^{\beta}$-approximation of $|R|$. 

\item $\qu(v, \ro(\ft(sv)))$: Using \Cref{lem:vless}, and the induction hypothesis,  $\qu(v, \ro(\ft(sv)))$ returns a $\overbrace{(1+(k-1)^3\ep)}^{\alpha}$ approximation of $|sv \dia e|$. 
\end{enumerate}

Substituting  $\alpha = 1+(k-1)^3\ep$, and $\beta = \ep$ into \Cref{lem:belowe}, we get $\pa \le (1 + k^3\ep)|R|$.

   This completes the correctness of our algorithm. We claim the following lemma:

   	\begin{lemma}
   	\label{lem:singlelemma}
   		There is a $(1+27\ep)$-approximate single fault tolerant distance oracle with  $\tl(n\sqrt n)$ space and $\tl(1)$ query time.
   	\end{lemma}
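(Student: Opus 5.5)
The proof will assemble the three ingredients already established in this section: the space bound, the query-time bound, and the approximation guarantee.

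\emph{Space.} By \Cref{nodes_in_FT}, each $\ft(st)$ has $O(\logee)$ nodes. By \Cref{decompose_node}, each node stores a 1-decomposable primary path. \Cref{lem:insegment} shows that every such path can be stored in $\tl(\sqrt n)$ space, using the sampled set $L$ whose shortest path trees cost $\tl(n\sqrt n)$ overall, together with the LCA structures of \Cref{lem:lca}. Summing over all $t \in V\setminus\{s\}$ gives $\tl(n\sqrt n)$. We also add the per-node dictionaries and segment-reporting structures, which \Cref{lem:insegment} charges within the same budget.

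\emph{Query time.} Each invocation of $\qu$ does the following:
\begin{enumerate}
\item it checks membership of $e$ in $\pr_\rr$ and locates $seg(e,\pr_\rr)$ in $\tl(1)$ time by \Cref{lem:insegment};
\item it reads the segment child in $O(1)$ time;
\item it scans the $O(\loge)$ detour children, testing for each whether $\pr_{\rr_i}$ avoids $e$ in $\tl(1)$ time.
\end{enumerate}
The recursion tree is a path. Every call after the root has an affected vertex as its destination, and any repetition returns $\infty$, so the path has at most four nodes. The total query time is therefore $\tl(1)$, as argued in \Cref{sec:onefaultrunning}.

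\emph{Correctness.} Every value the algorithm assigns to $\pa$ is the length of a genuine $s$--$t$ path avoiding $e$: the segment child path, the detour child paths that are explicitly checked to avoid $e$, and the concatenation in \Cref{query:11}. For the last of these, $\pr_\rr[v,t]$ lies below the last faulty edge and so avoids it, while the recursive value is inductively an overestimate of $|sv\dia e|$. Hence the output is never below $|st\dia e|$. For the upper bound we apply \Cref{lem:k} with $p=t$. Since $|L|\le 3$, the oracle returns at most $(1+27\ep)|st\dia e|$.

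The only delicate point is the induction in \Cref{lem:k}, and in particular that the recursive call on $v$ falls strictly earlier in the ordering $L$. This is exactly what \Cref{lem:vless} provides, with ties ruled out by the uniqueness-of-shortest-paths perturbation. I will also verify that the repeated-vertex cutoff, which returns $\infty$, never discards the call needed in the case analysis. It does not: that needed call is always to a vertex strictly closer in $G-e$, so it cannot already appear on the current recursion path, whose vertices are all at least as far.
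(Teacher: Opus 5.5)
Your proposal is correct and follows the paper's route: you obtain the lemma by combining \Cref{lem:insegment} for space, the recursion-path argument of \Cref{sec:onefaultrunning} for query time, and \Cref{lem:k} applied to $t$ as the last of at most three vertices in $L$ for the $(1+27\ep)$ bound. You also check two points the paper leaves implicit, and both checks are sound: that every candidate value is the length of a genuine $s$--$t$ path avoiding $e$ (the lower-bound half of the approximation definition), and that the repeated-vertex cutoff never blocks the needed recursive call, since that call always goes to a strictly closer vertex.
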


One could perform a more detailed analysis to achieve a better approximation ratio than in the above lemma. However, we prioritize conciseness, since this approach extends naturally to the dual-fault scenario. We now consider the more general case of two faulty edges.

\section{Extending our approach to 2 faults}

We extend our data structure and algorithm to handle dual faults by revisiting all components, including the $\ft$ structure and the query algorithm. While the overall framework remains similar to the single-fault case, key differences arise when handling two faults. In the single-edge fault setting, a detour of $st \dia e$ (see \Cref{detour_of_a_path}) starts on $st$.  However, for two edge faults, this definition is no longer sufficient. A path that avoids a faulty edge on the $st$ path may itself contain the second faulty edge. Therefore, the notion of a detour must be refined. We extend the definition of detours (from \Cref{detour_of_a_path}) to account for two edge faults.

 \begin{definition}[Detour of a path $R$ avoiding two edges]
 Let $P_1 = st$ and let $e_1$ be the last faulty edge on $P_1$.  Let $P_2$ be a path that avoids $e_1$ but contains the edge $e_2$. Let $R$ avoid $\{e_1,e_2\}$. Let $w$ be the last vertex of $R$ on $\pre(P_1,e_1) \cup \pre(P_2,e_2)$. Then $R[w,t]$ is the detour of $R$ and the detour starts at $w$. 
 \end{definition}
 
In the single fault case, we crucially used detours that start from a segment. If there is only one segment in question, then our \Cref{single_edge_detour} holds.  However, in a dual fault case, we may have to deal with two paths as alluded in the above definition. These two paths may have two different segments. Thus, the detour may start from the segment on the path $P_1$ (that contains the first fault $e_1$) or from the segment on the path $P_2$ (that contains the second fault $e_2$). Thus, we redefine the notion of a detour starting from a segment.

  \begin{definition}[Detour of a path $R$  starting from a segment]
  \label{def:starttwosegment}
  Let $P_1 = st$ and let $e_1$ be the last faulty edge on $P_1$.  Let $P_2$ be a path that avoids $e_1$ but contains the edge $e_2$. Let $R$ avoid $\{e_1,e_2\}$. Let $w$ be the last vertex of $R$ on $\pre(P_1,e_1) \cup \pre(P_2,e_2)$. We say that the detour of $R$ starts on the segment $seg(e_1,P_1)$ if 
  \begin{enumerate}
  	  	\item $w \in seg(e_1,P_1)$
  	  	\item $R[w,t]$ does not intersect with $seg(e_1,P_1) \cup seg(e_2,P_2)$  except at point $w$
  \end{enumerate} 
  Note that the second condition means that the detour $R[w,t]$ does not intersect either $seg(e_1,P_1)$ or $seg(e_2,P_2)$, except at $w$.
  
  Similarly, we say that $R$ starts on $seg(e_2, P_2)$ if in the first condition above $w \in seg(e_2,P_2)$. If $w$ lies both in $seg(e_1, P_1)$ and $seg(e_2, P_2)$, we give preference to the original shortest path $P_1$ and say that the detour starts on the segment $seg(e_1, P_1)$. See \Cref{second_type_of_detour}, \Cref{third_type_of_detour}, \Cref{fourth_type_of_detour} for illustrations.
 \end{definition}

\begin{figure}
\centering
\setlength{\tabcolsep}{4pt}
\begin{minipage}[t]{0.30\textwidth}
    \centering
    \begin{tikzpicture}[scale=0.9, every node/.style={font=\small}]
        \tikzstyle{point} = [circle, fill=black, inner sep=1.2pt]
        \tikzstyle{mainpath} = [thick, black]
        \tikzstyle{detour} = [ultra thick, pink]
        \tikzstyle{edgefail} = [ultra thick, red]
        \tikzstyle{shortcut} = [orange, thick, dashed]

        \draw[mainpath] (0,7)--(0,0);
        \draw[edgefail] (0,4)--(0,3) node[midway,left]{$e_1$};
        \draw[detour] (0,5) arc[start angle=-90,end angle=90,radius=-2];
        \draw[detour] (0,1)--(0,0);
        \draw[detour] (0,5)--(0,7);
        \draw[shortcut] (0,1.5) arc[start angle=-90,end angle=90,radius=1.5];

        \node[point,label=above:$s$] at (0,7){};
        \node[point,label=left:$x$,teal] at (0,6){};
        \node[point,label=left:$y$,teal] at (0,2){};
        \node[point,red] at (0,4){};
        \node[point,red] at (0,3){};
        \node[point,label=right:$p$,teal] at (1.1,4){};
        \node[point,label=below:$q$,teal] at (1.1,2){};
        \node[point,label=right:$w$] at (0,5){};
        \node[point,label=below:$t$] at (0,0){};
        \node[point,label=right:$z$] at (0,1){};
        \node[point,red] at (1.4,3.5){};
        \node[point,red] at (1.35,2.3){};
        
        \draw[edgefail] (1.35,2.3) to[bend right=25] node[midway,right]{$e_2$} (1.4,3.5);
    \end{tikzpicture}
    \subcaption{Detour of $R$  starts from segment $xy$}
    \label{second_type_of_detour}
\end{minipage}
\hfill
\begin{minipage}[t]{0.30\textwidth}
    \centering
    \begin{tikzpicture}[scale=0.9, every node/.style={font=\small}]
        \tikzstyle{point} = [circle, fill=black, inner sep=1.2pt]
        \tikzstyle{mainpath} = [thick, black]
        \tikzstyle{detour} = [ultra thick, pink]
        \tikzstyle{edgefail} = [ultra thick, red]
        \tikzstyle{shortcut} = [orange, thick, dashed]

        \draw[mainpath] (0,7)--(0,0);
        \draw[edgefail] (0,4)--(0,3) node[midway,left]{$e_1$};
        \draw[edgefail] (1.27,2.2) to[bend right=12] node[midway,right]{$e_2$} (1.48,2.8);
        \draw[detour] (0,0.8) arc[start angle=-120,end angle=60,radius=1.4];
        \draw[detour] (0,4.5)--(0,7);
        \draw[shortcut] (0,1.5) arc[start angle=-90,end angle=90,radius=1.5];
        
        \node[point,red] at (0,4){};
        \node[point,red] at (0,3){};
        \node[point,red] at (1.27,2.2){};
        \node[point,red] at (1.48,2.8){};
        \node[point,label=above:$s$] at (0,7){};
        \node[point,label=left:$x$,teal] at (0,6){};
        \node[point,label=left:$y$,teal] at (0,2){};
        \node[point,label=right:$p$,teal] at (1.1,4){};
        \node[point,label=below:$q$,teal] at (0.7,1.7){};
        \node[point,label=right:$w$] at (1.5,3.2){};
        \node[point,label=below:$t$] at (0,0){};
        \draw[detour] (0,0)--(0,0.8);
        \node[point,label=left:$z$] at (0,0.8){};
    \end{tikzpicture}
    \subcaption{Detour of $R$ starts from segment $pq$}
    \label{third_type_of_detour}
\end{minipage}
\hfill
\begin{minipage}[t]{0.30\textwidth}
    \centering
    \begin{tikzpicture}[scale=0.9, every node/.style={font=\small}]
        \tikzstyle{point} = [circle, fill=black, inner sep=1.2pt]
        \tikzstyle{mainpath} = [thick, black]
        \tikzstyle{detour} = [ultra thick, pink]
        \tikzstyle{edgefail} = [ultra thick, red]
        \tikzstyle{shortcut} = [orange, thick, dashed]

        \draw[mainpath] (0,7)--(0,0);
        \draw[edgefail] (0,4)--(0,3) node[midway,left]{$e_1$};
        \draw[edgefail] (1.27,2.2) to[bend right=12] node[midway,right]{$e_2$} (1.48,2.8);
        \draw[detour] (0,5)--(0,7);
        \draw[detour] (0,1)--(0,0);
        \draw[detour] (0,1) arc[start angle=90,end angle=-90,radius=-2];
        \draw[shortcut] (0,1.5) arc[start angle=-90,end angle=90,radius=1.5];

        \node[point,label=above:$s$] at (0,7){};
        \node[point,red] at (0,4){};
        \node[point,red] at (0,3){};
        \node[point,red] at (1.27,2.2){};
        \node[point,red] at (1.48,2.8){};
        \node[point,label=left:$x$,teal] at (0,6){};
        \node[point,label=left:$y$,teal] at (0,2){};
        \node[point,label=right:$w$] at (0,5){};
        \node[point,label=below:$q$,teal] at (0.7,1.7){};
        \node[point,label=right:$p$, teal] at (0,5.5){};
        \node[point,label=below:$t$] at (0,0){};
        \node[point,label=right:$z$] at (0,1){};
    \end{tikzpicture}
    \subcaption{$w$ lies in both segments. However, as per our definition, the detour of $R$ starts from the segment $xy$.}
    \label{fourth_type_of_detour}
\end{minipage}

\caption{Illustration of different types of detours in path $R$ under one or two edge failures. The straight line from top to bottom represents the $st$ path (or path $P_1$), while the dashed path containing edge $e_2$ represents path $P_2$. $e_1$ lies on segment $xy$ of $P_1$ and $e_2$ lies on segment $pq$ of $P_2$. $R$ is represented by the path coloured pink.
}
\label{fig:basic_two_edge_fault}
\end{figure}

\ifnum\myone=1
We now describe the construction of $\ft(st)$ using the function $\mn$ defined in \Cref{alg:constructftgeneral}. This function is quite similar to \Cref{alg:constructft} -- we have highlighted major changes in gray. This function takes one more extra parameter compared to  \Cref{alg:constructft}. We will describe its use later. \Cref{alg:constructft} constructs a tree that has two levels. But \Cref{alg:constructftgeneral} will construct a tree that has three levels -- level 0, level 1 and level 2, where level 2 nodes are the leaves of the tree. We will now describe the construction in detail, starting with level 0 vertices. We make the root of $\ft(st)$ by invoking  $\mn(G,st,\emptyset, 0)$. At the root $\rr$, the primary path is $\pr_\rr = st$. At the root, we make a segment and detour children at level 1, which we describe next.

\begin{algorithm}[ht!]
Make a node $\rr$ with $\pr_\rr = P$ and $\bu(\rr) = B$\\
\If{$depth  < 2$}
{
	\If{ $\bu(\rr)$ is empty}
	{
		\ForEach{segment $xy$ in $\pr_\rr$}
		{
        
            Let $P(xy)$ be the shortest 2-decomposable path $Q_1 \odot e_1 \odot Q_2 \odot e_2 \odot Q_3$
            in the original graph $G$, where each $Q_j$ is a shortest path in $G$, such that $P(xy)$ survives in $H-xy$
            
            \If{$P(xy)$ exists}
            {
            a child of $\rr$ $\leftarrow \mn(H-xy,P(xy), \emptyset, depth+1)$\\
            }\label{line:line7}
            \Comment{\textcolor{blue}{ 1-decomposable detour children}}
            \ForEach{$i = 0$ to $\log_\Ep nW$}
                {
                Let $\mathcal{Q}= \{ P \dia e_1 ~|~ e_1 \in xy~\}$
                
                Let $P_i$ be the path in $\mathcal{Q}$ whose detour starts closest to $x$ on the segment $xy$ and its detour joins below $y$ on $\pr_\rr$ and  the detour length $\le (\Ep)^i$ 
                
            \If{$P_i$ exists (i.e., $\mathcal{Q} \neq \emptyset$)}
            {
                a child of $\rr \leftarrow \mn(H,P_i, \{xy\}, depth+1)$
                }\label{line:line11} 
             }
             \Comment{\textcolor{blue}{ 2-decomposable detour children}}
             \colorbox{gray!20}{%
	 \parbox{\dimexpr\linewidth-18\fboxsep}{
             \ForEach{$i = 0$ to $\log_\Ep nW$}
                {
                Let $\mathcal{L}= \{ st \dia \{e_1,e_2\} ~|~ e_1 \in xy~\ and ~e_2 \in (H-xy)~\}$ \label{line:2-decompose}
                
                Let $P_i$ be the path in $\mathcal{L}$ whose detour starts closest to $x$ on the segment $xy$ and its detour joins below $y$ on $\pr_\rr$ and  the detour length $\le (\Ep)^i$ 
                
            \If{$P_i$ exists (i.e., $\mathcal{L} \neq \emptyset$)}
            {
                a child of $\rr \leftarrow \mn(H,P_i, \{xy\}, depth+1)$
                } 
             }
	    
         }\label{line:line13}
	    }
    }
    }
    
	\Else
	{
	\colorbox{gray!20}{%
	 \parbox{\dimexpr\linewidth-15\fboxsep}{%
		Let $\bu(\rr) = \{xy\}$\\
		\ForEach{segment $pq$ in $\pr_\rr$}
		{
			\ForEach{$i = 0$ to $\log_\Ep nW$}
			{

			\Comment{\textcolor{blue}{2-decomposable detour children}}
			Let $\mathcal{Q}= \{ P \dia F ~|~ \text{$F = \{e,e'\}$ where  $e \in xy$ and $e' \in pq$} \}$

            Let $P_i$ be the path in $\mathcal{Q}$ whose detour starts closest to $x$ on the segment $xy$ and its detour joins below $q$ on $\pr_\rr$ and  the detour length $\le (\Ep)^i$ 
            
            \If{$P_i$ exists}
            {
            	a child of $\rr \leftarrow \mn(H,P_i, \{xy\}, depth+1)$
            }\label{line:line15}
            
            Let $S_i$ be the path in $\mathcal{Q}$ whose detour starts closest to $p$ on the segment $pq$ and its detour joins below $q$ on $\pr_\rr$ and  the detour length $\le (\Ep)^i$  
                
            \If{$S_i$ exists}
            {  
            	a child of $\rr\leftarrow \mn(H, S_i, \{pq\}, depth+1)$
            }	
			} 
		}
	}
	}}

} 	
return $\rr$
\caption{$\mn(H, P, B, depth)$}	
\label{alg:constructftgeneral}
\end{algorithm}
\
 
  \subsection{Level 1}

The level 1 nodes constructed in our algorithm are nearly similar to  \Cref{alg:constructft}. The difference lies in the following step. 
 \begin{enumerate}
 
 \item Segment Child : For a segment $xy$ in $\pr_\rr$, we find the shortest {\bf 2-decomposable} path in the original graph $G$ that survives in $G-xy$.
 
 \item Detour Child: Here we have two types of detour children. 
 
    \begin{enumerate}
     \item 1-decomposable detour child

     One is of the similar type as in \Cref{alg:constructft} where we have stored a detour path $P_i$ whose detour starts as close to as $x$ on the segment $xy$ and joins below the segment $xy$ and its detour length is $\leq (1+\epsilon)^i$. Note that $P_i$, if it exists,  is a 1-decomposable path. If such detour $P_i$ exist,  then we create a child $\rr_i$ of $\rr$ by calling $\mn(G, P_i, \{xy\}, 1)$, set its primary path to $P_i$. We call this detour a 1-decomposable detour as the path $P_i$ is 1-decomposable. Note that in the single fault tolerant distance oracle, we only constructed 1-decomposable detour children and that is why we never gave them an explicit name. But for dual faults, we will construct 1-decomposable as well as 2-decomposable detour children (which we will describe in the next enumeration).
 
    Unlike segment children, $\rr_i$ does not arise by removing a segment from $\pr_\rr$. Specifically, the detour of $\rri$ starts from a segment of $\pr_\rr$. To mark this exception, we set $\bu(\rr_i) = \{xy\}$, indicating that its primary path includes a detour originating in the segment $xy$. 

    \item 2-decomposable detour child

     There will be another kind of detour child which is highlighted by gray in \Cref{alg:constructftgeneral}.  $\mathcal{L}$ contains a set of shortest paths avoiding two edges with one edge lying on the segment $xy$ and another edge can be anywhere in the graph. Let $P_i$ be  path in $\mathcal{L}$  whose detour starts as close to $x$ on the segment $xy$, joins below $y$ on $\pr_\rr$, with detour length at most $(\Ep)^i$. Note that $P_i$, if it exists, is 2-decomposable. We create a child $\rr_i$ of $\rr$ by calling $\mn(G, P_i, \{xy\}, 1)$, set its primary path to $P_i$, and define $\bu(\rr_i)=\{xy\}$.
    \end{enumerate}

 As in the single fault case, there are $O(\loge)$ segment  and $O(\logee)$ detour children at level 1.
 \end{enumerate}

\subsection{Level 2}

For a \emph{segment} level-1 node, we recursively apply the same process used at the root. Since there are at most $O(\loge)$ segment nodes at level 1, each contributing up to $O(\logee)$ children at level 2. Therefore, the total number of level-2 nodes added while processing segment level-1 nodes is $O(\logt)$.

Consider a node $\rr$ at level~1 with $\bu(\rr) = \{xy\}$. By construction, $\pr_\rr$ is a path surviving the failure of an edge in $xy$. To handle a second fault, we iterate through every segment $pq$ on the current path $\pr_\rr$. For each segment $pq$, we consider the set of paths $\mathcal{Q}$ that survive the failure of two edges $\{e_1, e_2\}$ (where $e_1 \in xy, e_2 \in pq$). From this set, we extract two 2-decomposable detours for the next level. Note that these detours are defined with respect to the two segments $xy$ and $pq$, and thus follow the detour starting rules outlined in Definition \Cref{def:starttwosegment}

\begin{enumerate}
    
    \item 
Let $P_i \in \mathcal{Q}$ be the path whose detour starts \textbf{closest to $x$} on the segment $xy$ and its detour joins below $q$ on $\pr_\rr$ and  the detour length $\le (\Ep)^i$ . This path handles detour that start from the segment $xy$. We create a child $\rr_i$ of $\rr$ by invoking $\mn(G, P_i, \{xy\}, 2)$.

    \item Let $S_i \in \mathcal{Q}$ be the path whose detour starts \textbf{closest to $p$} on the segment $pq$ and its detour joins below $q$ on $\pr_\rr$ and  the detour length $\le (\Ep)^i$ . This path handles the  detour that starts in the segment $pq$. We create a child $\rr_i$ of $\rr$ by invoking $\mn(G, S_i, \{pq\}, 2)$. Here, we update the set to $\bu(\rr_i) = \{pq\}$ to explicitly record that the path depends on the detour originating from the segment $pq$.

\end{enumerate}

For each segment $pq \in \pr_\rr$, we create at most $O(\loge)$ children at level 2 of $\ft(st)$ (from both cases (1) and (2)). Since $\pr_\rr$ contains $\loge$ segments, each detour node $\rr$ contributes $O(\logee)$ children. With $O(\logee)$ detour nodes coming from both 1-decomposable and 2-decomposable children at level 1, the total number of level-2 nodes arising from detour nodes is $O(\logf)$. We thus claim the following:

\begin{lemma}
	There are $O(\logf)$ nodes in $\ft(st)$.
\end{lemma}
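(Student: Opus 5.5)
The plan is a level-by-level count of the nodes of $\ft(st)$, using the fact that \Cref{alg:constructftgeneral} only recurses while $depth<2$. Hence the tree has exactly three levels, and the nodes at level 2 are leaves. Throughout, the bound that drives everything is that any primary path has $O(\loge)$ netpoints. Therefore it has $O(\loge)$ segments, since a path has weight at most $nW$ and the netpoint definition gives $O(\log_\Ep |P|)$ netpoints.

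Level 0 is the single root. For level 1, the root takes the branch $\bu(\rr)=\emptyset$, so for each of its $O(\loge)$ segments $xy$ it creates the following children:
\begin{itemize}
\item at most one segment child;
\item at most one 1-decomposable detour child for each $i\in[0,\loge]$;
\item at most one 2-decomposable detour child for each such $i$.
\end{itemize}
This gives $O(\loge)$ segment children and $O(\logee)$ detour children at level 1.

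For level 2, I split on the type of the level-1 parent.

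\emph{Segment children.} A segment child $\xx$ has $\bu(\xx)=\emptyset$, so it runs the same branch as the root on its own primary path, which again has $O(\loge)$ segments. It therefore contributes $O(\logee)$ children. Over the $O(\loge)$ segment children this totals $O(\logt)$.

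\emph{Detour children.} A detour child $\rr_i$ has $\bu(\rr_i)=\{xy\}$ and takes the else-branch. For each of the $O(\loge)$ segments $pq$ of $\pr_{\rr_i}$ and each $i\in[0,\loge]$, it creates at most two children, namely $P_i$ and $S_i$. So each detour child contributes $O(\logee)$ children. Over the $O(\logee)$ detour children this totals $O(\logf)$.

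Summing, $1+O(\logee)+O(\logt)+O(\logf)=O(\logf)$.

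The only step needing care is confirming that every primary path, including those of level-1 nodes that are not shortest paths in $G$, still has only $O(\loge)$ segments. This follows because the netpoint bound depends only on the path's total weight being at most $nW$, not on the path being a shortest path.
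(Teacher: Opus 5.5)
Your proposal is correct and follows the paper's own argument. Both count level by level: $O(\loge)$ segment children and $O(\logee)$ detour children at level 1, then $O(\logt)$ level-2 nodes under the segment children and $O(\logf)$ under the detour children (at most two per segment $pq$ and index $i$). Your added check, that every primary path has only $O(\loge)$ segments because its weight is $O(nW)$ even when it is not a shortest path, is a step the paper uses without stating it.
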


As in the single fault case, we claim that the path in each node of $\ft(st)$ is 2-decomposable and we can represent each node of $\ft(st)$ in $\tl(\sqrt n)$ space. Thus, we claim the following lemma.

\begin{lemma}
    For each $t$ ,  $\ft(st)$ can be represented using $\tl(\sqrt n)$ space. Thus, the total space taken by $\ft(st)$ overall $t \in V \setminus \{s\}$ is $\tl(n\sqrt n)$. Moreover, the time taken to find if $e \in \pr_\rr$ is $\tl(1)$ and we can also report the segment of $\pr_\rr$ that contains $e$ in $\tl(1)$ time.
\end{lemma}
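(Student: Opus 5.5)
The plan mirrors the single-fault argument behind \Cref{lem:insegment}. There are three steps: show every primary path in $\ft(st)$ is 2-decomposable, use the sampled set $L$ to store each node in $\tl(\sqrt n)$ space, and reuse the LCA machinery of \Cref{lem:lca} for membership and segment queries.

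\textbf{Step 1: 2-decomposability.} I go through the node types of \Cref{alg:constructftgeneral}.
\begin{itemize}[noitemsep]
\item The root stores $st$, which is 0-decomposable.
\item A segment child (at level 1, or at level 2 below a level-1 segment child) stores $P(xy)$. This path is chosen among 2-decomposable paths of $G$, so it is 2-decomposable by construction.
\item A 1-decomposable detour child stores some $P\dia e_1$. Here the endpoints of $P$ are $s,t$, since every primary path runs from $s$ to $t$. By \Cref{lem:decompasable} with $k=1$, this path is 1-decomposable, hence 2-decomposable.
\item Every remaining detour child (paths in $\mathcal{L}$, and the $P_i$, $S_i$ at level 2) is a shortest $s$-$t$ path in $G-\{e,e'\}$. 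By \Cref{lem:decompasable} with $k=2$, it is 2-decomposable.
\end{itemize}

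\textbf{Step 2: storage.} Write each primary path as $Q_1\odot e_1\odot Q_2\odot e_2\odot Q_3$. Each $Q_j$ with at most $\sqrt n$ edges is stored explicitly, together with a hash dictionary of its edges. For each $Q_j$ with more than $\sqrt n$ edges, with high probability it contains a sampled vertex $z\in L$. I store only the pairs $(a,z)$ and $(z,b)$, and recover $Q_j[a,z]$ and $Q_j[z,b]$ from the stored shortest-path tree $T_z$. A node therefore costs $O(\sqrt n)$ words. A union bound over the polynomially many relevant subpaths makes the high-probability claim hold simultaneously for all of them. By the preceding lemma there are $O(\logf)$ nodes per tree. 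The total is $\tl(\sqrt n)$ per $t$, and $\tl(n\sqrt n)$ over all $t$, plus $\tl(n\sqrt n)$ for the trees $T_z$, $z\in L$, with their LCA structures.

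\textbf{Step 3: queries.} To test whether $e\in\pr_\rr$, I check the two explicit edges $e_1,e_2$ and then each $Q_j$.
\begin{itemize}[noitemsep]
\item For a short $Q_j$, one dictionary lookup.
\item For a long $Q_j$ with $e=(u,v)$, check whether $e\in T_z$. If so, answer with $O(1)$ ancestor tests via LCA, e.g.\ $\mathrm{LCA}(a,u)=u$ and $\mathrm{LCA}(a,v)=v$, and similarly for $b$.
\end{itemize}
This takes constant time per piece, so $O(1)$ per node.

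To report the segment containing $e$, I store at the node the $O(\loge)$ netpoints of $\pr_\rr$ with their prefix distances $|\pr_\rr[s,\cdot]|$. Once $e$ is located in a piece $Q_j$, its distance from $s$ along $\pr_\rr$ is the offset of $Q_j$ plus a tree distance in $T_z$, which comes from stored depths. A binary search over the netpoint distances then gives the segment in $O(\log\loge)$ time, which is $\tl(1)$.

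The main obstacle is the segment reporting for implicitly stored long pieces. I have to make sure that distances along $Q_j[a,z]$ can be read off $T_z$ as $|d_z(a)-d_z(u)|$, which holds because $u$ lies on the tree path from $a$ to $z$. The other thing to check is that the union bound on $L$ hitting every long piece still holds with the larger ($O(\logf)$) number of nodes; this only changes logarithmic factors.
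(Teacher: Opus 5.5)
Your proposal is correct, and for the space bound and the membership test it follows the paper's route: 2-decomposability, the $\sqrt n$ threshold with the sampled set $L$ and the trees $T_z$, and the ancestor/LCA check. The paper only asserts 2-decomposability ``as in the single fault case.'' Your node-by-node check fills that in. Like the paper, it takes every $P\dia e_1$ and every two-fault detour path to be an $s$--$t$ path computed in $G$, as the comment in the single-fault algorithm specifies.

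Where you genuinely differ is segment reporting.

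\emph{The paper's method (Appendix~B).} It scans all $O(\loge)$ segments. For each pair of netpoints $x,y$ it decides, by dictionary lookups or ancestor queries in $T_z$, whether $e$ lies between them. This needs a case split on whether $x$ and $y$ lie on the same decomposed piece $Q_j$ or on different pieces. It costs $O(\loge)$ time and uses only tree-order information, no distance arithmetic.

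\emph{Your method.} You store the prefix distances of the netpoints and the offsets of the pieces. You compute the position of $u$ along $\pr_\rr$ as an offset plus $d_z(a)-d_z(u)$, or $|Q_j[a,z]|+d_z(u)$ on the other half, and then binary search.

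\emph{What each buys.} Your version removes the case analysis and runs in $O(\log\loge)$ time. The cost is a few extra stored numbers per node. It also relies on prefix distances being strictly increasing along the path, which holds because weights are at least $1$, so a distance determines a position. Both approaches meet the $\tl(1)$ bound.

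Two minor points. For explicitly stored short pieces, also keep each vertex's prefix distance in the dictionary; you implicitly need this. For the high-probability claim, it is cleaner to union-bound over the at most $n^2$ shortest paths of $G$, which are unique by the perturbation assumption. Every $Q_j$ is one of these paths, so the bound does not depend on the number of nodes $O(\log^4_{1+\ep} nW)$.
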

\else
	In the full version of the paper below, we show how to extend our approach for two faults. Due to lack of space, we refer the reader to the full version of the paper for the construction of $\ft$ data-structure, the query algorithm, running time as well as correctness.
\fi

\section{Query Algorithm}
	\begin{algorithm}[t]
	\DontPrintSemicolon
	\colorbox{gray!20}{%
	 		\parbox{\dimexpr\linewidth-15\fboxsep}{%
	 \If{any ancestor function call of $\quDtwo$ had $t$ as the first parameter and $\rr$ is not a segment child}
	{
		return $\infty$
	}
	}}
	
	\If{$\pr_\rr$ does not contain any faulty edge}{
		\Return $|\pr_\rr|$
	}
	\Else{
		$\pa \leftarrow \infty$
		
		Let $e = (u,v)$ be the last faulty edge on $\pr_\rr$, and let $xy \in \pr_\rr$ be the segment containing $e$\\
		\Comment{\textcolor{blue}{ Part 1: $st \dia F$ avoids segment $xy$ }}
		Let $\xx$ be the child of $\rr$ that does not contain $xy$\\
		$\pa \leftarrow \min\{\pa,\quDtwo(t,\xx)\}$
		\label{querynormal:8}

		\Comment{\textcolor{blue}{ Part 2: $st \dia F$ intersects segment $xy$ below $e$ }}
		\If{$v \neq t$}
		{
		
		$\pa \leftarrow \min\{\pa,  \quDtwo(v, \ro(\ft(sv))) + |\pr_\rr[v,t]|\}$
		\label{querynormal:10}
		}
		
		\Comment{\textcolor{blue}{ Part 3: $st \dia F$ intersect segment $xy$ above $e$}}
        \colorbox{gray!20}{%
	 		\parbox{\dimexpr\linewidth-15\fboxsep}{%
        $j \leftarrow 1$}}
        \While{ $j \le 2$}
        {
            \ForEach{$i = 0$ to $\log_\Ep n$ }{
			\colorbox{gray!20}{%
	 		\parbox{\dimexpr\linewidth-15\fboxsep}{Let $\rr_i$ be the $j$-decomposable detour child of $\rr$ such that the  detour of $\pr_{\rr_i}$ starts on $xy$ and joins below $xy$ and has length $ \leq (\Ep)^i$ \label{line:12}}}
			
			\colorbox{gray!20}{%
	 		\parbox{\dimexpr\linewidth-15\fboxsep}{\If{$\rri$ exists and $\pr_{\rr_i}$ avoids $F$}
			{
				    $\pa \leftarrow \min\{\pa, |\pr_{\rr_i}|\}$%
                    \label{qn:15}%
			}}}
			
			\colorbox{gray!20}{%
	 		\parbox{\dimexpr\linewidth-15\fboxsep}{\ElseIf{$\pr_{\rr_i}$ contains a faulty edge of $F$} 
			{   \label{line:15}
				$\pa \leftarrow \min\{\pa, \quDone(t, \rr_i)\}$
			}}}
            }
            \colorbox{gray!20}{%
	 		\parbox{\dimexpr\linewidth-15\fboxsep}{
            $j \leftarrow j+1$}}
			}
		}
	\Return $\pa$
	\caption{$\quDtwo(t, \rr$)}
	\label{querynormal}
\end{algorithm}

Given a destination vertex $t$ and a fault set $F$, we execute the \Cref{querynormal}, $\quDtwo(t, \ro(\ft(st)))$, assuming that $F$ is implicitly known. We now compare $\quDtwo$ with our single fault $\qu$ algorithm. The two algorithms are almost identical, except for a few modifications, which we have highlighted in gray in the pseudocode of $\quDtwo$.

We examine the first {\em if} condition in $\quDtwo$. Consider the recursion tree of our query algorithm with the root representing $\quDtwo(t,\ro(\ft(st)))$. Note that  $\quDtwo$ can recursively call itself and may also call  $\quDone$. Thus, the recursion tree may contain both $\quDtwo$ and $\quDone$. However,  $\quDone$ makes calls to $\quDtwo$ only. Thus, each call of $\quDone$ is necessarily followed by $\quDtwo$ in the recursion tree. Thus, we can just focus on the number of calls to $\quDtwo$ on any path in the recursion tree.

Unlike $\qu$, the procedure $\quDtwo$ may be invoked again with the same parameter, say $t$, again. 
\ifnum\myone=1
    \Cref{querynormal:8}
\else
    Line \ref{querynormal:8}
\fi
 triggers such a call. During this recursion, the algorithm only moves to a segment child in the $\ft(st)$ tree. After at most two such recursive calls, the execution reaches a leaf of $\ft(st)$. At that point, the algorithm cannot invoke $\quDtwo$ again with parameter $t$, and every subsequent recursive call must use a new affected vertex. We maintain the invariant that once $\quDtwo$ moves to a new vertex in a path in the recursive tree, it never invokes $\quDtwo$ again with the same parameter $t$. The first {\em if} condition in $\quDtwo$ enforces this invariant.

 The other change is as follows: In Part 3 of $\quDtwo$, we find both 1-decomposable and 2-decomposable detour children of $\rr$. Let us call this node $\rri$. If the primary path in $\rr_i$ avoids $F$, then we process it in a way similar to what was done in $\qu$. However, since there are two faults, it may be the case that $\pr_{\rr_i}$ contains the second fault of $F$. This is the {\bf if} condition in 
 \ifnum\myone=1
    \Cref{line:15}.
\else
    Line \ref{line:15}.
\fi If this condition is satisfied, then we process the node $\rr_i$ in $\ft(st)$ using the function $\quDone$. This completes the description of $\quDtwo$. We now describe $\quDone$.

\subsection{\texorpdfstring{Our algorithm $\quDone$}{Our algorithm quDone}}	
	\begin{algorithm}[t]
       	Let us assume that $\pr_{\rr_i}$ contains $e' = (u',v')$ in a segment $pq$

			\Comment{\textcolor{blue}{$st \dia F$ does not intersect $pq$ and or the detour of $\stf$ starts on the segment $xy$}}
            $\fpa \leftarrow \infty$
            
			\ForEach{$j = 0$ to $\log_\Ep n$}{
            
			Let $\xx_j$ be the detour child of $\rr_i$ such that the detour of $\pr_{\xx_j}$ starts on segment $xy$ at, say vertex $w$, and joins below the segment $pq$ and the detour length is  $\le (\Ep)^j$\\
			Let $r$ be the vertex nearest to $w$ in the segment $xy$ where $r  \in \{u,u'\}$\\
			$\fpa \leftarrow \min\{ \fpa, |\pr_{\xx_j}[w,t]| + |\pr_{\rr_i}[w,r| + \quDtwo(r,\ro(\ft(sr)))\}$\\
			\label{querydetourone:5}

			}
			\Comment{\textcolor{blue}{$\stf$ intersects $pq$ below $e'$}}
			\If{$v' \neq t$}
			{
			$\spa \leftarrow \quDtwo(v', \ro(\ft(sv'))) + |\pr_{\rr_i}[v',t]|$\\
			\label{querydetour:6}
			}
		\Comment{\textcolor{blue}{The detour of $\stf$ starts on $pq$ above $e'$}}
        $\tpa \leftarrow \infty$
		\ForEach{$j = 0$ to $\log_\Ep n$}{
			
			Let $\yy_j$ be the detour child of $\rr_i$ such that the detour of $\pr_{\yy_j}$ starts on segment $pq$, say at vertex $w'$ and joins below $q$ with detour length $\le (\Ep)^j$\\
			$\tpa \leftarrow \min\{ \tpa, \quDtwo(u', \ro(\ft(su'))) + |\pr_{\rr_i}[u',w']| + |\pr_{\yy_j}[w',t]|\}$\\
			\label{querydetour:9}

			}
		\Return $\min\{\fpa,\spa,\tpa\}$
	\caption{\quDone($t, \rr_i$)}
	\label{querydetourone}
\end{algorithm}

	Let us now look at the case when we call $\quDone(t,\rr_i)$. To this end, let us first set up some notations. Let $\rr$ be the root of $\ft(st)$ and $e$ be the last edge of $F$ on $\pr_\rr$. Let $e$ lie on segment $xy$ on $\pr_\rr$.  $\rr_i$ is a detour node and the detour of $\pr_{\rr_i}$ starts above $e$ on the segment $xy$ on the $st$ path. We call $\quDone(t,\rr_i)$ only if $\pr_{\rr_i}$ contains the second fault edge from $F$. So, let us assume that $e' = (u',v')$ is this faulty edge on $\pr_{\rr_i}$. Let $pq$ be the segment of $\pr_{\rr_i}$ on which $e'$ lies. As in $\quDtwo$, we now check how $\stf$ interacts with segment $pq$. There are three cases, and we go over them.

	\begin{enumerate}
		\item $\stf$ does not intersect with $pq$ or its detour starts in the segment $xy$. (See \Cref{Not_intersecting_with_pq} and \Cref{Intersecting_after_q})

		While processing $\rr_i$, we constructed detour children whose primary path's detour starts from the segment $xy$. We will later show that one of them, say $\xx_j$, yields a good approximation of $|\stf|$. Specifically, in 
         \ifnum\myone=1
            \Cref{querydetourone:5}
        \else
            Line \ref{querydetourone:5}
        \fi
        , we identify the vertex $w$ on the segment $xy$ using $\pr_{\xx_j}$, locate the nearest affected vertex $r$ (where $r$ may be either $u$ or $u'$) on $xy$, and then invoke $\quDtwo(r, \ro(\ft(sr)))$. We will subsequently prove that the path computed in 
        \ifnum\myone=1
            \Cref{querydetourone:5}
        \else
            Line \ref{querydetourone:5}
        \fi provides a good approximation of $|\stf|$.

	\item $\stf$ intersects $pq$ below $e'$. (See \Cref{Intersecting_below_e_2})
		
	This case parallels Part~2 of the procedure $\quDtwo$. The only difference is that $\quDtwo$ locates the affected vertex $v$ on the shortest $st$ path, whereas here we locate the affected vertex $v'$ in 
    \ifnum\myone=1
            \Cref{querydetour:6}
        \else
            Line \ref{querydetour:6}
        \fi along the primary path $\pr_{\rr_i}$.

    \begin{figure}
    \centering
    \begin{minipage}[t]{0.25\textwidth}
        \centering
        \begin{tikzpicture}[scale=0.8, every node/.style={font=\small}]
        \tikzstyle{point} = [circle, fill=black, inner sep=1.5pt]

        \draw[thick] (0,7) -- (0,0);

        \draw[red, ultra thick] (0,3.5) -- (0,2.5) node[midway, left] {\scriptsize $e$};
        \draw[magenta, line width=0.8mm, opacity=0.7] (0,4.2) arc[start angle=-90, end angle=90, radius=-1.5]node[pos=0.5, below=14, sloped, black, rotate=180] {\scalebox{0.7}{$\stf$}};

        \draw[pink, line width=0.8mm, opacity=0.7] (0,1.2) -- (0,0);
        \draw[pink, line width=0.8mm, opacity=0.7] (0,4.2) -- (0,7);
        \draw[orange, thick] (0,1.5) arc[start angle=-90, end angle=90, radius=1.5];
        \draw[red, ultra thick] (1.4,2.4) to[bend right=18] node[midway, right] {\scriptsize $e'$} (1.4,3.5);
        
        \node[point, label=below:{\scalebox{0.7}{$q$}} ,teal] at (1.1,2) {};
        \node[point, magenta,label=right:{\scalebox{0.7}{$w$}} ] at (0,4.2) {};
        \node[point, red, label=right:{\scalebox{0.7}{$v$}} ] at (0,2.5) {};
        \node[point, red, label=right:{\scalebox{0.7}{$u$}} ] at (0,3.5) {};
        \node[point, red, label=right:{\scalebox{0.7}{$v'$}} ] at (1.4,2.4) {};
        \node[point, red, label=right:{\scalebox{0.7}{$u'$}} ] at (1.4,3.5) {};
        \node[point, label=right:{\scalebox{0.7}{$p$}}, teal] at (1.1,4) {};
        \node[point, teal, label=left:{\scalebox{0.7}{$x$}}] at (0,6) {};
        \node[point, teal, label=left:{\scalebox{0.7}{$y$}}] at (0,2) {};
        \node[point, label=above:$s$] at (0,7) {};
        \node[point, label=below:$t$] at (0,0) {};
        \end{tikzpicture}
        \subcaption{\scalebox{0.9}{$\stf$ does not intersect $pq$}}
        \label{Not_intersecting_with_pq}
    
    \end{minipage}
    \hfill
    \begin{minipage}[t]{0.25\textwidth}
        \centering
        \begin{tikzpicture}[scale=0.8, every node/.style={font=\small}]
        \tikzstyle{point} = [circle, fill=black, inner sep=1.5pt]
        \tikzstyle{segment} = [thick]

        \draw[thick] (0,7) -- (0,0);

        \draw[segment] (0,6) -- (0,2);
        
        \draw[red, ultra thick] (0,4) -- (0,3) node[midway, left] {\scriptsize $e$};
        \draw[orange, thick] (0,1.5) arc[start angle=-90, end angle=90, radius=1.75];

        \draw[magenta, line width=0.8mm, opacity=0.7]
        (0.7,1.6) .. controls (1.2,3) and (0.5,4.2) .. (0,4.7)node[pos=0.5, below=-2, sloped, black, rotate=360] {\scalebox{0.7}{$\stf$}};

        \draw[pink, line width=0.8mm, opacity=0.7] (0,1.5) -- (0,0);

        \draw[pink, line width=0.8mm, opacity=0.7] (0,4.5) -- (0,7);

        \node[point, red, label=left:{\scalebox{0.7}{$v$}} ] at (0,3) {};
        \node[point, red, label=left:{\scalebox{0.7}{$u$}} ] at (0,4) {};
        \node[point, magenta,label=left:{\scalebox{0.7}{$w$}} ] at (0,4.7) {};
        \node[point, label=right:{\scalebox{0.7}{$p$}}, teal] at (1.1,4.6) {};
        \node[point, teal, label=left:{\scalebox{0.7}{$x$}}] at (0,6) {};
        \node[point, teal, label=left:{\scalebox{0.7}{$y$}}] at (0,2) {};
        \node[point, label=below:{\scalebox{0.7}{$q$}},teal] at (1.1,1.9) {};
        \draw[red, ultra thick] (1.6,2.5) to[bend right=20] node[midway, right] {\scriptsize $e'$} (1.6,4);
        \node[point, red, label=right:{\scalebox{0.7}{$u'$}} ] at (1.6,4) {};
        \node[point, red, label=right:{\scalebox{0.7}{$v'$}} ] at (1.6,2.5) {};
         \node[point, label=above:$s$] at (0,7) {};
        \node[point, label=below:$t$] at (0,0) {};
        \end{tikzpicture}
        \subcaption{\scalebox{0.9}{$\stf$ does not intersect $pq$}}
        \label{Intersecting_after_q}
    \end{minipage}
    \hfill
        \hfill
    \begin{minipage}[t]{0.24\textwidth}
        \centering
        \begin{tikzpicture}[scale=0.8, every node/.style={font=\small}]
        \tikzstyle{point} = [circle, fill=black, inner sep=1.5pt]
        \tikzstyle{segment} = [thick]

        \draw[thick] (0,7) -- (0,0);

        \draw[segment] (0,6) -- (0,2);
        
        \draw[red, ultra thick] (0,4) -- (0,3) node[midway, left] {\scriptsize $e$};
        \draw[orange, thick] (0,1.5) arc[start angle=-90, end angle=90, radius=1.75];

        \draw[magenta, line width=0.8mm, opacity=0.7]
        (1.4,2.2) .. controls (1.2,3) and (0.5,4.2) .. (0,4.5)node[pos=0.5, below=-2, sloped, black, rotate=360] {\scalebox{0.7}{$\stf$}};

        \draw[pink, line width=0.8mm, opacity=0.7] (0,1.5) -- (0,0);

        \draw[pink, line width=0.8mm, opacity=0.7] (0,4.5) -- (0,7);

        \node[point, label=right:{\scalebox{0.7}{$p$}}, teal] at (1.1,4.6) {};
        \node[point, teal, label=left:{\scalebox{0.7}{$x$}}] at (0,6) {};
        \node[point, teal, label=left:{\scalebox{0.7}{$y$}}] at (0,2) {};
        \node[point, label=below:{\scalebox{0.7}{$q$}},teal] at (1.1,1.9) {};
        \draw[red, ultra thick] (1.6,2.5) to[bend right=20] node[midway, right] {\scriptsize $e'$} (1.6,4);
        \node[point, red, label=left:{\scalebox{0.7}{$v$}} ] at (0,3) {};
        \node[point, magenta,label=left:{\scalebox{0.7}{$w$}} ] at (0,4.5) {};
        \node[point, red, label=left:{\scalebox{0.7}{$u$}} ] at (0,4) {};
        \node[point, red, label=right:{\scalebox{0.7}{$u'$}} ] at (1.6,4) {};
        \node[point, red, label=right:{\scalebox{0.7}{$v'$}} ] at (1.6,2.5) {};
        \node[point, label=above:$s$] at (0,7) {};
        \node[point, label=below:$t$] at (0,0) {};
        \end{tikzpicture}
        \subcaption{\scalebox{0.9}{$\stf$ intersects $pq$ below $e^{'}$}}
        \label{Intersecting_below_e_2}
    \end{minipage}
    \hfill
    \begin{minipage}[t]{0.24\textwidth}
        \centering
        \begin{tikzpicture}[scale=0.8, every node/.style={font=\small}]
        \tikzstyle{point} = [circle, fill=black, inner sep=1.5pt]
        \tikzstyle{segment} = [thick]

        \draw[thick] (0,7) -- (0,0);

        \draw[segment] (0,6) -- (0,2);
        
        \draw[red, ultra thick] (0,4) -- (0,3) node[midway, left] {\scriptsize $e$};
        \draw[orange, thick] (0,1.5) arc[start angle=-90, end angle=90, radius=1.5];
        \draw[magenta, line width=0.8mm, opacity=0.7] (0,0.8) arc[start angle=-120, end angle=68, radius=1.55]node[pos=0.5, below=-2, sloped, black, rotate=360] {\scalebox{0.7}{$\stf$}};
        \draw[pink, line width=0.8mm, opacity=0.7] (0,0.8) -- (0,0);
        \draw[pink, line width=0.8mm, opacity=0.7] (0,4.5) -- (0,7);

        \node[point, label=below:{\scalebox{0.7}{$q$}},teal] at (0.7,1.7) {};
        \node[point, label=right:{\scalebox{0.7}{$p$}}, teal] at (1.1,4) {};
        \node[point, teal, label=left:{\scalebox{0.7}{$x$}}] at (0,6) {};
        \node[point, teal, label=left:{\scalebox{0.7}{$y$}}] at (0,2) {};
        \draw[red, ultra thick] (1.27,2.2) to[bend right=12] node[midway, right] {\scriptsize $e'$} (1.48,2.8);

        \node[point, magenta,label=left:{\scalebox{0.7}{$w'$}} ] at (1.41,3.55) {};
        \node[point, red, label=left:{\scalebox{0.7}{$v$}} ] at (0,3) {};
        \node[point, red, label=left:{\scalebox{0.7}{$u$}} ] at (0,4) {};
        \node[point, red, label=left:{\scalebox{0.7}{$v'$}} ] at (1.27,2.2) {};
        \node[point, red, label=left:{\scalebox{0.7}{$u'$}} ] at (1.48,2.8) {};
        \node[point, label=above:$s$] at (0,7) {};
        \node[point, label=below:$t$] at (0,0) {};
        \end{tikzpicture}
        \subcaption{detour of $\stf$ starts from the segment $pq$}
        \label{Intersecting_below_p}
    \end{minipage}
    \hfill
    \caption{Different ways $\stf$ can intersect.}
    \label{fig:overall_two_edge_fault}
\end{figure}
    
	\item  The detour of $\stf$ starts from $pq$ above $e'$. (See \Cref{Intersecting_below_p})
		
	To handle this case, we have precomputed several 2-decomposable detour children of $\rr_i$ whose detours start on the segment $pq$. We will later show that one of them, say $\yy_j$, closely approximates the detour of $\stf$. Let the detour of $\pr_{\yy_j}$ start at $w'$. As in Case~(1), in 
    \ifnum\myone=1
        \Cref{querydetour:9},
    \else
        Line \ref{querydetour:9},
    \fi  we move from $w'$ to the nearest affected vertex $u'$ on $\pr_{\rr_i}$ and recursively invoke $\quDtwo(u',\ro(\ft(su')))$. We will prove later that the path calculated in 
    \ifnum\myone=1
        \Cref{querydetour:9}
    \else
        Line \ref{querydetour:9}
    \fi provides a good approximation of $|\stf|$.

	\end{enumerate}

This completes the description of $\quDone$.

\subsection{Running Time}

We first analyze the running time of our algorithm, which closely follows the analysis in \Cref{sec:onefaultrunning}. Consider the recursion tree where each node corresponds to a call to $\quDtwo$ or $\quDone$, with the root representing $\quDtwo(t,\ro(\ft(st)))$. Each call to $\quDtwo$ or $\quDone$ performs $O(\loge)$ non-recursive work and makes at most $O(\loge)$ recursive calls. We will show that the height of this tree is at most~40, giving a total of $O(\log_{1+\ep}^{40} nW)$ nodes in the recursion tree. Hence, the overall running time of the algorithm is $O(\log_{1+\ep}^{41} nW)$.

Each node in this tree corresponds to either $\quDtwo$ or $\quDone$. In $\quDone$, we make recursive calls to $\quDtwo$, ensuring that at least half of the nodes on any path from the root are $\quDtwo$ calls. So, let us find how many times we can call $\quDtwo$ on any path in this recursion tree.

 Let $v \in V(F) \cup \{t\}$. We want to find how many times $\quDtwo(v,\cdot)$ will be called. We claim that this can be called at most 4 times. Indeed as explained during the discussion of $\quDtwo$, every time $\quDtwo(v,\cdot)$ is called, we are moving to a segment child of  $\ft(sv)$. Since the height of $\ft(sv)$ is 3, the total number of calls to $\quDtwo$ with $v$ as a first parameter can be at most  3. After these 3 calls, any subsequent call to $\quDtwo$ must use a new vertex from $V(F) \cup \{t\}$. The first {\em If} condition in $\quDtwo$ ensures that if we call $\quDtwo$ with the parameter $v$ again in the future, then our algorithm returns $\infty$ and recursion stops. Thus, on any recursion path there are atmost 4 calls to $\quDtwo$ with parameter $v$.   Since there are at most 5 vertices in $V(F) \cup \{t\}$, the total number of recursive calls of $\quDtwo$ is at most 20.  Accounting for interspersed $\quDone$ calls, the tree height is at most~40, giving a total of $O(\log_{1+\ep}^{40} nW)$ nodes. We thus obtain the following lemma.

\begin{lemma}
	The running time of $\quDtwo$ is $O(\log_{1+\ep}^{41} nW)$.
\end{lemma}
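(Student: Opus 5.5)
We bound the recursion tree of $\quDtwo(t,\ro(\ft(st)))$: its depth and its branching factor, with every node doing $O(\loge)$ non-recursive work. First the per-node cost. In $\quDtwo$, the only non-recursive work is a constant number of membership and segment-location queries on $\pr_\rr$, each taking $\tl(1)$ time by the $\ft$ representation lemma, plus the two loops over $j\le 2$ and $i\le \loge$. Similarly, $\quDone$ runs three loops of length $O(\loge)$; each iteration reads a stored child, computes a subpath length from stored prefix distances, and locates the nearest affected vertex $r\in\{u,u'\}$ on $xy$, which is $O(1)$ work. Hence each call does $O(\loge)$ local work. It also issues at most $O(\loge)$ recursive calls: at most two from Parts 1 and 2 and $O(\loge)$ from Part 3 in $\quDtwo$, and $O(\loge)$ in $\quDone$.

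Next the depth, which I bound by a potential argument over the first parameter. $\quDone$ never calls itself, and its only recursive calls are to $\quDtwo$. So on any root-to-leaf path, $\quDone$ nodes are separated by $\quDtwo$ nodes, and the depth is at most twice the number of $\quDtwo$ nodes on the path. Fix a value $v\in V(F)\cup\{t\}$, a set of size at most $5$, and consider the maximal runs of consecutive $\quDtwo$ calls on the path with first parameter $v$.
\begin{enumerate}
\item Within a run, only Line~\ref{querynormal:8} keeps the same first parameter. It always passes to a segment child of the current node of $\ft(sv)$, so the run descends strictly in a tree of height $3$ and has length at most $3$.
\item Any later occurrence of $v$ on the path, after a different first parameter has appeared, triggers the first \emph{if} test and returns $\infty$. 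So each $v$ contributes at most one extra, terminating node.
\end{enumerate}
This gives at most $4$ $\quDtwo$ nodes per value $v$, at most $20$ overall, and depth at most $40$.

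Combining, the tree has at most $O(\loge)^{40}$ nodes, each doing $O(\loge)$ work, for $O(\log_\Ep^{41} nW)$ in total. The main obstacle is the first \emph{if} test. It exempts calls made at segment children, so I must check that a segment child is only ever reached through Line~\ref{querynormal:8} with an unchanged parameter; otherwise a repeated parameter could slip past the test. I would settle this by inspecting every recursive call site. In $\quDtwo$ and $\quDone$, every call other than Line~\ref{querynormal:8} is either to a root $\ro(\ft(sr))$ with a new affected vertex $r$, or from $\quDtwo$ to $\quDone$ on a detour child, and in both cases the test applies. One point needs care: $\quDone(t,\rr_i)$ keeps the parameter $t$, but it does no descent of its own and only calls $\quDtwo$ at roots of other trees, so it does not lengthen the runs in step 1.
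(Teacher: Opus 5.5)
Your proof is correct and is essentially the paper's argument. Each call does $O(\loge)$ local work and makes $O(\loge)$ recursive calls, and each first parameter in $V(F)\cup\{t\}$ accounts for at most $4$ calls of $\quDtwo$ (a run of at most $3$ down segment children of the three-level tree $\ft(sv)$, plus one call cut off by the first \emph{if} test), which gives at most $20$ $\quDtwo$ nodes and, since every $\quDone$ call is made by a $\quDtwo$ call, depth at most $40$. Your explicit check that segment children are entered only from the Part~1 call with an unchanged first parameter, so the segment-child exemption in the first test cannot be exploited, fills in a point the paper leaves implicit.
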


\section{Correctness} 
As in the single-fault case, our algorithm tackles two main challenges: (1) when $\stf$ intersects a segment containing a faulty edge below $e$, and (2) when the detour of $\stf$ starts above $e$ on the same segment. With two faults, two such segments exist. For challenge~(1), we directly reuse \Cref{lem:belowe}. The lemma is sufficiently general and applies unchanged to the dual fault case. We handle challenge~(2) similarly but omit a separate lemma, addressing it directly when the case arises. 

In the function $\quDtwo$, the first parameter is either $t$ or an affected vertex. Let us order the vertices in $V(F) \cup \{t\}$ by their distance from $s$, avoiding the fault set $F$. Let $L$ be the subsequence of all vertices up to $t$ sorted in this order. The size of $L$ is at most $5$. We now prove the following lemma. 

\begin{lemma} \label{lem:kk} 
Let $p$ be the $k$-th vertex in the sequence $L$ where $k \le 5$. Then $\quDtwo(p, \ro(\ft(sp)))$ returns a $(1 + k^5\ep)$-approximation of $|sp \dia F|$. 
\end{lemma}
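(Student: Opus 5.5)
We will prove the statement by strong induction on $k$, mirroring the proof of \Cref{lem:k}. There is a subtlety: $\quDtwo(t,\cdot)$ may descend into segment children of $\ft(st)$ with the same first parameter. So inside the outer induction we run a second induction on the depth of the current node $\rr$ of $\ft(st)$, from the leaves upward.

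\textbf{Base case.} Let $p$ be the first vertex of $L$. If $p$ is an affected vertex, then $sp\dia F$ cannot pass through any other affected vertex after a fault, so $sp\dia F$ equals the shortest path in $G$ or in the appropriate segment child. The argument for the root then returns the exact value. For the inductive step, relabel $p$ as $t$, let $R=\stf$, let $e=(u,v)$ be the last faulty edge on $\pr_\rr$, and let $xy=seg(e,\pr_\rr)$. We split into the three parts of $\quDtwo$.

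\textbf{Part 1: $R$ avoids $xy$.} The segment child $\xx$ holds the shortest 2-decomposable path surviving $G-xy$, and $R$ is a candidate for it by \Cref{lem:decompasable}. If $\pr_\xx$ avoids $F$, it is exact. Otherwise the recursive call $\quDtwo(t,\xx)$ handles it via the inner induction on depth; the first \emph{if} condition does not fire, since $\xx$ is a segment child.

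\textbf{Part 2: $R$ meets $xy$ below $e$.} By \Cref{lem:vless} we have $|sv\dia F|\le|R|$, so $v$ precedes $t$ in $L$. We apply \Cref{lem:belowe} with $\alpha=1+(k-1)^5\ep$, $\beta=\ep$ (from \Cref{lem:chechik}, since $|\pr_\rr|\le|R|$ when $\pr_\rr$ is a shortest path or the relevant 2-decomposable surviving path), and $Q=\pr_\rr[z,t]$. We need $|Q|$ to be within a factor $\alpha$ of $|R[z,t]|$; this requires $\pr_\rr[z,t]$ to avoid $F$, which holds because $e$ is the last fault on $\pr_\rr$. It also requires $\pr_\rr[z,t]$ to be as short as $R[z,t]$. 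At the root this is clear. At deeper nodes I would argue that otherwise $R$ could be shortcut, contradicting optimality of $\pr_\rr$ among candidates containing $R$.

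\textbf{Part 3: the detour of $R$ starts on $xy$ above $e$.} Choose $i$ with $|R[w,t]|\in[(\Ep)^{i-1},(\Ep)^i]$. $R$ is a candidate for the $j$-decomposable detour child $\rr_i$, where $j=1$ if $R$ avoids the second fault's influence and $j=2$ otherwise. If $\pr_{\rr_i}$ avoids $F$, the single-fault computation gives a $(\Ep)$ factor. Otherwise we enter $\quDone$, whose three cases need separate treatment:
\begin{enumerate}
\item The detour of $R$ starts on $xy$. Use $\xx_j$, whose detour starts at or above $R$'s start $w$ and has length at most $(\Ep)|R[w,t]|$. Move along $xy$ to the nearest affected $r\in\{u,u'\}$. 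This extra distance is bounded by $|xy|\le\ep|R|$, and $r$ precedes $t$ in $L$ by the second part of \Cref{lem:vless}. Combining these as in \Cref{lem:belowe} gives the bound.
\item $R$ meets $pq$ below $e'$. This is again \Cref{lem:belowe} with $v'$, using $|pq|\le\ep|\pr_{\rr_i}|$. We need $|\pr_{\rr_i}|\le(\Ep)|R|$ up to constants, which follows because $\pr_{\rr_i}$ shares $R$'s prefix and has a shorter-or-comparable detour.
\item The detour of $R$ starts on $pq$ above $e'$. Handle it symmetrically using $\yy_j$ and $u'$.
\end{enumerate}

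\textbf{Accounting.} Each level multiplies $\alpha=1+(k-1)^5\ep$ by at most $(1+c\ep)$ for a small constant $c$, with a bounded number of levels. Since $(k-1)^5+O(k^4)\le k^5$ for the relevant constants (possibly after rescaling $\ep$), the bound $1+k^5\ep$ follows, dropping $O(\ep^2)$ terms.

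\textbf{Main obstacle.} I expect the main obstacle to be case 1 of $\quDone$ and, more generally, verifying that the recursion vertices ($r$, $u'$, $v'$) really precede $t$ in $L$ with respect to $F$. In the dual-fault setting a subpath such as $\pr_{\rr_i}[w,r]$ might itself contain a fault. I would first try to show that choosing $r$ as the nearest affected vertex to $w$ guarantees $\pr_{\rr_i}[w,r]$ is fault-free, and then invoke \Cref{lem:vless}.
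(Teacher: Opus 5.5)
Your outline follows the paper's case structure, but there is a genuine gap in the hardest subcase, case 2 of $\quDone$: $R$ meets the segment $pq$ of $\pr_{\rr_i}$ below $e'$ at a vertex $z$.

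\textbf{The gap.} The value computed there follows $\pr_{\rr_i}$ from $z$ to $t$. So applying \Cref{lem:belowe} needs its condition (2), $|\pr_{\rr_i}[z,t]|\le\alpha|R[z,t]|$. You only argue the global bound $|\pr_{\rr_i}|\le(\Ep)|R|$, which gives $\beta$ but does not control the suffix. Consider a 2-decomposable child: $\pr_{\rr_i}[z,t]$ is a shortest path avoiding its own fault set, an edge of $xy$ plus an arbitrary edge $e_2$ of $G-xy$. But $R[z,t]$ only has to avoid $\{e,e'\}$, so it may use $e_2$. Moreover, $\pr_{\rr_i}[s,z]$ contains $e'$, so it does not compete with $R[s,z]$. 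Nothing in your argument prevents $\pr_{\rr_i}[z,t]$ from being much longer than $R[z,t]$ while the global bound still holds. In that situation the estimate of \Cref{lem:belowe} breaks down. Your ``main obstacle'' paragraph points at case 1 instead. Case 1 only needs the direct estimate $|sr\dia F|\le|sz\dia F|+|xy|$, split into $r=u$ and $r=u'$; case 2 needs a structural idea.

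\textbf{How the paper closes it.} The paper splits on whether $st\dia e$ meets $xy$ above $e$, rather than on whether $R$ ``avoids the second fault's influence''.

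If $st\dia e$ does meet $xy$ above $e$, the paper uses the 1-decomposable child with $st\dia e$ as the witness, not $R$; in general $R$ is not a single-fault replacement path. This gives $|\pr_{\rr_i}|\le(\Ep)|st\dia e|\le(\Ep)|R|$ (\Cref{lem:boundry}). In case $(\beta_3)$ the paper then argues that $R$ follows $\pr_{\rr_i}$ after $z$.

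If it does not, the paper uses the 2-decomposable child and proves three facts:
\begin{enumerate}
\item $st\dia e$ contains $e'$ (\Cref{lem:second_faulty_edge}).
\item $R$ cannot meet $st\dia e$ above $e'$, and after meeting it below $e'$ it coincides with it up to $t$ (\Cref{lem:decompose}).
\item $\pr_{\rr_i}[v',t]=(st\dia e)[v',t]$ (\Cref{pr_equal_st_e}). Its proof uses exactly this branch's hypothesis, that the detour of $st\dia e$ does not start on $xy$.
\end{enumerate}
Together these give $R[z,t]=\pr_{\rr_i}[z,t]$, i.e.\ condition (2) with factor $1$. Without this dichotomy or an equivalent argument, your induction step does not close in this subcase.

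\textbf{Base case.} This is a smaller point. Your base case should simply note that if $sp$ contained a fault, the endpoint of its first fault nearer $s$ would be strictly closer to $s$ in $G-F$. That endpoint would then precede $p$ in $L$, a contradiction. Hence $sp$ avoids $F$, and the root returns $|sp|$ exactly.
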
 

We prove the lemma by induction on the vertices of $L$. For the base case ($k=1$), without loss of generality, let $u$ be the first vertex in the sequence. The shortest $su$ path avoids the faulty edge $F$, and $\quDtwo(u, \ro(\ft(su)))$ returns $|su|$ exactly, establishing the base case. 

For the induction step, let us assume the lemma holds for the $k-1$ vertices before $p$ in $L$. We now prove it for the $k$-th vertex $p$. For notational convenience, we relabel $p$ as $t$, as this notation aligns with all our lemmas and discussions above. 

Before we proceed, let us fix some notation that will be used throughout the section. Let $F = \{e,e'\}$. Let $\rr$ be the root of $\ft(st)$. If the path $\pr_\rr =st$ does not contain any faulty edge, then our algorithm will just return the path at the root $\rr$, that is $\pr_\rr$. So, let us assume that $e$ is the edge closest to $t$ on $\pr_\rr$. Let $xy$ be the segment containing $e$. Let $R = \stf$. 

We first look at the simpler case when $R$ does not pass through $xy$. Consider the call $\quDtwo(t,\ro(\ft(st)))$. Let $\xx$ be the child of the root $\rr$ where we removed segment $xy$. Note that $\xx$ must exist as there is a 2-decomposable path $R$ that avoids the segment $xy$. We then recurse with $\quDtwo(t,\xx)$. At node $\xx$, the associated graph $G_\xx$ contains at most one fault, and $R$ survives in $\xx$. The reader can check that the working of $\quDtwo$ in \Cref{querynormal} matches that in \Cref{query} for the single fault case. Similar to our analysis in single fault case, we can show that $\quDtwo(t,\ro(\ft(st))$ returns a $(1+k^5\epsilon)$-approximation of $R$\label{R_does_not_pass_xy} 
\ifnum\myone=1(see \Cref{proof of 7.1} for complete discussion)
\else
See the full version of the paper below for the details. 
\fi

In the rest of the section, we handle the harder case, when $R$ passes through $xy$. Since $R$ passes through $xy$, it may intersect $xy$ either above or below $e$ (or both). If it intersects $xy$ both above and below $e$, we will give preference to its intersection below $e$ in the analysis. Let us now look at these two subcases in detail.

\subsection{\texorpdfstring{$R$ intersects $xy$ below $e$ at the vertex $z$}{R intersects xy below e at vertex z}}
\label{case:2a} 

		This case is identical to the case in \Cref{item:R_intersect_below_e}. We briefly mention the major arguments.   In 
            \ifnum\myone=1
            \Cref{querynormal:10}
        \else
            Line \ref{querynormal:10}
        \fi we set 
        \begin{align*}
        	\pa &= |\pr_\rr[v,t]| + \quDtwo(v, \ro(\ft(sv)))\\
        	\intertext{Using \Cref{eq:one} and setting $Q=\pr_\rr[v,t]$, $S = xy$, we get:}
        	& = |Q| + |S[z,v]| +  \quDtwo(v, \ro(\ft(sv)))
        \end{align*}

        As in \Cref{item:R_intersect_below_e}, we can show that $\pa$ is a $(1+k^5\ep)$-approximation of $|\stf|$. The approximation is weaker because, unlike in \Cref{lem:k}, the induction in \Cref{lem:kk} uses a larger approximation ratio.

		\subsection{\texorpdfstring{$R$ intersects $xy$ above $e$ at the vertex $z$}{R intersects xy above e}}
        \label{case:2b}
		
        It means that $R$ does not intersect $xy$ below $e$. To obtain an appropriate approximation, we compare $R$ with $st \diamond e$.  
        The path $st \diamond e$ can behave in one of the two ways with respect to the segment $xy$: (1) $st \diamond e$ intersects $xy$ above $e$ or (2) below $e$ or avoids the whole $xy$. We now handle both the cases separately.

        \subsubsection{\texorpdfstring{$st \dia e$ intersects the segment $xy$ above the edge $e$}{st dia e intersects segment xy above edge e}}\label{case:2b-alpha1}

             Here, we follow our arguments from \Cref{item:R intersect above e}. 
             Assume that the detour length  $st \dia e$ is in $\left[(\Ep)^{i-1}, (\Ep)^i\right]$ where $i \ge 1$. There exists a 1-decomposable detour child $\rr_i$ of $\rr$ such that the detour of  $\pr_\rri$ starts above $e$ has detour length at most $(1+\epsilon)^i$. This detour child must exists as $st \dia e$ itself is one of the candidates.  This detour child is constructed in 
        \ifnum\myone=1
            \Cref{line:line11}
        \else
            Line \ref{line:line11}
        \fi. As in \Cref{item:R intersect above e}, we can show that $\pr_\rri$ is a $(\Ep)$-approximation of $st\dia e$. Thus, we have the following lemma:
             \begin{lemma}
             \label{lem:boundry}
                $|\pr_\rri| \le(\Ep)|st \dia e| \le (\Ep) |R|$
             \end{lemma}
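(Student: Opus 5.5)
The inequality splits into two parts. The second, $|st \dia e| \le |R|$, is immediate: $R = \stf$ avoids $F \supseteq \{e\}$, so $R$ is an $s$--$t$ path in $G - e$. Its length is therefore at least that of the shortest such path, $st \dia e$. All the work is in the first inequality, $|\pr_\rri| \le (\Ep)|st \dia e|$. For this I would repeat the argument of \Cref{item:R intersect above e}, with $st \dia e$ now playing the role of $R$ there.

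Let $R' = st \dia e$. By the case assumption, the detour of $R'$ starts at a vertex $z'$ on $xy$ above $e$ and joins $st$ below $y$. Fix $i \ge 1$ so that $|R'[z',t]| \in [(\Ep)^{i-1}, (\Ep)^i]$; if the detour is shorter than $1$, use $i = 0$. The family $\mathcal{Q} = \{ st \dia e_1 \mid e_1 \in xy\}$ at \Cref{line:line11} contains $R'$, and $R'$ satisfies every selection criterion. Hence the 1-decomposable detour child $\rr_i$ exists. Its detour starts at some $w$ on $xy$ with $|\pr_\rri[w,t]| \le (\Ep)^i$. Because the construction picks the detour starting closest to $x$, $w$ lies at or above $z'$ on $st$.

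Now I would chain the estimates:
\[
|\pr_\rri| = |st[s,w]| + |\pr_\rri[w,t]| \le |st[s,z']| + (\Ep)^i \le |R'[s,z']| + (\Ep)|R'[z',t]| \le (\Ep)|R'| .
\]
This uses three facts. First, $\pr_\rri[s,w] = st[s,w]$, since the detour starts at $w$ on the $st$ path. Second, $|st[s,w]| \le |st[s,z']|$, since $w$ lies at or above $z'$. Third, $st[s,z'] = R'[s,z']$, since the detour of $R'$ starts at $z'$.

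The main obstacle is matching $R'$ to the precise selection rule at \Cref{line:line11}. That rule asks that the detour start on $xy$ and join below $y$, whereas the case assumption only says that $R'$ meets $xy$ above $e$. So I must check two things: that $R'$ does not return to $xy$ below $e$ after leaving at $z'$, and that its detour genuinely joins $st$ below $y$. If $R'$ hit $xy$ below $e$, we would be in case (2), which is excluded here, so $R'$ must rejoin $st$ strictly below $y$. I would also check the boundary case $i=0$, where $|R'[z',t]| < 1$ is impossible because weights are at least $1$, so $i \ge 1$ effectively. Note that the lemma does not claim $\pr_\rri$ avoids $e'$. That issue is handled afterwards via $\quDone$, so the bound needs no fault-avoidance argument.
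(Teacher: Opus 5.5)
Your proposal is correct and follows the paper's own route. The paper also proves the first inequality by rerunning the single-fault argument for a detour that starts above $e$ on $xy$, with $st \dia e$ in place of $R$ and with the 1-decomposable detour child $\rr_i$ built from $\{st \dia e_1 \mid e_1 \in xy\}$; it gets the second inequality from $|st \dia e| \le |R|$, since $R$ also avoids $e$. Your extra check that $st \dia e$ actually meets the selection rule fills in a step the paper leaves implicit: its detour starts on $xy$ above $e$ and rejoins $st$ strictly below $y$, because the case where $st \dia e$ meets $xy$ below $e$ is excluded.
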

             
             In the above lemma, the second inequality is due to the fact that $|R| \ge |st \dia e|$. Now, if $\pr_\rri$ does not contain the second faulty edge $e'$, then we will simply return the path $\pr_\rri$ (in    
        \ifnum\myone=1
            \Cref{qn:15})
        \else
            Line \ref{qn:15})
        \fi and we will get a $(1+\epsilon)$ approximate path.      
            
            So, the only remaining case is when $\pr_\rri$  contains another faulty edge $ e'=(u',v')$. Let $pq$ be the segment in $\pr_\rri$ that contains $e'$. Again there are three cases:  $R$ can either intersect with $\pr_\rri$ above $e'$ on $pq$, or below $e'$  on $pq$ or totally avoid the segment $pq$.

        \begin{enumerate}[label=($\beta_\arabic*$)]

        \item\label{case:2b-alpha1-i} 
        $R$ does not intersect $pq$ or the detour of $R$ starts from $xy$ (See \Cref{Does_not_pass_through_pq})

        While processing $\rr_i$ in $\mn$, we constructed a 2-decomposable detour child $\xx_j$ of $\rr_i$ such that the detour of $\pr_{\xx_j}$  starts closest to $x$ on $xy$ and detour length is at most $(1+\ep)^i$. Such a detour must exist since $R$ itself is a valid candidate. Let the detour of $R$ starts at $z$ on $xy$, and let the detour of $\pr_{\xx_j}$ starts at $w$. Then $w$ lies above $z$ on the $xy$ segment, as $R$ is one possible candidate for $\pr_{\xx_j}$. Moreover, $\pr_{\xx_j}[w,t]$ is a $(\Ep)$-approximation of $R[z,t]$. As $\pr_{\rr_i}$ contains another faulty edge $e'$, and $\pr_{\xx_j}$ avoids the segment $pq$ containing $e'$, the path $\pr_{\xx_j}$ avoids the entire fault set $F$. Let $r$ be the vertex in $\{u,u'\}$ that is closest to $z$ (See \Cref{Does_not_pass_through_pq} for an illustration of these two cases). Let us look at the simpler case when $z$ is close to $u'$ (See \Cref{fig:r=u'}). In
        \ifnum\myone=1
            \Cref{querydetourone:5}
        \else
            Line \ref{querydetourone:5}
        \fi of $\quDone$, we set $\pa$ to:
        \begin{figure}
    \centering
    \begin{minipage}[t]{0.45\textwidth}
        \centering
\begin{tikzpicture}[scale=0.8, every node/.style={font=\small}]
        \tikzstyle{point} = [circle, fill=black, inner sep=1.5pt]
        \tikzstyle{segment} = [thick]

        \draw[thick] (0,7) -- (0,0);

        \draw[segment] (0,6) -- (0,2);
        
        \draw[red, ultra thick] (0,4) -- (0,3) node[midway, left] {\scriptsize $e$};
        \draw[orange, thick] (0,1.5) arc[start angle=-90, end angle=90, radius=2]node[pos=0.8, below=-12, sloped, black, rotate=360] {\scalebox{0.7}{$\pr_{\rr_i}$}};

        \draw[pink, ultra thick] (0,1) arc[start angle=90, end angle=-90, radius=-2]node[pos=0.5, below=-12, sloped, black, rotate=360] {\scalebox{0.7}{$\pr_{\xx_j}$}};
    
        \draw[magenta, line width=0.8mm, opacity=0.7]
        (0.7,1.6) .. controls (1.2,3) and (0.5,4.2) .. (0,4.5)node[pos=0.5, below=-12, sloped, black, rotate=360] {\scalebox{0.7}{$R$}};

        \draw[pink, line width=0.8mm, opacity=0.7] (0,1.5) -- (0,0);

        \draw[pink, line width=0.8mm, opacity=0.7] (0,4.5) -- (0,7);

        \node[point, pink, label=right:{\scalebox{0.7}{$w$}}] at (0,5) {};
        \node[point, blue, label=left:{\scalebox{0.7}{$z$}}] at (0,4.5) {};
        \node[point, red, label=left:{\scalebox{0.7}{$u$}}] at (0,4) {};
        \node[point, red, label=left:{\scalebox{0.7}{$v$}}] at (0,3) {};
        \node[point, red, label=right:{\scalebox{0.7}{$u'$}}] at (1.95,4) {};
        \node[point, red, label=right:{\scalebox{0.7}{$v'$}}] at (1.7,2.5) {};
        \node[point, label=right:{\scalebox{0.7}{$p$}}, teal] at (1.7,4.6) {};
        \node[point, teal, label=left:{\scalebox{0.7}{$x$}}] at (0,6) {};
        \node[point, teal, label=left:{\scalebox{0.7}{$y$}}] at (0,2) {};
        \node[point, label=above:$s$] at (0,7) {};
        \node[point, label=below:$t$] at (0,0) {};
        \node[point, label=below:{\scalebox{0.7}{$q$}},teal] at (1.2,1.9) {};
        \draw[red, ultra thick] (1.7,2.5) to[bend right=20] node[midway, left] {\scriptsize $e'$} (1.95,4);
        \end{tikzpicture}
\subcaption{when $r=u$}
\label{fig:r=u}
\end{minipage}
\hfill
\begin{minipage}[t]{0.45\textwidth}
    \centering
    \begin{tikzpicture}[scale=0.8, every node/.style={font=\small}]
    \tikzstyle{point} = [circle, fill=black, inner sep=1.5pt]
    \tikzstyle{segment} = [thick]

    \draw[thick] (0,7) -- (0,0);

    
    \draw[red, ultra thick] (0,2) -- (0,3) node[midway, left] {\scriptsize $e$};
    \draw[red, ultra thick] (0,4) -- (0,5) node[midway, left] {\scriptsize $e'$};
    \draw[orange, thick] (0,0.7) arc[start angle=-90, end angle=90, radius=1.5]node[pos=0.5, below=12, sloped, black, rotate=180] {\scalebox{0.9}{$\pr_{\rr_i}$}};

    \draw[magenta, ultra thick] (0,0.5) arc[start angle=90, end angle=-90, radius=-2.4]node[pos=0.5, below=-12, sloped, black, rotate=360] {\scalebox{0.7}{$R$}};

    \draw[pink, ultra thick] (0,6) arc[start angle=90, end angle=-90, radius=2.8]node[pos=0.5, below=-12, sloped, black, rotate=360] {\scalebox{0.9}{$\pr_{\xx_j}$}};
    
    \node[point, label=above:$s$] at (0,7) {};
    \node[point, label=below:$t$] at (0,0) {};
    \node[point, magenta, label=right:{\scalebox{0.7}{$z$}}] at (0,5.3) {};
    \node[point, red, label=left:{\scalebox{0.7}{$u$}}] at (0,3) {};
    \node[point, red, label=left:{\scalebox{0.7}{$v$}}] at (0,2) {};
    \node[point, pink, label=left:{\scalebox{0.7}{$w$}}] at (0,6) {};
    \node[point, red, label=left:{\scalebox{0.7}{$u'$}}] at (0,5) {};
    \node[point, red, label=left:{\scalebox{0.7}{$v'$}}] at (0,4) {};
    \node[point, label=right:{\scalebox{0.7}{$p$}}, teal] at (0,5.6) {};
    \node[point, teal, label=right:{\scalebox{0.7}{$x$}}] at (0,6.7) {};
    \node[point, teal, label=left:{\scalebox{0.7}{$y$}}] at (0,1) {};
    \node[point, label=below:{\scalebox{0.7}{$q$}},teal] at (0.9,1) {};

    \end{tikzpicture}
\subcaption{when $r=u'$}
\label{fig:r=u'}
\end{minipage}
\caption{$\stf$ does not intersect $pq$ and the detour starts from $xy$}
\label{Does_not_pass_through_pq}
\end{figure}
    \begin{align*}
            \pa &= |\pr_{\xx_j}[w,t]| +|\pr_\rr[w,r]| + \quDtwo(r, \ro(\ft(sr))) \\
            \intertext{As mentioned above, $\pr_{\xx_j}[w,t]$ is a $(1+\ep)$-approximation of $R[z,t]$. Thus,}
            &\le (\Ep)|R[z,t]| +|\pr_\rr[w,r]| + \quDtwo(r, \ro(\ft(sr))) \\
            \intertext{Now using \Cref{lem:vless}, we can say that $|sr \dia F| \leq |st\dia F|$}
            \intertext{Thus, $r \in L$ and by induction, $\quDtwo(r, \ro(\ft(sr)))$ is a $(1+(k-1)^5\ep)$-approximation of $|sr \dia F|$.} 	
            	&\le (\Ep)|R[z,t]| +|\pr_\rr[w,r]| + (1+(k-1)^5\ep) |sr \dia F| \\
			\intertext{If $r = u'$, then $e'$ lies on the $st$ path above $e$ (See \Cref{fig:r=u'}). Thus, $|su' \dia F| = |su'| \le |sz| \le  |sz \dia F|$. Note that we are crucially using the fact that there is no faulty edge above $e'$, or in other words, there are only two faults. If $r=u$ (See \Cref{fig:r=u}), then using triangle inequality, $|su \dia F| \le |sz \dia F| + |zu \dia F|$. But $zu$ lies in the segment $xy$ and does not contain any faults. Thus, $|su \dia F| \le |sz \dia F| + |zu| \le |sz \dia F| + |xy|$ where the last inequality is due to the fact that $zu$ is a subpath of the segment $xy$. Combining both the cases, $r=u$ and $r=u'$, we get $|sr \dia F| \le  |sz \dia F| + |xy|$. Adding it in the above equation gives:}
			&\leq (\Ep)|R[z,t]| +|\pr_\rr[w,r]| + (1+(k-1)^5\ep) (|sz \dia F| + |xy|) \\
			\intertext{Since $\pr_\rr[w,r]$  lies on the segment $xy$, their length is $\le |xy|$. Thus, we get: }
			&\leq (\Ep)|R[z,t]| +(2+(k-1)^5\ep)|xy| + (1+(k-1)^5\ep) |sz \dia F| \\
			\intertext{$xy$ lies on the path $\pr_\rr$. Using \Cref{lem:chechik}, $|xy| \le \ep |\pr_\rr| \le \ep |R|$.}
			&\le (\Ep)|R[z,t]| +\ep(2+(k-1)^5\ep)|R| + (1+(k-1)^5\ep) |sz \dia F| \\
			\intertext{Since $R$ passes through $z$, $R[s,z] = |sz \dia F|$.}
			&= (\Ep)|R[z,t]| +\ep(2+(k-1)^5\ep)|R| + (1+(k-1)^5\ep) |R[s,z]| \\
			&\le  (1+(k-1)^5\ep)|R| +\ep(2+(k-1)^5\ep)|R| \\
			&\le (1+k^5\ep) |R|
		\end{align*}
            
                \item\label{case:2b-alpha1-ii} $R$ intersects $\pr_{\rr_i}$ above $e'$ in $pq$
                            (See \Cref{Intersecting_above_e}).

            	\begin{figure}[t]
                	\centering
                    \begin{tikzpicture}[scale=0.8, every node/.style={font=\small}]
        \tikzstyle{point} = [circle, fill=black, inner sep=1.5pt]
        \tikzstyle{segment} = [thick]

        \draw[thick] (0,7) -- (0,0);

        \draw[segment] (0,6) -- (0,2);
        
        \draw[red, ultra thick] (0,4) -- (0,3) node[midway, left] {\scriptsize $e$};
        \draw[orange, thick] (0,1.5) arc[start angle=-90, end angle=90, radius=1.5];
        \draw[magenta, line width=0.8mm, opacity=0.7] (0,0.8) arc[start angle=-120, end angle=57, radius=1.4]node[pos=0.5, below=12, sloped, black, rotate=180] {\scalebox{0.7}{$R$}};
        \draw[pink, line width=0.8mm, opacity=0.7] (0,0.8) -- (0,0);
        \draw[pink, line width=0.8mm, opacity=0.7] (0,4.5) -- (0,7);

        \node[point, magenta, label=right:{\scalebox{0.7}{$w$}}] at (1.5,3.2) {};
        \node[point, label=left:{\scalebox{0.7}{$v$}},red] at (0,3) {};
        \node[point, label=left:{\scalebox{0.7}{$u$}},red] at (0,4) {};
        \node[point, label=below:{\scalebox{0.7}{$q$}},teal] at (0.7,1.7) {};
        \node[point, label={\scalebox{0.7}{$p$}}, teal] at (1.1,4) {};
        \node[point, teal, label=left:{\scalebox{0.7}{$x$}}] at (0,6) {};
        \node[point, teal, label=left:{\scalebox{0.7}{$y$}}] at (0,2) {};
        \node[point, red, label=left:{\scalebox{0.7}{$v'$}}] at (1.27,2.2) {};
        \node[point, label=above:$s$] at (0,7) {};
        \node[point, label=below:$t$] at (0,0) {};
        \node[point, red, label=left:{\scalebox{0.7}{$u'$}}] at (1.48,2.8) {};
        \draw[red, ultra thick](1.27,2.2) to[bend right=12]
        node[midway, right] {\scriptsize $e'$}
        (1.48,2.8);

        \end{tikzpicture}
                    \captionof{figure}{$\stf$ intersects $pq$ above $e'$.}
                \label{Intersecting_above_e}
                \end{figure}
            
                This case mirrors the above case. 
                While processing $\rr_i$, we constructed a 2-decomposable detour child in  $\yy_j$ of $\rr_i$ such that the detour of $\pr_{\yy_j}$ 
                starts as high as possible on $pq$ and whose detour length is at most $(1+\ep)$ times 
                that of the detour of $R$. The remaining steps are similar to \ref{case:2b-alpha1-i} above -- the only difference is that $w$ lies on the segment $pq$ instead of segment $xy$. 
                Hence, this case also yields a $(1+k^5\ep)$-approximation of $|R|$.
            
            \item\label{case:2b-alpha1-iii} $R$ intersect $\pr_\rri$ somewhere below $e'$ say at a vertex $z$ in the segment $pq$
            (See \Cref{fig:Intersecting_below_e})
    
    \begin{figure}[h!]
    	\centering
        \begin{tikzpicture}[scale=0.8, every node/.style={font=\small}]
    \tikzstyle{point} = [circle, fill=black, inner sep=1.5pt]
    \tikzstyle{segment} = [thick]

    \draw[thick] (0,7) -- (0,0);
    
    \draw[segment] (0,6) -- (0,2);
    \draw[red, ultra thick] (0,4) -- (0,3) node[midway, left] {\scriptsize $e$};
    \draw[orange, thick] (0,1.5) arc[start angle=-90, end angle=90, radius=1.75];

    \draw[magenta, line width=0.8mm, opacity=0.7]
        (1.4,2.2) .. controls (1.2,3) and (0.5,4.2) .. (0,4.5)
        node[pos=0.5, below=-12, sloped, black] {\scalebox{0.7}{$R$}};

    \draw[pink, line width=0.8mm, opacity=0.7] (0,1.5) -- (0,0);
    \draw[pink, line width=0.8mm, opacity=0.7] (0,4.5) -- (0,7);

    \node[point, magenta, label=below:{\scalebox{0.7}{$z$}}] at (1.4,2.2) {};
    \node[point, red, label=left:{\scalebox{0.7}{$u$}}] at (0,4) {};
    \node[point, magenta, label=left:{\scalebox{0.7}{$w$}}] at (0,4.5) {};
    \node[point, red, label=left:{\scalebox{0.7}{$v$}}] at (0,3) {};
    \node[point, label=right:{\scalebox{0.7}{$p$}}, teal] at (1.1,4.6) {};
    \node[point, label=below:{\scalebox{0.7}{$q$}}, teal] at (1.1,1.9) {};
    \node[point, teal, label=left:{\scalebox{0.7}{$x$}}] at (0,6) {};
    \node[point, teal, label=left:{\scalebox{0.7}{$y$}}] at (0,2) {};
    \node[point, red, label=right:{\scalebox{0.7}{$u'$}}] at (1.6,4) {};
    \node[point, red, label=right:{\scalebox{0.7}{$v'$}}] at (1.6,2.5) {};
    \node[point, label=above:$s$] at (0,7) {};
    \node[point, label=below:$t$] at (0,0) {};

    \draw[red, ultra thick] (1.6,2.5) to[bend right=20] node[midway, right] {\scriptsize $e'$} (1.6,4);
\end{tikzpicture}
        \captionof{figure}{$\stf$ intersects $pq$ below $e'$.}
    \label{fig:Intersecting_below_e}
    \end{figure}

            Let $e' = (u',v')$, where $v'$ lies closer to $t$ on $\pr_\rri$.  
            
            In     
            \ifnum\myone=1
                \Cref{querydetour:6}
            \else
                Line \ref{querydetour:6}
            \fi of $\quDone$, we set 
            
    		\begin{align*}
    			\pa &= |\pr_\rri[v',t]| + \quDtwo(v', \ro(\ft(sv'))) \\
    				&= |\pr_\rri[v',z]| + |\pr_\rri[z,t]| + \quDtwo(v', \ro(\ft(sv')))\\
                \intertext{Let $S = pq$ and $Q = \pr_\rri[z,t]$. Since $\pr_{\rr_i}[v',z]$ completely lies on segment $pq$, $S[v',z] = \pr_\rri[v',z]$. Thus, we get:}\\
                &= S[v',z] + |Q| + \quDtwo(v', \ro(\ft(sv'))) \\
    		\end{align*}

    Thus, we can apply \Cref{lem:belowe}. We just need to find the approximation that the three components of $\pa$ provide. Let's go over them one by one.
   
		\begin{itemize}
			\item $Q = \pr_{\rr_i}[z,t]$
					
			After $R$ intersects segment $pq$ at $z$, we claim that $R$ must continue along the path $\pr_{\rr_i}$. To justify this, note the following: First, $R[z,t]$ cannot be longer than $\pr_{\rr_i}[z,t]$. If it were, we could replace $R[z,t]$ with $\pr_{\rr_i}[z,t]$ and obtain a path shorter than $R$, contradicting the minimality of $R$.  Second, $R[z,t]$ cannot be shorter than $\pr_{\rr_i}[z,t]$. If it were, then while processing $\rr_i$, we could replace $\pr_{\rr_i}[z,t]$ with $R[z,t]$, resulting in a smaller detour at $\rr_i$, contradicting the minimality of $\pr_\rri$. Therefore, by setting $Q$ to follow $\pr_{\rr_i}[z,t]$, we achieve a $1$-approximation of $|R[z,t]|$.

			\item $S = pq$
			
			By \Cref{lem:chechik}, the segment length satisfies $|pq| \le \ep|\pr_\rri|$. But by \Cref{lem:boundry}, $|\pr_\rri|$ is a $(1+\ep)$ of $|R|$. Hence,  $S$ gives a $\overbrace{\ep(1+\ep)}^{\beta}$-approximation of $|R|$.

			\item $\quDtwo(v', \ro(\ft(sv')))$
			
			Similar to \Cref{lem:vless}, we can show that $|sv' \dia F| < |st \dia F|$. Thus, $v' \in L$ and in the worst case, it lies just before $t$ in $L$, or it is the $(k-1)$-th element in the list. Using induction hypothesis  $\quDtwo(v', \ro(\ft(sv')))$ returns a $\overbrace{(1+(k-1)^5\ep)}^{\alpha}$-approximation of $|sv' \dia F|$.
		\end{itemize}
        
Substituting  $\alpha = 1+(k-1)^5\ep$, and $\beta = \ep(1+\ep)$ into \Cref{lem:belowe}, we get $\pa \le (1 + k^5\ep)|R|$ as required.

        \end{enumerate}

        \subsubsection{\texorpdfstring{$st \dia e$ intersects the segment $xy$ below the edge $e$ or $st \dia e$ avoids the whole segment $xy$}{st dia e intersects xy below e or avoids xy}}
        \label{case:2}

    As in Case \ref{case:2b-alpha1} , now we cannot claim that there is a 1-decomposable detour child $\rr_i$ of $\rr$ such that the detour of  $\pr_\rri$ starts above $e$. This is because $st \dia e$ itself does not intersect the segment $xy$ above $e$, so such a 1-decomposable detour may not even exist. We now use a different strategy for which we have created different detour children in 
    \ifnum\myone=1
        \Cref{line:2-decompose}
    \else
        Line \ref{line:2-decompose}
    \fi of \Cref{alg:constructftgeneral}.

    In 
    \ifnum\myone=1
        \Cref{line:2-decompose}
    \else
        Line \ref{line:2-decompose}
    \fi of \Cref{alg:constructftgeneral}, we created a 2-decomposable detour child $\rri$ such that $\pr_{\rr_i}$  avoids one edge of the segment $xy$, 
    while the second faulty edge may lie anywhere in the graph $G-xy$, and its detour starts on the segment $xy$ and its detour length is at most $(\Ep)$ the detour of $R$. Such a path $\pr_\rri$ must exist as $R$ is itself a candidate.  
    
    If $\pr_{\rr_i}$ does not contain the second faulty edge $e'$, we simply return $\pr_{\rr_i}$. As in \Cref{lem:boundry}, it follows that $\pr_{\rr_i}$ is an $(\Ep)$-approximation of $R$. Hence, in the remainder, we assume that $\pr_{\rr_i}$ contains the faulty edge $e'$. 

Let us again set the context and notation that will be used throughout the rest of the section. We have found a path $\pr_{\rr_i}$ whose detour starts above $e$ on $xy$, and joins below $xy$. It contains the second faulty edge $e'$, We assume that the edge $e'$ belongs to the segment $pq$ on $\pr_\rri$. The path $R$ can interact with $\pr_{\rr_i}$ in the following ways:

\textbf{Case 1: $R$ avoids $pq$ entirely:} This implies that the detour of $R$ starts from the segment $xy$ and does not intersect the segment $pq$. While processing $\rr_i$, we constructed a 2-decomposable detour child $\xx_j$ of $\rr_i$ such that the detour of $\pr_{\xx_j}$ detour starts as high as possible on $xy$ and whose detour length is at most $(1+\ep)$ times that of the detour of $R$. Such a detour must exist since $R$ itself is a valid candidate. Let the detour of $R$ start at $z$ on $xy$, and let the detour of $\pr_{\xx_j}$ start at $w$. This is a similar case of \ref{case:2b-alpha1-i}. We know that  $\pr_{\xx_j}[w,t]$ is a $(\Ep)$-approximation of $R[z,t]$. The remaining analysis is same as of \ref{case:2b-alpha1-i}, and yielding a $(1+k^5\ep)$-approximation of $|R|$.

\medskip
\noindent
\textbf{Case 2: $R$ intersects $\pr_{\rr_i}$ above $e'$ on the segment $pq$:} If $R$ intersects $\pr_{\rr_i}$ above $e'$ on $pq$, then the detour of $R$ starts from the segment $pq$ and our the argument is analogous to \ref{case:2b-alpha1-ii}. While processing $\rr_i$ in $\mn$, we constructed a detour child $\yy_j$ of $\rr_i$ whose detour starts closest to $p$ on $pq$ and whose detour length is at most $(1+\ep)$ times that of the detour of $R$ and it joins the $st$ path below the segment $pq$. Similar to the case of \ref{case:2b-alpha1-ii}, we will get a $(1+k^5\ep)$-approximation of $|R|$.

\medskip
\noindent
\textbf{Case 3: $R$ intersects $\pr_{\rr_i}$ 
below $e'$ on the segment $pq$:} 

This is the hardest case for us. We now make some observations about the path $R$ and $st \dia e$.

    \begin{lemma}
    \label{lem:second_faulty_edge}
        $st \dia e$ must contain the second faulty edge $e'$. 
    \end{lemma}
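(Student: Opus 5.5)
The plan is to argue by contradiction: assume $st \dia e$ avoids $e'$, and derive that $R = \stf$ would have to equal $st \dia e$. That would contradict the standing assumptions of this subsection.

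The key observation is that $R$ avoids $\{e,e'\}$, so in particular $R$ avoids $e$, and therefore $|R| \ge |st \dia e|$. If $st \dia e$ did not contain $e'$, then $st \dia e$ would itself avoid all of $F$. Combined with the reverse inequality $|st \dia e| \ge |\stf| = |R|$ and the uniqueness of shortest paths (the weight-perturbation assumption from the preliminaries), this would give $R = st \dia e$.

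Next I check that this identity contradicts the case assumptions.
\begin{enumerate}
\item We are in \Cref{case:2b}, so $R$ intersects $xy$ above $e$ and does not intersect $xy$ below $e$.
\item We are also in the subcase of \Cref{case:2}, where $st \dia e$ either intersects $xy$ below $e$ or avoids $xy$ entirely.
\end{enumerate}
If $R = st \dia e$, then $st \dia e$ would intersect $xy$ above $e$ and not below it. This excludes both alternatives defining \Cref{case:2}. Hence $st \dia e$ cannot avoid $F$. Since $st \dia e$ already avoids $e$, the only remaining possibility is that it contains $e'$.

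The main obstacle I expect is making the case bookkeeping airtight. In the preference convention, ``intersects below'' takes priority when a path meets $xy$ both above and below $e$. So I must make sure that ``$R$ intersects $xy$ above $e$'' in \Cref{case:2b} really means $R$ has no vertex of $xy$ strictly below $e$. Only then is the contradiction with $st \dia e$ meeting $xy$ below $e$ clean. The case where $st \dia e$ avoids $xy$ entirely is immediate, since $R$ contains the vertex $z \in xy$.

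I note that Case~3 itself (that $R$ meets $\pr_{\rr_i}$ below $e'$) is not needed for this lemma. The contradiction comes purely from comparing $R$ and $st \dia e$ against $xy$.
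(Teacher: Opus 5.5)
Your proposal is correct and follows the paper's proof. You assume $st \dia e$ avoids $e'$, conclude $R = st \dia e$ by minimality and uniqueness of shortest paths, and contradict the case assumption on how $st \dia e$ meets $xy$. Your explicit check rules out both alternatives of this subsection: $R = st \dia e$ would meet $xy$ above $e$ but not below. This is slightly more careful than the paper's one-line wording, which asserts that $st \dia e$ does not meet $xy$ above $e$; under the below-first preference convention, that assertion need not hold.
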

    \begin{proof}
        Assume for contradiction that $st \dia e$ does not contain $e'$. Then $R$ should have followed $st \dia e$ i.e $R = st \dia e$. But by our assumption in this case, $st \dia e$ does not intersect segment $xy$ above $e$ but $R$ does. Thus, we arrive at a contradiction.
    \end{proof}

    Let us see now how $R$ may intersects with $st \dia e$. We show that $R$ cannot intersects $st \dia e$ above the second faulty edge $e'$ and if it intersects with $st \dia e$ below $e'$, then it must follow $st \dia e$ till $t$.
    
    \begin{lemma}
    \label{lem:decompose}
    $R$ can not intersect $st \dia e$  above the second faulty edge $e'$.  If $R$ intersects $st \dia e$ below the second faulty edge $e'$, say at a point $z$, then $R$ coincides with $st \dia e$ from $z$ to $t$, i.e., 
    \[
    R[z,t] = (st \dia e)[z,t].
    \]
    \end{lemma}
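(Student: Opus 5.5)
The plan is a pair of exchange arguments. Both rest on the uniqueness of shortest paths (the perturbation assumption) and on the fact that $st \dia e$ is the shortest $s$--$t$ path in $G-e$. Write $P = st \dia e$. By \Cref{lem:second_faulty_edge}, $P$ contains $e'$. Write $e' = (a,b)$ with $a$ closer to $s$ on $P$, so that ``above $e'$'' on $P$ means $P[s,a]$ and ``below $e'$'' means $P[b,t]$.

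\textbf{Second claim.} Suppose $R$ meets $P$ at a vertex $z$ below $e'$. The subpath $P[z,t]$ lies in $\suf(P,e')$, and $e$ is the last fault on $st$, so $P[z,t]$ avoids both $e$ and $e'$. Hence $P[z,t]$ is a valid $z$--$t$ path in $G-F$, and $|R[z,t]| \le |P[z,t]|$ because $R$ is a shortest path avoiding $F$. Conversely, $R[z,t]$ avoids $e$, and $P[z,t]$ is a subpath of the shortest path in $G-e$, so it is the unique shortest $z$--$t$ path in $G-e$. This gives $|P[z,t]| \le |R[z,t]|$. With both inequalities, uniqueness of shortest paths forces $R[z,t] = P[z,t]$.

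\textbf{First claim.} Suppose $R$ meets $P$ at a vertex $z$ above $e'$. I plan to splice: the path $P[s,z]$ avoids $e$ (it is part of $P$) and avoids $e'$ (it lies above $e'$), so $|R[s,z]| \le |P[s,z]|$. Also $R[s,z]$ avoids $e$, so $|P[s,z]| \le |R[s,z]|$ because $P[s,z]$ is shortest in $G-e$. By uniqueness, $R[s,z] = P[s,z]$. So $R$ follows $P$ from $s$ down to $z$.

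To get a contradiction I use the case hypothesis: $R$ intersects $xy$ above $e$, while $P$ does not intersect $xy$ above $e$. Take a vertex $w$ of $R$ on $xy$ above $e$; this is where the detour of $R$ starts, by \Cref{def:starttwosegment}. I will split on the order of $z$ and $w$ along $R$.

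\emph{Case $w$ precedes $z$ on $R$.} Then $w \in R[s,z] = P[s,z]$, so $P$ meets $xy$ above $e$, which is a contradiction.

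\emph{Case $z$ precedes $w$ on $R$.} Then $R[s,w]$ avoids $F$, and $st[s,w]$ is the shortest $s$--$w$ path; it avoids $e$ because $w$ is above $e$, and it avoids $e'$ because $e'$ is not on $st$ above $e$ ($e$ being the last fault and the relevant configuration). So $R[s,w] = st[s,w]$. Now $z$ lies on $st[s,w]$ and on $P$ above $e'$. I then need to show that $P$ actually follows $st$ down to $w$, i.e.\ $P[s,w]=st[s,w]$; this would contradict the assumption that $P$ does not meet $xy$ above $e$. The argument would be: $P$ passes through $z$ on $st$, and the prefix $st[s,w]$ is shortest and $e$-free. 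What remains is to rule out that $P$ leaves $st$ before $z$ and rejoins at $z$. That cannot happen, because $P[s,z]$ is itself shortest in $G-e$ and $st[s,z]$ avoids $e$, so uniqueness gives $P[s,z]=st[s,z]$.

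The main obstacle is this last step: it gives $P$ equal to $st$ only up to $z$, not up to $w$. To extend it to $w$ I would use the detour-start definition: the last vertex of $P$ on $\pre(st,e)$ lies strictly above $xy$. Combined with $R[z,w] = st[z,w]$, this gives a splice $P[s,z]\odot st[z,w] \odot R[w,t]$. I would compare it against $P$ in $G-e$, and against $R$ in $G-F$, to reach a contradiction via uniqueness. I expect that handling where $e'$ sits relative to $P$'s detour from $st$ is the delicate part.
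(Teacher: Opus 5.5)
Your second claim is correct, and so is the exchange step giving $R[s,z]=(st\dia e)[s,z]$ for every common vertex $z$ above $e'$ on $st\dia e$. This uniqueness argument is exactly the paper's proof. Your sub-case ``$w$ precedes $z$ on $R$'' is also correct. It is precisely the configuration the paper tacitly assumes when it calls $(st\dia e)[s,z]$ and $R[s,z]$ two \emph{different} shortest paths: they differ only because $R[s,z]$ visits $xy$ above $e$ while $st\dia e$ does not.

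The gap is the other sub-case, and no splice can close it, because that configuration is not contradictory. Already $z=s$ is a vertex of $R$ on $st\dia e$ above $e'$ that precedes $w$ on $R$. (Note $w\neq s$, since $st\dia e$ contains $s$ but avoids $xy$ above $e$.) More generally, so is every vertex of the initial stretch that $R$ and $st\dia e$ share. For such $z$ the exchange argument gives $R[s,z]=(st\dia e)[s,z]$ and nothing more.

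Your intermediate target $(st\dia e)[s,w]=st[s,w]$ is simply false under the case hypothesis. As you observe yourself, $st\dia e$ leaves $st$ strictly above $x$ and never meets $xy$ above $e$, so it does not pass through $w$. Separately, the assertion that $e'$ is not on $st$ above $e$ does not follow from $e$ being the last fault on $st$; that only excludes faults below $e$.

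So the first sentence of the lemma is false if read literally. What is true, and what both your argument and the paper's establish, is that the vertices of $R$ lying on $st\dia e$ above $e'$ form a common prefix of the two paths. Hence $R$ cannot meet $st\dia e$ above $e'$ after leaving it. In particular it cannot do so at any vertex reached after visiting $xy$ above $e$, so not on its detour. Read the lemma that way, drop your second sub-case, and the proof is complete. Only the second claim is used later, in Case~3.
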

    \begin{figure}[H]
    \centering
    \begin{minipage}[t]{0.40\textwidth}
        \centering
        \begin{tikzpicture}[scale=0.8, every node/.style={font=\small}]
        \tikzstyle{point} = [circle, fill=black, inner sep=1.5pt]
        \tikzstyle{segment} = [thick]

        \draw[thick] (0,7) -- (0,0);

        \node[point, label=above:$s$] at (0,7) {};
        \node[point, label=below:$t$] at (0,0) {};

        \draw[segment] (0,4.5) -- (0,2.5);
        \draw[pink, thick] (0,2) -- (0,6);
        
        \draw[red, ultra thick] (0,4.5) -- (0,3.5) node[midway, left] {\scriptsize $e$};
        
        \draw[orange, thick] (0,3) arc[start angle=-90, end angle=90, radius=1.75] node[pos=0.8, below=-2, sloped, black, rotate=360] {\scalebox{0.7}{$st \dia e$}};

        \draw[blue, thick] (0,5.5) .. controls (0.8,4.8) .. (1.7,5.2)node[pos=0.5, below=-12, sloped, black, rotate=360] {\scalebox{0.7}{$R$}};

        \draw[blue, thick] (0,1.5) .. controls (0.5,2) .. (0.95,3.3);

        \draw[blue, thick](0.95,3.3).. controls (1.8,3.9) and (1.8,4.7).. (1.7,5.2);
       
        \node[point, label=right:{\scalebox{0.7}{$z$}},blue] at (1.7,5.2) {};
        \node[point, label=left:{\scalebox{0.7}{$v$}},red] at (0,3.5) {};
        \node[point, label=left:{\scalebox{0.7}{$u$}},red] at (0,4.5) {};
        \node[point, label=below:{\scalebox{0.7}{$q$}},teal] at (0.6,3.1) {};
        \node[point, label=right:{\scalebox{0.7}{$p$}}, teal] at (1.45,5.7) {};
        \node[point, teal, label=left:{\scalebox{0.7}{$x$}}] at (0,6) {};
        \node[point, teal, label=left:{\scalebox{0.7}{$y$}}] at (0,2) {};
        \draw[red, ultra thick](1.2,3.5) to[bend right=16] node[midway, right] {\scriptsize $e'$}(1.68,4.25);
        \node[point, red, label=left:{\scalebox{0.7}{$v'$}}] at (1.2,3.5) {};
        \node[point, red, label=left:{\scalebox{0.7}{$u'$}}] at (1.68,4.25) {};
\end{tikzpicture}
\subcaption{$\stf$ when intersect $st \dia e$ intersect above the edge $e'$}
\label{Intersecting_above_e'}
\end{minipage}
\begin{minipage}[t]{0.40\textwidth}
        \centering
        \begin{tikzpicture}[scale=0.8, every node/.style={font=\small}]
    \tikzstyle{point} = [circle, fill=black, inner sep=1.5pt]
        \tikzstyle{segment} = [thick]

        \draw[thick] (0,7) -- (0,0);

        \node[point, label=above:$s$] at (0,7) {};
        \node[point, label=below:$t$] at (0,0) {};

        \draw[segment] (0,4.5) -- (0,2.5);
        \draw[pink, thick] (0,2) -- (0,6);
        
        \draw[red, ultra thick] (0,4.5) -- (0,3.5) node[midway, left] {\scriptsize $e$};
        
        \draw[orange, thick] (0,3) arc[start angle=-90, end angle=90, radius=1.75] node[pos=0.8, below=-2, sloped, black, rotate=360] {\scalebox{0.7}{$st \dia e$}};
        

        \draw[blue, thick] (0,5.5) .. controls (0.4,4.2) .. (1.4,3.7)node[pos=0.8, below=-2, sloped, black, rotate=360] {\scalebox{0.7}{$R$}};
        
        \node[point, label=left:{\scalebox{0.7}{$v$}},red] at (0,3.5) {};
        \node[point, label=left:{\scalebox{0.7}{$u$}},red] at (0,4.5) {};
        \node[point, label=below:{\scalebox{0.7}{$q$}},teal] at (0.6,3.1) {};
        \node[point, label=right:{\scalebox{0.7}{$p$}}, teal] at (1.45,5.7) {};
        \node[point, label=right:{\scalebox{0.7}{$z$}}, blue] at (1.4,3.7) {};
        \node[point, teal, label=left:{\scalebox{0.7}{$x$}}] at (0,6) {};
        \node[point, teal, label=left:{\scalebox{0.7}{$y$}}] at (0,2) {};
        \draw[red, ultra thick](1.68,4.25) to[bend right=16] node[midway, right] {\scriptsize $e'$}(1.68,5.25);
        \node[point, red, label=left:{\scalebox{0.7}{$u'$}}] at (1.68,5.25) {};
        \node[point, red, label=left:{\scalebox{0.7}{$v'$}}] at (1.68,4.25) {};
\end{tikzpicture}
\subcaption{\scalebox{0.9}{$\stf$ pass through the segment $pq$}}
        \label{Intersecting_below_e'}
    \end{minipage}
    \hfill
    \caption{$\stf$ when intersect $st \dia e$ on the segment $pq$ containing the edge $e'$}
    \label{fig:two_edge_faultp}
\end{figure}
    \begin{proof}
        Assume for contradiction that $R$ intersects $st \dia e$ at $z$ above the second faulty edge $e'$ (see
        \ifnum\myone=1
            \Cref{Intersecting_above_e'})
        \else
            Figure \ref{Intersecting_above_e'})
        \fi. Then there are two different shortest paths from $s$ to $z$ avoiding $F = \{e,e'\}$. One is  $(st \dia e)[s,z]$ and the other is $R[s,z]$, a contradiction since all shortest paths, even avoiding two edges, are unique.
        
    Now, if $R$ intersects $st \dia e$ below the second faulty edge $e'$, say at a vertex $z$ (see 
    \ifnum\myone=1
        \Cref{Intersecting_below_e'})
    \else
        Figure \ref{Intersecting_below_e'})
    \fi . Then the path $R$ must follow $(st \dia e)[z,t]$, as $(st \dia e)[z,t]$ avoids $F$. Hence, $R[z,t] = (st \dia e)[z,t]$, which completes the proof.
    \end{proof}

By our assumption, $\pr_{\rr_i}$ contains $e'$. But using \Cref{lem:second_faulty_edge}, even $st \dia e$ contains $e'$. This implies that $\pr_\rri$ and $st \dia e$ must be the same after $e'$. We now prove this intuition.
\begin{lemma}\label{pr_equal_st_e}
Let $e' = (u',v')$. Let $v'$ be the vertex closest to $t$ on $st \dia e$. If $\pr_{\rr_i}$ contains the second faulty edge $e'$, then $(st \dia e)[v',t] = \pr_{\rr_i}[v',t]$.
\end{lemma}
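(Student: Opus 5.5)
We argue by contradiction using uniqueness of shortest paths together with the minimality properties used to define $\pr_{\rr_i}$ and $st\dia e$. Both $st\dia e$ and $\pr_{\rr_i}$ contain $e'$ (the first by \Cref{lem:second_faulty_edge}, the second by assumption), and neither contains $e$. The path $\pr_{\rr_i}$ lies in $\mathcal{L}$, so it is $st\dia\{e_1,e_2\}$ for some $e_1\in xy$ and some $e_2\in G-xy$. Since $\pr_{\rr_i}$ uses $e'$, we get $e_2\neq e'$.

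The key step is to show that $\pr_{\rr_i}[v',t]$ and $(st\dia e)[v',t]$ are both the shortest $v'$–$t$ path in a suitable graph.

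First, the suffix of $st\dia e$ after $v'$. This is a subpath of a shortest path in $G-e$, hence it is the unique shortest $v't$ path in $G-e$.

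Second, the suffix of $\pr_{\rr_i}$ after $v'$. It is the unique shortest $v't$ path in $G-\{e_1,e_2\}$. The key claim is that this suffix avoids $e$ and, after the swap below, $e_1$ and $e_2$ as well, so that it competes in the same graph as $(st\dia e)[v',t]$.

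To compare the two suffixes, I exchange tails in both directions.
\begin{itemize}
\item Suppose $|(st\dia e)[v',t]|<|\pr_{\rr_i}[v',t]|$. Replace the tail of $\pr_{\rr_i}$ by $(st\dia e)[v',t]$. The new path still avoids the faults that $\pr_{\rr_i}$ avoids, provided $(st\dia e)[v',t]$ avoids $e_1,e_2$. It also keeps the same detour start on $xy$, gives a shorter detour, and still joins below $y$. This contradicts the choice of $\pr_{\rr_i}$ at its level $i$; equivalently, it contradicts $\pr_{\rr_i}$ being a shortest path in $G-\{e_1,e_2\}$.
\item Conversely, suppose $|\pr_{\rr_i}[v',t]|<|(st\dia e)[v',t]|$. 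If $\pr_{\rr_i}[v',t]$ avoids $e$, swapping it into $st\dia e$ yields a shorter $s$–$t$ path in $G-e$, a contradiction.
\end{itemize}
With equal lengths, uniqueness of shortest paths (the perturbation assumption) forces the two suffixes to coincide.

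I expect the main obstacle to be the fault-avoidance side conditions in the swap. One must check that $(st\dia e)[v',t]$ avoids $e_1$ and $e_2$, and that $\pr_{\rr_i}[v',t]$ avoids $e$.

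For the latter, I would use that $e$ is the last faulty edge on $st$, that $\pr_{\rr_i}$ avoids $e$ entirely by construction, and that $e\in xy$.

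For the former, I would try to choose $\pr_{\rr_i}$'s fault pair as $\{e,e''\}$, or argue via \Cref{lem:decompose}. Both paths pass $e'$, which is the only fault on them, so below $e'$ each is a shortest path in $G-F$ from $v'$ to $t$. Indeed, $R$ intersects $\pr_{\rr_i}$ below $e'$ in Case 3, and also follows $st\dia e$ below $e'$. So the cleanest route is to show that both suffixes equal $R[v',t]$, which is the unique shortest $v't$ path in $G-F$.
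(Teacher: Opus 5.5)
Your two-way exchange is the right skeleton. It is essentially the easy half of the paper's proof. But it only works once you know that $(st\dia e)[v',t]$ avoids the edges $e_1,e_2$ that $\pr_{\rr_i}=st\dia\{e_1,e_2\}$ is built to avoid. That side condition is the real content of the lemma, and you leave it open. Neither suggested fix closes it.

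\textbf{Re-choosing the fault pair.} You cannot do this: $\pr_{\rr_i}$ is a fixed path selected by the construction. The fact that it avoids $e$ only makes it shortest in $G-\{e,e_1,e_2\}$, not in some $G-\{e,e''\}$. Even if it were, the same question about $e''$ lying on the suffix would remain.

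\textbf{Both suffixes shortest in $G-F$.} This holds for $(st\dia e)[v',t]$: it is shortest in $G-e$, which contains $G-F$, and it avoids $e'$. It is not justified for $\pr_{\rr_i}[v',t]$. That path is shortest only in $G-\{e_1,e_2\}$, which does not contain $G-F$, so a shorter $v'$--$t$ path in $G-F$ through $e_1$ or $e_2$ is not excluded. Since $v't\dia F=(st\dia e)[v',t]$, excluding it is exactly the unresolved side condition again.

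\textbf{Appealing to $R$.} Using \Cref{lem:decompose} here is circular. That lemma only says $R$ follows $st\dia e$ \emph{if} it meets $st\dia e$ below $e'$. The paper obtains that $R$'s Case~3 intersection point lies on $st\dia e$ precisely from the present lemma. Besides, $R$ need not pass through $v'$ at all.

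The missing idea is to split on whether $e_2$ lies on $(st\dia e)[v',t]$ and, in the bad case, to argue about the \emph{prefixes} rather than the suffixes. Suppose $e_2\in(st\dia e)[v',t]$. Then $e_2$ is not on $(st\dia e)[s,v']$. This prefix also avoids $e_1$; the paper argues this from the fact that $st\dia e$ leaves $st$ above $xy$. Hence it is the unique shortest $s$--$v'$ path in $G-\{e,e_1,e_2\}$. On the other hand, $\pr_{\rr_i}[s,v']$ is shortest in $G-\{e_1,e_2\}$ and avoids $e$, so it is also the shortest $s$--$v'$ path in $G-\{e,e_1,e_2\}$. Uniqueness forces the two prefixes to coincide. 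This contradicts that $st\dia e$ departs from $st$ above $xy$ while $\pr_{\rr_i}$ departs on $xy$. With $e_2$ thus excluded from the suffix, the paper concludes by exactly your exchange argument.
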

\begin{proof}
By the property of the shortest paths, the subpath $(st \dia e)[v',t]$ is also the shortest path from $v'$ to $t$ avoiding $e$. Let us assume that $\pr_\rri$ avoids $\{e_1,e_2\}$. By construction, $e_1$ lies on the segment $xy$ and $e_2$ lies in $G-xy$.  We know that the detour path $\pr_{\rr_i}$ avoids edge $e_2$ of the graph $G-xy$. If the edge $e_2$ does not lie on the subpath $(st \dia e)[v',t]$,  then  $\pr_{\rr_i}$ should have just follow $st \dia e$, that is $(st \dia e)[v',t] = \pr_{\rr_i}[v',t]$.

\begin{figure}[t]
\centering
\begin{tikzpicture}[scale=0.6, every node/.style={font=\small}]
        \tikzstyle{point} = [circle, fill=black, inner sep=1.5pt]
        \tikzstyle{segment} = [thick]

        \draw[thick] (0,7) -- (0,0);

        \node[point, label=above:$s$] at (0,7) {};
        \node[point, label=below:$t$] at (0,0) {};

        \draw[segment] (0,4.5) -- (0,2.5);
        \draw[pink, thick] (0,2) -- (0,6);
        
        \draw[red, ultra thick] (0,4.5) -- (0,3.5) node[midway, left] {\scriptsize $e$};

        
        \draw[orange, thick] (0,1) arc[start angle=-90, end angle=90, radius=2.75] node[pos=0.9, below=-2, sloped, black, rotate=360] {\scalebox{0.7}{$st \dia e$}};
        
        \draw[blue, thick] (0,5.5) .. controls (0.8,5) .. (2.65,4.5)node[below, midway] {\scriptsize $\pr_{\rr_i}$};
        
        \draw[blue, thick](2.2,2.1).. controls (3,3) and (2.85,4.5).. (2.65,4.5);





        \draw[blue, thick] (0,3.75) ++(-25:2.75) arc[start angle=-25, end angle=-90, radius=2.75];


        \draw[magenta, line width=0.4mm] (0.5,1.05) .. controls (1.1,1.2) and (1.6,1.4) .. (1.8,1.7) node[midway, below] {\scalebox{0.7}{$e_2$}};

        
        
        \draw[red, ultra thick](2.75,3.5) to[bend right=10] node[midway, left] {\scriptsize $e'$}(2.65,4.5);

        
        \node[point, black] at (0.5,1.05) {};
        \node[point, black] at (1.8,1.7) {};
        \node[point, label=right:{\scalebox{0.7}{$p$}},teal] at (2,5.65) {};
        \node[point, label=right:{\scalebox{0.7}{$q$}},teal] at (2.2,2.1) {};
        \node[point, label=left:{\scalebox{0.7}{$v$}},red] at (0,3.5) {};
        \node[point, label=left:{\scalebox{0.7}{$u$}},red] at (0,4.5) {};
        \node[point, teal, label=left:{\scalebox{0.7}{$x$}}] at (0,6.2) {};
        \node[point, teal, label=left:{\scalebox{0.7}{$y$}}] at (0,2) {};
        \node[point, red, label=right:{\scalebox{0.7}{$u'$}}] at (2.65,4.5) {};
        \node[point, red, label=right:{\scalebox{0.7}{$v'$}}] at (2.75,3.5) {};
        \node[point, blue, label=left:{\scalebox{0.7}{$w$}}] at (0,5.5) {};
        
\end{tikzpicture}
\caption{$e_2$ lies on the $(st\dia e)[v',t]$}
\label{e_2 lies}
\end{figure}

Assume for contradiction that the edge $e_2$ lies on the subpath $(st \dia e)[v',t]$ (See \Cref{e_2 lies}). But then the subpath $(st \dia e)[s,v']$ avoids $\{e_1,e_2\}$ because the detour of $st \dia e$ starts above the segment $xy$. Thus, $st \dia e$ is the shortest path avoiding $\{e_1,e_2,e\}$. We now show that $\pr_\rri$ also the shortest path avoiding $\{e_1,e_2,e\}$. By construction, the detour of $\pr_\rri$ starts above $e$ from the segment $xy$ and joins below segment $xy$. So, $\pr_\rri$ avoids $e$. Thus, $\pr_\rri[s,v']$ avoids $\{e_1,e_2,e\}$. 

By uniqueness of shortest paths, $(st \dia e)[s,v']$ must be same as $\pr_\rri[s,v']$. But we know that the detour of $st \dia e$ start above the segment $xy$ and the detour of $\pr_\rri$ starts on the segment $xy$. Thus, $(st \dia e)[s,v'] \neq \pr_\rri[s,v']$, a contradiction.

\end{proof}

 Now, In 
     \ifnum\myone=1
            \Cref{querydetour:6}
        \else
            Line \ref{querydetour:6}
        \fi of $\quDone$, we set 
            
    		\begin{align*}
    			\pa &= |\pr_\rri[v',t]| + \quDtwo(v', \ro(\ft(sv'))) \\
    				&= |\pr_\rri[z,t]| + |\pr_\rri[v',z]| + \quDtwo(v', \ro(\ft(sv')))\\
                \intertext{Let $S = pq$ and by \Cref{lem:decompose}, we know that if $R$ intersects with $st \dia e$ below the second faulty edge $e'$ at $z$ then,
                $R[z,t]=(st\dia e)[z,t]$.
                Also by \Cref{pr_equal_st_e} we know that if $\pr_\rri$ contains the second faulty edge $e'$ , then
                $(st\dia e)[v',t]=\pr_\rri[v',t]$. Since $z$ lies on the $v'$ to $t$ path on $\pr_\rri$ (and  $st \dia e$), $\pr_\rri[z,t] = (st\dia e)[z,t]$. Thus, $R[z,t]=\pr_\rri[z,t]$. Adding it to the above equation, we get:}
                &= |R[z,t]|+|\pr_\rri[v',z]|  + \quDtwo(v', \ro(\ft(sv')))\\
                \intertext{Setting $Q = R[z,t]$ and $S = pq$. Note that $\pr_\rri[v',z]$ lies entirely in the segment $S$. Thus, $\pr_\rri[v',z] = S[v',z]$. Adding this to the above equation gives:}
                & = Q + S[v'z]+\quDtwo(v', \ro(\ft(sv')))
    		\end{align*}
    			
        The above inequality is of the form \Cref{eq:one}. Thus, we can apply \Cref{lem:belowe} as done in \ref{case:2b-alpha1-iii}. Thus, we get a $(1+k^5\ep)$ approximation of $R$. This completes the proof of our main \Cref{thm:ft-sssp}.
\section{Conclusion and Open Problems}

In this paper, we designed a $\sdo(2)$ with  $o(n^2)$ space and $\tl(1)$ query time. However, our algorithm does not extend to handle three or more edge faults. We explain the reason next.

Let $F = \{e_1, e_2, e_3\}$ be the set of faulty edges on the $st$ shortest path. Suppose $e_2$ and $e_3$ lie on the segment $xy$ of this path, and that $st \dia F$ intersects $st$ between $e_2$ and $e_3$ (see the pink path in the figure \Cref{$3$ edge fault case}). Furthermore, for every vertex $z$ on the subpath $v_1u_2$ of $st$, $|sz \dia F| \gg |\stf|$. Thus, $z$ cannot lie in any path that approximates $\stf$ by a factor of $(\Ep)$.

\begin{figure}[t]
    \vspace{-10pt}
    \centering
    \begin{tikzpicture}[scale=0.8, every node/.style={font=\small}]
        \tikzstyle{point} = [circle, fill=black, inner sep=1.5pt]
        \tikzstyle{fault} = [red, line width=0.8mm]

        \draw[thick] (0,10) -- (0,0);

        \draw [magenta, line width=0.8mm, opacity=0.7]
            (0,3) arc [start angle=-90, end angle=90, radius=-1.2]
            node[pos=0.5, below=14, sloped, black, rotate=180] {$\stf$};
		\draw [pink, line width=0.8mm, opacity=0.7] (0,0)--(0,0.6);
		\draw [pink, line width=0.8mm, opacity=0.7] (0,4)--(0,3);
		\draw [pink, line width=0.8mm, opacity=0.7] (0,9)--(0,10);
        \draw [magenta, line width=0.8mm, opacity=0.7]
            (0,9) arc [start angle=-90, end angle=90, radius=-2.5]
            node[pos=0.5, below=14, sloped, black, rotate=180] {$\stf$};
		\node[point, label=left:$z$] at (0,6) {};
        \draw [blue, line width=0.8mm, dotted, opacity=0.7]
            (0,6) arc [start angle=90, end angle=-90, radius=2.8]
            node [pos=0.6, right=10pt, rotate=90, black] {Detour};


        \draw[magenta, thick] (0,3) -- (0,4.5);
        \draw[fault] (0,7.5) -- (0,8.5);
        \draw[fault] (0,4.5) -- (0,5.5);
        \draw[fault] (0,1.5) -- (0,2.5);
        \draw (0,1.5) -- (0,2.5) node[midway, left] {$e_3$};
        \draw (0,4.5) -- (0,5.5) node[midway, left] {$e_2$};
        \draw (0,7.5) -- (0,8.5) node[midway, left] {$e_1$};
        \node[point, label=right:$t$] at (0,0) {};
        \node[point, teal, label=right:$y$] at (0,1) {};
        \node[point, red, label=right:$v_3$] at (0,1.5) {};
        \node[point, red, label=right:$u_3$] at (0,2.5) {};
        \node[point, red, label=right:$v_2$] at (0,4.5) {};
        \node[point, red, label=right:$u_2$] at (0,5.5) {};
        \node[point, red, label=right:$v_1$] at (0,7.5) {};
        \node[point, red, label=right:$u_1$] at (0,8.5) {};
        \node[point, teal, label=right:$x$] at (0,6.5) {};
        \node[point, label=above:$s$] at (0,10) {};

    \end{tikzpicture}
    \caption{ $3$ edge fault case}
    \label{$3$ edge fault case}
    \vspace{-10pt}
\end{figure}

To extend our dual fault strategy to handle three faults, we attempt to find a detour that starts as high as possible on $xy$ and rejoins the $st$ path below $y$. Suppose this detour begins above $e_2$ at a vertex $z$ (see the blue path in the figure). If $e_1$ were absent, the path from $z$ to $s$ along $st$ would suffice to reach $s$. However, the presence of $e_1$ introduces significant complications. By our assumption, |$sz \dia F| \gg |\stf|$, so this detour offers no advantage. In fact, in this configuration, it is crucial to find a detour that starts between $e_2$ and $e_3$ on the $st$ path. However, identifying such a detour is far from straightforward.

We conclude with the following open questions. Can we design a data structure of size $o(n^2)$ that provides a good approximation for the shortest path under three faults? Can this approach be generalised to handle $f$ faults? Moreover, our analysis does not account for preprocessing time -- developing an algorithm with an efficient preprocessing bound remains an interesting direction.

\bibliographystyle{alpha}
\bibliography{jabref.bib}
\appendix
\section{\texorpdfstring{Proof of \Cref{lem:chechik}}{Proof of Lemma chechik}}
\label{Appendix_A}

Assume $u$ is closer to $s$ than $v$, i.e., $|P[s,u]| < |P[s,v]|$.  
Let $i$ be the maximal index such that $|P[s,u]| \ge (1+\epsilon)^i$. There are two cases.  
If $(1+\epsilon)^{i+1} \le |P[s,v]|$, then by definition, both $u$ and $v$ are net points of $P$, so $seg(e,P) = \{e\}$. Otherwise, $(1+\epsilon)^{i+1} > |P[s,v]|$. Then
\vspace{0.5cm}
\[
|seg(e,P)| \le (1+\epsilon)^{i+1} - (1+\epsilon)^i 
= \epsilon(1+\epsilon)^i \le \epsilon |P[s,u]|.
\]
\vspace{0.1cm}

By symmetry, either $seg(e,P) = \{e\}$ or $|seg(e,P)| \le \epsilon |P[v,t]|$.  
If $|seg(e,P)| \le \epsilon |P[v,t]|$, then $|seg(e,P)| \le \epsilon |P[u,t]|$ since $|P[v,t]| < |P[u,t]|$.  Therefore, $seg(e,P) = \{e\} $
or $|seg(e,P)| \le \epsilon \min\{|P[s,u]|, |P[u,t]|\}$,which implies $|seg(e,P)| \le \epsilon |P|$.

\section{\texorpdfstring{Proof of \Cref{lem:insegment}}{Proof of Lemma lca}}
\label{Finding_the_segment_of_an_edge}

Consider an edge $e = (u, v)$ on a shortest path $Q$. Our objective is to identify the segment $xy$ such that $e \in xy$. The endpoints $x$ and $y$ may either belong to the same decomposable subpath $Q_i$ or to two different subpaths $Q_i$ and $Q_j$, where $i \neq j$.

Depending on the length of the decomposable subpath $Q_i$, we distinguish the following cases:

\paragraph{Case 1:}\label{case1_of_appendix_b}
Both endpoints $x$ and $y$ lie on the same decomposed subpath $Q_i$ whose length is at most $\sqrt{n}$. 
Since $|Q_i| \le \sqrt{n}$, we store a dictionary at node $\rr$ using $\tilde{O}(\sqrt{n})$ space, allowing constant-time membership queries for $e \in Q_i$. 
Hence, we can efficiently return the segment $xy$ that contains the edge $e = (u, v)$.

\paragraph{Case 2:}
The segment endpoints $x$ and $y$ lie on the same decomposable path $Q_i$, but the path length satisfies $|Q_i| \ge \sqrt{n}$. 

In this case, we represent $Q_i$ implicitly as a concatenation of two shortest paths, $xz$ and $zy$, where $z \in L$. 
We then examine the tree $T_z$, rooted at $z$, to determine whether the edge $e = (u, v)$ lies within either subpath. Let us assume that $e$ is in $T_z$.

\begin{figure}[h!]
    \centering

    \begin{subfigure}[t]{0.48\textwidth}
        \centering
        \begin{tikzpicture}[>=latex,scale=0.8]
            \tikzstyle{point} = [circle, fill=black, inner sep=1.5pt]

            \node[point, label=below:$x$] at (0,0) {};
            \node[point, label=below:$z$] at (4,0) {};
            \node[point, label=below:$u$] at (6,0) {};
            \node[point, label=below:$v$] at (7,0) {};
            \node[point, label=below:$y$] at (8,0) {};

            \draw[thick, black] (0,0)--(8,0);
            \draw[line width=1.4pt, blue] (6,0)--(7,0)
                node[midway, above, yshift=2pt, blue, scale=0.8] {$e=(u,v)$};
        \end{tikzpicture}
    \end{subfigure}
    \hfill
    \begin{subfigure}[t]{0.48\textwidth}
        \centering
        \begin{tikzpicture}[>=latex,scale=0.8]
            \tikzstyle{point} = [circle, fill=black, inner sep=1.5pt]

            \node[point, label=below:$x$] at (0,0) {};
            \node[point, label=below:$z$] at (4,0) {};
            \node[point, label=below:$u$] at (2,0) {};
            \node[point, label=below:$v$] at (3,0) {};
            \node[point, label=below:$y$] at (8,0) {};

            \draw[thick, black] (0,0)--(8,0);
            \draw[line width=1.4pt, blue] (2,0)--(3,0)
                node[midway, above, yshift=2pt, blue, scale=0.8] {$e=(u,v)$};
        \end{tikzpicture}
    \end{subfigure}

    \caption{Illustration of subpath decomposition of $P_i$ as $xz$ and $zy$, where $z \in L$. Depending on the position of the edge $e=(u,v)$, it may lie in either subpath.}
    \label{fig:segment-detection_cases}
\end{figure}

\begin{itemize}
    \item If on $T_z$, $\mathrm{LCA}(v, y) = v$ and $\mathrm{LCA}(u, y) = u$, then $e = (u, v)$ lies in the subpath $zy$. See
    \ifnum\myone=1
        \Cref{fig:segment-detection_cases}
    \else
        Figure \ref{fig:segment-detection_cases}
    \fi 
    \item If on $T_z$, $\mathrm{LCA}(v, x) = v$ and $\mathrm{LCA}(u, x) = u$, then $e = (u, v)$ lies in the subpath $xz$. See 
    \ifnum\myone=1
        \Cref{fig:segment-detection_cases}
    \else
        Figure \ref{fig:segment-detection_cases}
    \fi 
\end{itemize}

\paragraph{Case 3:}
Finally, consider the case where $x$ lies on a decomposable path $Q_i$ and $y$ lies on another decomposable path $Q_j$, where $i \neq j$. 
We represent the path $\pr_\rr$ as 
\[
\pr_\rr = Q_1 \odot e_1 \odot Q_2 \odot e_2 \odot Q_3,
\]
where each $Q_i$ ($i \in \{1,2,3\}$) is a shortest path in $G$.

As illustrated in \Cref{fig:xy_present_in_different_decomposed_path}, suppose the path $P$ is decomposed as
\[
P = Q_i \odot e_1(a,b) \odot Q_k \odot e_2(c,d) \odot Q_j.
\]
Here, $Q_i$ contains one of the segment endpoints $x$, and the path $Q_i$ ends at vertex $a$. The edge $e=(u,v)$ lies in the middle decomposed path $Q_k$, which starts at $b$ and ends at $c$. The other segment endpoint $y$ lies in the decomposed path $Q_j$, which begins at $d$ and ends at $t$. 

If all the decomposed paths have lengths at most $\sqrt{n}$, we can determine whether $x$ lies in $Q_i$, $e=(u,v)$ lies in $Q_k$, and $y$ lies in $Q_j$ using the method described in Case 1.(See
    \ifnum\myone=1
            \Cref{case1_of_appendix_b}
        \else
            \ref{case1_of_appendix_b}
        \fi )
However, if one or more of these paths have length greater than $\sqrt{n}$, we can identify a vertex $z \in L$ for which we have stored the corresponding tree $T_z$. Using $T_z$, we can perform the necessary $\operatorname{LCA}$ queries to determine the locations of $x$, $y$, and $e=(u,v)$. 

\begin{figure}[h!]
    \centering
    \begin{tikzpicture}[>=latex,scale=0.8]
        \tikzstyle{point} = [circle, fill=black, inner sep=1.5pt]

        \node[point, label=below:$s$] at (0,0) {};
        \node[point, label=below:$x$] at (2,0) {};
        \node[point, label=below:$a$] at (4,0) {};
        \node[point, label=below:$b$] at (5,0) {};
        \node[point, label=below:$u$] at (7,0) {};
        \node[point, label=below:$v$] at (8,0) {};
        \node[point, label=below:$c$] at (10,0) {};
        \node[point, label=below:$d$] at (11,0) {};
        \node[point, label=below:$y$] at (13,0) {};
        \node[point, label=below:$t$] at (15,0) {};

        \draw[thick, black] (0,0)--(4,0);
        \draw[thick, black] (5,0)--(7,0);
        \draw[thick, black] (8,0)--(10,0);
        \draw[thick, black] (11,0)--(15,0);

        \draw[line width=1.4pt, orange, dashed] (4,0)--(5,0)
            node[midway, above, yshift=2pt, orange, scale=0.8] {$e_1=(a,b)$};
        \draw[line width=1.4pt, orange, dashed] (10,0)--(11,0)
            node[midway, above, yshift=2pt, orange, scale=0.8] {$e_2=(c,d)$};

        \draw [decorate,decoration={brace,amplitude=5pt,mirror},pink,thick]
            (0,-0.5) -- (4,-0.5) node[midway,yshift=-0.4cm,pink] {$Q_i$};
        \draw [decorate,decoration={brace,amplitude=5pt,mirror},pink,thick]
            (5,-0.5) -- (10,-0.5) node[midway,yshift=-0.4cm,pink] {$Q_k$};
        \draw [decorate,decoration={brace,amplitude=5pt,mirror},pink,thick]
            (11,-0.5) -- (15,-0.5) node[midway,yshift=-0.4cm,pink] {$Q_j$}; 

        \draw[line width=1.4pt, blue] (7,0)--(8,0)
            node[midway, above, yshift=2pt, blue, scale=0.8] {$e=(u,v)$};    
    \end{tikzpicture}
    \caption{The segment endpoints $x$ and $y$ lie in different decomposed paths $Q_i$ and $Q_j$, respectively.}
    \label{fig:xy_present_in_different_decomposed_path}
\end{figure}

Since there are $\log_{1+\epsilon} n$ segments, there can be at most $2\log_{1+\epsilon} n$ segment endpoints. 
Hence, the total time required for this operation is $O(\log_{1+\epsilon} n)$.

\section{\texorpdfstring{Proof of the \Cref{R_does_not_pass_xy}}{Proof of the result R does not pass xy}}
\label{proof of 7.1}

We analyze the case when $R$ avoids the entire segment $xy$. In 
    \ifnum\myone=1
            \Cref{querynormal:8}
        \else
            Line \ref{querynormal:8}
        \fi of $\quDtwo(t,\ro(\ft(st)))$, we move to the segment child $\xx$, which corresponds to paths that avoid $xy$. Since $R$ avoids $xy$, it lies entirely in $G-xy$, and hence survives in the associated graph $G_\xx$. Note that $G_\xx$ contains at most one faulty edge, namely the second faulty edge $e'$.

At node $\xx$, we  consider the path $\pr_\xx = P(xy)$ path for which all netpoints and segments are precomputed (see 
        \ifnum\myone=1
            \Cref{line:line7}
        \else
            Line \ref{line:line7}
        \fi in \Cref{alg:constructftgeneral}). Let us assume that the edge $e'$ lies in some segment $x'y'$ of $G_\xx$. The path $R$ can interact with $x'y'$ in three ways analogous to the single-fault setting.:
\begin{itemize}
    \item $R$ avoids $x'y'$ entirely

    In this case,  a segment child of $\xx$ contains $R$, yielding an exact ($1$-approximate) path.
    \item $R$ intersects $x'y'$ above $e'$

    The argument follows \Cref{item:R intersect above e}, giving a detour of length $(\Ep)$ and hence a $(1+k^5\ep)$-approximation of $|R|$.
    
    \item $R$ intersects $x'y'$ below $e'$

    The argument is similar to \Cref{item:R_intersect_below_e}, yielding a $(1+k^5\ep)$-approximation of $|R|$. The approximation is weaker because, unlike in \Cref{lem:k}, the induction in \Cref{lem:kk} uses a larger approximation ratio.
\end{itemize}

\end{document}